\documentclass[letterpaper,11pt]{article}
\usepackage{newtxtext} %
\usepackage[margin=1in]{geometry}
\usepackage{setspace}
\usepackage{graphicx}
\usepackage{epsfig}

\usepackage{microtype} %
\usepackage{array} %

\usepackage{amssymb}
\usepackage{amsmath}
\usepackage{amsthm}
\usepackage{mathtools} %
\usepackage{stmaryrd} %
\usepackage{stackrel} %
\usepackage{latexsym} %

\usepackage{tikz}
\usetikzlibrary{arrows}
\usetikzlibrary{arrows.meta}
\usetikzlibrary{decorations.pathreplacing}
\usetikzlibrary{cd}
\usepackage[all]{xy} %

\usepackage{algorithm}
\usepackage[noend]{algpseudocode}
\algnewcommand{\LineComment}[1]{\Statex\hspace{\algorithmicindent}\(\triangleright\) #1}
\algnewcommand\algorithmicforeach{\textbf{for each}}
\algdef{S}[FOR]{ForEach}[1]{\algorithmicforeach\ #1\ \algorithmicdo}
\usepackage{etoolbox} %

\usepackage{caption}
\usepackage{subcaption}

\usepackage{makecell} %
\usepackage{booktabs} %
\usepackage{multirow} %
\usepackage{tablefootnote} %

\usepackage[hidelinks]{hyperref} %
\usepackage[symbol]{footmisc}
\usepackage{makeidx} %
\usepackage{url} %
\usepackage{color}
\usepackage{xcolor}
\usepackage{xspace} %
\usepackage[normalem]{ulem} %
\usepackage{soul} %
\usepackage{ifpdf} %
\usepackage{extarrows}

\makeatletter
\expandafter\patchcmd\csname\string\algorithmic\endcsname{\itemsep\z@}{\itemsep=0.25ex}{}{}
\makeatother

\newcounter{usesmallsep}
\setcounter{usesmallsep}{0}

\ifnum\the\value{usesmallsep}=1

    \newlength{\myitemsep}
    \setlength{\myitemsep}{0.2em}
    
    \newlength{\mytopsep}
    \setlength{\mytopsep}{0em}
    
    \usepackage{enumitem}
    \setlist[itemize]{leftmargin=\parindent,parsep=\parskip,
      listparindent=\parindent,itemsep=\myitemsep,topsep=\myitemsep}
    \setlist[enumerate]{leftmargin=\parindent,parsep=\parskip,
      listparindent=\parindent,itemsep=\myitemsep,,topsep=\myitemsep}
    \setlist[description]{font=\bfseries,leftmargin=\parindent,parsep=\parskip,
      listparindent=\parindent,itemsep=\myitemsep,topsep=\myitemsep}
    
    \newlength{\mypartitlesep}
    \setlength{\mypartitlesep}{0.7em}
    
    \usepackage[explicit]{titlesec}
    \titlespacing{\paragraph}{0pt}{\mypartitlesep}{\mypartitlesep}
    
    \newlength{\mythmsep}
    \setlength{\mythmsep}{0.5em}
    
    \newtheoremstyle{mythmstyle}
      {\mythmsep} %
      {\mythmsep} %
      {\itshape} %
      {} %
      {\bfseries} %
      {.} %
      {.5em} %
      {} %
    
    \newtheoremstyle{mydefstyle}
      {\mythmsep} %
      {\mythmsep} %
      {} %
      {} %
      {\bfseries} %
      {.} %
      {.5em} %
      {} %
    
    \theoremstyle{mythmstyle}
        \newtheorem{theorem}{Theorem}
        \newtheorem{proposition}[theorem]{Proposition}
        \newtheorem{lemma}[theorem]{Lemma}
        \newtheorem{corollary}[theorem]{Corollary}
        \newtheorem{fact}[theorem]{Fact}
        \newtheorem*{fact*}{Fact}
    \theoremstyle{mydefstyle}
        \newtheorem{definition}{Definition}
        \newtheorem{problem}{Problem}
        \newtheorem{assumption}{Assumption}
        \newtheorem{remark}{Remark}
        \newtheorem{algr}[algorithm]{Algorithm}

    \newenvironment{proof}
        {\vspace{-0.9em}\begin{proof}}
        {\end{proof}\vspace{-0.4em}}

\else

    \theoremstyle{plain}
        \newtheorem{theorem}{Theorem}
        \newtheorem{proposition}[theorem]{Proposition}

        \newtheorem*{algr*}{Algorithm}
    \theoremstyle{definition}
        \newtheorem{definition}[theorem]{Definition}

    \usepackage{enumitem}
    \setlist[itemize]{leftmargin=\parindent}
    \setlist[enumerate]{leftmargin=\parindent}
    \setlist[description]{font=\bfseries,leftmargin=\parindent}

\fi

\newcommand{\Hm}{\mathsf{H}}

\newcommand{\Chn}{\mathsf{C}}
\newcommand{\Cyc}{\mathsf{Z}}
\newcommand{\Bnd}{\mathsf{B}}

\newcommand{\Real}{\mathbb{R}}

\newcommand{\rank}{\mathsf{rank}\,}
\renewcommand{\ker}{\mathsf{ker}}

\newcommand{\Pers}{\mathsf{Pers}}

\newcommand{\pinds}{\mathsf{P}}
\newcommand{\ninds}{\mathsf{N}}

\renewcommand{\bar}[1]{\overline{#1}}

\newcommand{\inv}{^{-1}}

\newcommand{\lbarrowspace}{\;}

\let\leftrightarrowsp\lrarrowsp

\newcommand{\incto}{\hookrightarrow}
\newcommand{\inctosp}[1]{\xhookrightarrow{\lbarrowspace#1\lbarrowspace}}
\newcommand{\bakincto}{\hookleftarrow}
\newcommand{\bakinctosp}[1]{\xhookleftarrow{\lbarrowspace#1\lbarrowspace}}

\newcommand{\given}{\,|\,}
\newcommand{\Set}[1]{\{#1\}}

\newcommand{\Max}{\mathsf{max}}

\let\emptyset\varnothing

\let\union\cup
\let\bigunion\bigcup

\newcommand{\Bcal}{\mathcal{B}}

\newcommand{\Ecal}{\mathcal{E}}
\newcommand{\Dcal}{\mathcal{D}}
\newcommand{\Fcal}{\mathcal{F}}

\newcommand{\Ical}{\mathcal{I}}

\newcommand{\Rcal}{\mathcal{R}}

\newcommand{\Ucal}{\mathcal{U}}

\newcommand{\Zbb}{\mathbb{Z}}

\newcommand{\bG}{\beta}
\newcommand{\dG}{\delta}
\newcommand{\DG}{\Delta}

\newcommand{\GG}{\Gamma}

\newcommand{\LG}{\Lambda}

\newcommand{\oG}{\omega}

\newcommand{\sG}{\sigma}

\newcommand{\tG}{\tau}

\newcommand{\Dim}{p}

\newcommand{\birth}{b}
\newcommand{\death}{d}
\newcommand{\filtcnt}{m}
\newcommand{\simpcnt}{n}
\newcommand{\bles}{\prec_{\mathsf{b}}}
\newcommand{\dles}{\prec_{\mathsf{d}}}
\newcommand{\iles}{\prec}
\newcommand{\brepsum}{\oplus_{\mathsf{b}}}
\newcommand{\drepsum}{\oplus_{\mathsf{d}}}
\newcommand{\repsum}{\oplus}
\newcommand{\cyc}{z}
\newcommand{\chn}{c}
\newcommand{\fsimp}[2]{\sigma_{#2}}
\newcommand{\rseq}{\zeta}
\newcommand{\rconcat}{\mathbin\Vert}
\newcommand{\rsprefix}[1]{{[:\!#1]}}
\newcommand{\rssuffix}[1]{{[#1\!:]}}
\newcommand{\dpc}{\Dcal}
\newcommand{\pset}{P}
\newcommand{\dpccnt}{s}
\newcommand{\posf}{D}

\newcommand{\ef}{\Ecal}
\newcommand{\hatfsimp}{\hat{\sigma}}

\makeatletter
\newcommand*{\da@rightarrow}{\mathchar"0\hexnumber@\symAMSa 4B }
\newcommand*{\da@leftarrow}{\mathchar"0\hexnumber@\symAMSa 4C }
\newcommand*{\xdashrightarrow}[2][]{%
  \mathrel{%
    \mathpalette{\da@xarrow{#1}{#2}{}\da@rightarrow{\;}{}}{}%
  }%
}
\newcommand{\xdashleftarrow}[2][]{%
  \mathrel{%
    \mathpalette{\da@xarrow{#1}{#2}\da@leftarrow{}{}{\;}}{}%
  }%
}
\newcommand{\xdashleftrightarrow}[2][]{%
  \mathrel{%
    \mathpalette{\da@xarrow{#1}{#2}\da@leftarrow\da@rightarrow{}{}}{}%
  }%
}
\newcommand*{\da@xarrow}[7]{%
  \sbox0{$\ifx#7\scriptstyle\scriptscriptstyle\else\scriptstyle\fi#5#1#6\m@th$}%
  \sbox2{$\ifx#7\scriptstyle\scriptscriptstyle\else\scriptstyle\fi#5#2#6\m@th$}%
  \sbox4{$#7\dabar@\m@th$}%
  \dimen@=\wd0 %
  \ifdim\wd2 >\dimen@
    \dimen@=\wd2 %
  \fi
  \count@=2 %
  \def\da@bars{\dabar@\dabar@}%
  \@whiledim\count@\wd4<\dimen@\do{%
    \advance\count@\@ne
    \expandafter\def\expandafter\da@bars\expandafter{%
      \da@bars
      \dabar@ 
    }%
  }%
  \mathrel{#3}%
  \mathrel{%
    \mathop{\da@bars}\limits
    \ifx\\#1\\%
    \else
      _{\copy0}%
    \fi
    \ifx\\#2\\%
    \else
      ^{\copy2}%
    \fi
  }%
  \mathrel{#4}%
  \!\!
}
\makeatother

\let\rseqlrarr\xdashleftrightarrow
\let\rseqlarr\xdashleftarrow
\let\rseqrarr\xdashrightarrow

\newcounter{desccounter}

\newcommand{\descitem}[2]{\stepcounter{Item}\refstepcounter{desccounter}\item[#1 \Alph{desccounter} #2]}

\newcommand{\defemph}[1]{{\textit{#1}}}

\newcommand{\cancel}[1]

\hyphenation{mini-mal}
\hyphenation{Further-more}
\hyphenation{zig-zag}

\begin{document}

\title{Updating Barcodes and Representatives for Zigzag Persistence\thanks{This research is partially supported by NSF grant CCF 2049010.}}

\author{Tamal K. Dey\thanks{Department of Computer Science, Purdue University. \texttt{tamaldey@purdue.edu}}
\and Tao Hou\thanks{School of Computing, DePaul University. \texttt{thou1@depaul.edu}}
}

\date{}

\maketitle
\thispagestyle{empty}

\begin{abstract}
Computing persistence over changing filtrations give rise to a
stack of 2D persistence diagrams where the birth-death points 
are connected by the so-called `vines'~\cite{cohen2006vines}. We consider
computing these vines over changing filtrations for zigzag persistence.
We observe that eight atomic operations are sufficient for changing
one zigzag filtration to another and provide update
algorithms for each of them. Six of these operations that have
some 
analogues to one or multiple \emph{transpositions} in the non-zigzag case can be executed as efficiently
as their non-zigzag counterparts. This approach
takes advantage of a recently discovered algorithm for
computing zigzag barcodes~\cite{dey2022fast} by converting a zigzag filtration to a
non-zigzag one and then connecting barcodes of the two with a bijection. 
The remaining two atomic operations do not have a strict analogue in the non-zigzag case.
For them, we propose algorithms based on explicit maintenance of representatives (homology cycles) which can be useful in their own rights for applications requiring explicit updates of representatives.

\end{abstract}

\newpage
\setcounter{page}{1}

\section{Introduction}
\label{sec:intro}
Computation of the persistence diagram (PD) from a given filtration has turned out to
be a central task in topological data analysis. Such a filtration usually represents
a nested sequence of sublevel sets of a function. In scenarios where the function changes, the filtration and hence the PD may also change. The authors in~\cite{cohen2006vines} provided
an efficient algorithm for updating the PD over an atomic operation which
\emph{transposes} two consecutive simplex additions in the filtration. 
Using this atomic operation repeatedly, one can connect a series of filtrations obtained from a time-varying function with the so-called structure of \emph{vineyard}. 
The authors~\cite{cohen2006vines} showed that the update in PD due to the atomic transposition can be computed in $O(n)$ time if $n$ simplices constitute the filtration. In this paper, we extend this result to zigzag filtrations.
Specifically, we identify \emph{eight atomic operations}
necessary for any zigzag filtration
to transform to any other,
including four that are analogues of transpositions in the non-zigzag case.

Compared to the non-zigzag case, computing the PD (also called the barcode) from a zigzag filtration is itself more complicated. This complication naturally carries over to the task of
updating PDs for changing zigzag filtrations. One main difficulty stems
from the fact that, unlike in the non-zigzag case, it seemed necessary to pay extra cost
in bookkeeping
\emph{representatives} for the bars while computing zigzag barcodes.
The known algorithms by Maria and Oudot~\cite{maria2014zigzag,maria2016computing} (see also~\cite{maria2019discrete}),
Carlsson et al.~\cite{carlsson2009zigzag-realvalue}, and Milosavljevi{\'c} et al.~\cite{milosavljevic2011zigzag} for computing zigzag persistence implicitly or explicitly
maintain these representatives. Naturally, any attempt to adapt these algorithms
to changing filtrations faces the difficulty of updating the representatives efficiently
over the atomic operations. It is by no means obvious how to carry out these updates for
representatives efficiently, let alone avoid them. 

In this paper, we show that, out of the eight atomic operations, we can execute six without
maintaining representatives explicitly
by drawing upon some relations/analogies to the non-zigzag case.
The two remaining
operations whose non-zigzag analogues do not even exist need explicit
maintenance of representatives
due to change in adjacencies of
the cells (see Section~\ref{sec:adj-change}). For the first six operations, we take advantage
of a recently discovered algorithm~\cite{dey2022fast} 
for zigzag persistence
that first
converts a zigzag filtration to a non-zigzag one and then connects barcodes
of the two with a bijection. As shown in~\cite{dey2022fast}, this algorithm called \textsc{FastZigzag} runs quite efficiently in practice because it avoids 
maintaining representatives altogether. 
We summarize our algorithmic results for the operations as follows (see also Table~\ref{tab:complexities} in Section~\ref{sec:overview}):
\begin{itemize}
    \item Four of the eight operations are 
\emph{switches}~\cite{carlsson2010zigzag,carlsson2019parametrized,carlsson2009zigzag-realvalue,maria2014zigzag,maria2016computing,oudot2015zigzag}
which are equivalents of transpositions~\cite{cohen2006vines} in the non-zigzag case. They take
constant or linear time for updates by utilizing the \textsc{FastZigzag} algorithm. 
    \item The other four operations entail
`\emph{expanding}'~\cite{maria2014zigzag,maria2016computing} or `\emph{contracting}' a zigzag filtration locally 
whose equivalents for non-zigzag filtrations have not been considered.
\begin{itemize}
    \item Among them,
two operations (the \emph{inward} expansion and contraction) 
can be related to `expanding' or `contracting' a non-zigzag (standard) filtration by a simplex. One may execute such operations in the non-zigzag case by $O(n)$ transpositions incurring a cost of $O(n^2)$.\footnote{An `expansion' on a non-zigzag filtration can be thought of as inserting a simplex $\sG$ in the middle
of the filtration. The update can be done via inserting $\sG$ to the end
of the filtration and then performing transpositions that
bring $\sG$ to the right position.
A `contraction' on a non-zigzag filtration has the reverse process.}
For these two operations in the zigzag case, we
can still take advantage of the \textsc{FastZigzag} algorithm to have a quadratic time complexity.
    \item The remaining two operations (the \emph{outward} expansion and contraction) are the costliest
which have no direct analogues in the non-zigzag case.
The update algorithms for theses two operations 
require explicit maintenance of representatives
and take cubic time, 
which seems not to be saving time compared to computing the barcodes from scratch~\cite{carlsson2009zigzag-realvalue,dey2022fast,maria2014zigzag,maria2016computing,maria2019discrete}. However, an application
may demand explicit maintenance of the representatives where computing barcodes
from scratch does not help
(see Appendix~\ref{sec:zzup-app} for applications of the representative maintenance). Moreover, our experiment in Section~\ref{sec:dpc-timing} shows that
computing barcodes by our representative-based update algorithms indeed takes less time in practice than computing them afresh for each filtration. Of course, maintaining representatives for one operation requires doing so for every operation. We thereby present an efficient algorithm for explicit maintenance of representatives for every atomic operation.
\end{itemize}

\end{itemize}

In a nutshell, if an application requires only a subset of the first six operations,
barcodes can be updated
as efficiently as in the non-zigzag case. However,
if an application requires explicit maintenance of representatives over 
the operations, or if it requires the last two operations, we pay an extra price.

To motivate our work, we mention in Appendix~\ref{sec:zzup-app} some potential
applications of the update operations/algorithms presented in this paper
to dynamic point clouds and multiparameter (zigzag) persistence.

\section{Preliminaries}

A {\it zigzag filtration} (or simply {\it filtration})
is a sequence of simplicial complexes 
\begin{equation}
\label{eqn:prelim-filt}
\Fcal: K_0 \leftrightarrow K_1 \leftrightarrow 
\cdots \leftrightarrow K_\filtcnt,
\end{equation}
in which each
$K_i\leftrightarrow K_{i+1}$ is either a forward inclusion $K_i\incto K_{i+1}$
or a backward inclusion $K_i\bakincto K_{i+1}$.
Taking the $\Dim$-th homology $\Hm_\Dim$,
we derive a {\it zigzag module}
\[\Hm_\Dim(\Fcal): 
\Hm_\Dim(K_0) 
\leftrightarrow
\Hm_\Dim(K_1) 
\leftrightarrow
\cdots 
\leftrightarrow
\Hm_\Dim(K_\filtcnt), \]
in which
each $\Hm_\Dim(K_i)\leftrightarrow \Hm_\Dim(K_{i+1})$
is a linear map induced by inclusion.
In this paper, we take the coefficient $\Zbb_2$ for $\Hm_\Dim$
and thereby treat chains in the chain group (denoted $\Chn_\Dim$)
and cycles in the cycle group (denoted $\Cyc_\Dim$)
as sets of simplices.
The zigzag module $\Hm_\Dim(\Fcal)$
has a decomposition~\cite{carlsson2010zigzag,Gabriel72} of the form
$\Hm_\Dim(\Fcal)\simeq\bigoplus_{k\in\LG}\Ical^{[\birth_k,\death_k]}$,
in which each $\Ical^{[\birth_k,\death_k]}$
is an
{\it interval module} over the interval $[\birth_k,\death_k]\subseteq\Set{0,\ldots,\filtcnt}$.
The (multi-)set of intervals
$\Pers_\Dim(\Fcal):=\Set{[\birth_k,\death_k]\given k\in\LG}$
is an invariant of $\Hm_\Dim(\Fcal)$
and is called the $\Dim$-th {\it zigzag barcode} (or simply {\it barcode}) of $\Fcal$.
Each interval in $\Pers_\Dim(\Fcal)$
is called a $\Dim$-th \emph{persistence interval}.
We usually consider the homology $\Hm_*$ in all dimensions
and take the zigzag module $\Hm_*(\Fcal)$,
for which we have $\Pers_*(\Fcal)=\bigsqcup_{\Dim\geq 0}\Pers_\Dim(\Fcal)$.
In this paper,
sometimes a filtration may have nonconsecutive indices
on the complexes (i.e., some indices are skipped);
notice that the barcode is still well-defined.

An inclusion in a filtration is called \emph{simplex-wise}
if it is an addition or deletion of a single simplex $\sG$,
which we sometimes denote as, e.g., $K_i\leftrightarrowsp{\sG}K_{i+1}$.
A filtration is called \emph{simplex-wise} if it contains only simplex-wise inclusions.
For computational purposes, we especially focus on 
simplex-wise filtrations
starting and ending with \emph{empty} complexes;
notice that any filtration can be converted into this form
by expanding the inclusions and attaching complexes to both ends.

Now let $\Fcal$ in Equation~(\ref{eqn:prelim-filt})
be a simplex-wise filtration starting and ending with 
empty complexes.
Then, each map $\Hm_*(K_i)\leftrightarrow \Hm_*(K_{i+1})$
in $\Hm_*(\Fcal)$ is either (i) injective with a one-dimensional cokernel
or (ii) surjective with a one-dimensional kernel.
The inclusion $K_i\leftrightarrow K_{i+1}$ provides
a \emph{birth} index $i+1$ (start of a persistence interval)
if 
$\Hm_*(K_i)\rightarrow \Hm_*(K_{i+1})$ is forward and injective,
or 
$\Hm_*(K_i)\leftarrow \Hm_*(K_{i+1})$ is backward and surjective.
Symmetrically, the inclusion provides
a \emph{death} index $i$ (end of a persistence interval)
if $\Hm_*(K_i)\rightarrow \Hm_*(K_{i+1})$ is forward and surjective,
or $\Hm_*(K_i)\leftarrow \Hm_*(K_{i+1})$ is backward and injective.
We denote the set of birth indices of $\Fcal$ as $\pinds(\Fcal)$
and the set of death indices of $\Fcal$ as $\ninds(\Fcal)$.

\section{Overview of main results}
\label{sec:overview}

In this section,
we detail all the update operations
with an overview of the main results for their computation.
The eight update operations (see Table~\ref{tab:complexities}) can be grouped into three types, i.e.,
switches, expansions, and contractions.
A switch is an interchange of two consecutive additions or deletions;
an expansion is an insertion of the addition and deletion of a simplex
in the middle;
a contraction is the reverse of an expansion.
Time complexities of the update algorithms for these operations
based on the two different approaches 
are listed in Table~\ref{tab:complexities}. 
In the table,
we denote the approach based on converting a zigzag filtration into
a non-zigzag one as \texttt{FZZ}-based (described in Section~\ref{sec:fzz-up}),
and the approach based on maintaining full representatives
for the intervals as \texttt{Rep}-based (described in Section~\ref{sec:update-alg}).
For each update operation,
let the filtration before and after the update be denoted as $\Fcal$
and $\Fcal'$ respectively,
which are both simplex-wise filtrations
starting and ending with empty complexes.
Then,
$\filtcnt$ in Table~\ref{tab:complexities} is the max length of $\Fcal$ and $\Fcal'$,
and $\simpcnt$ 
is the number of simplices in the \emph{total complex} $K$,
which is the union of all complexes in $\Fcal$ and $\Fcal'$
(notice that $\simpcnt\leq\filtcnt$).
As mentioned,
\texttt{FZZ}-based approaches cannot be directly applied to 
outward expansion and contraction due to change of adjacency relations
on the \emph{$\DG$-complex cells} (see Section~\ref{sec:adj-change}).
Hence, time complexities of \texttt{FZZ}-based approaches for these
two operations are left blank in Table~\ref{tab:complexities}.
In Section~\ref{sec:dpc-timing},
we present experimental results on computing vines and vineyards for dynamic point clouds
using the \texttt{Rep}-based update algorithms.

\begin{table}[htbp!]
    \centering
    \caption{Time complexities of update algorithms based on the two different approaches}
    \label{tab:complexities}
    \begin{tabular}{l cccc}
         \toprule
          & forward switch & backward switch & outward switch & inward switch \\
         \cmidrule{1-5}
         \texttt{FZZ}-based & $O(\filtcnt)$ & $O(\filtcnt)$ & $O(1)$ & $O(1)$ \\ 
         \texttt{Rep}-based & $O(\filtcnt\simpcnt)$ & $O(\filtcnt\simpcnt)$ & $O(\simpcnt^2+\filtcnt)$ & $O(\filtcnt)$ \\ 
         \midrule
         & outward expansion & outward contraction & inward expansion & inward contraction \\
         \cmidrule{1-5}
        \texttt{FZZ}-based & -- & -- & $O(\filtcnt^2)$ & $O(\filtcnt^2)$ \\
        \texttt{Rep}-based & $O(\filtcnt\simpcnt^2)$ & $O(\filtcnt\simpcnt^2)$ & $O(\filtcnt\simpcnt^2)$ & $O(\filtcnt\simpcnt^2)$ \\
         \bottomrule
    \end{tabular}
\end{table}

Notice that theorems describing the interval mapping for some operations
in this section have already been given in previous works.
Specifically,
Maria and Oudot~\cite{maria2014zigzag,maria2016computing} presented
a theorem on the forward/backward switches
(Transposition Diamond Principle~\cite{maria2014zigzag}).
Carlsson and de Silva~\cite{carlsson2010zigzag} presented
a theorem on the inward/outward switches
(Mayer-Vietoris Diamond Principle~\cite{carlsson2010zigzag,carlsson2019parametrized,carlsson2009zigzag-realvalue}).
Maria and Oudot~\cite{maria2014zigzag,maria2016computing} presented
a theorem on the inward expansion
(Injective/Surjective Diamond Principle~\cite{maria2014zigzag}). However, it was not
clear how the mappings given by these theorems can be computed with
efficient algorithms. We provide such algorithms in this paper.

Notice that outward expansion and inward/outward contractions
have not been considered elsewhere before, and our algorithms 
in Section~\ref{sec:out-expan}, \ref{sec:out-contra}, and Appendix~\ref{sec:in-contra}
implicitly provide theorems on their interval mappings.

\subsection{Update operations}
\label{sec:update-oper}

We now present all the update operations.
At the end of the subsection, we also provide a universality property
saying that every two zigzag filtrations
can be connected by a sequence of the update operations.

\medskip
\noindent
\textbf{{Forward switch}}
requires $\sG\nsubseteq\tG$:

\vspace{-1em}
\begin{center}
\begin{tikzpicture}
\tikzstyle{every node}=[minimum width=24em]
\node (a) at (0,0) {$\Fcal:K_0 \leftrightarrow\cdots\leftrightarrow K_{i-1}\inctosp{\sG}K_i\inctosp{\tG}K_{i+1}\leftrightarrow\cdots\leftrightarrow K_\filtcnt$}; 
\node (b) at (0,-0.6){$\Fcal':K_0\leftrightarrow\cdots\leftrightarrow K_{i-1}\inctosp{\tG} K'_i\inctosp{\sG} K_{i+1}\leftrightarrow\cdots\leftrightarrow K_\filtcnt$};
\path[->] (a.0) edge [bend left=90,looseness=1.5,arrows={-latex},dashed] (b.0);
\end{tikzpicture}
\end{center}

\vspace{-1em}\noindent
Notice that if $\sG\subseteq\tG$, then adding $\tG$ to $K_{i-1}$ in $\Fcal'$
does not produce a simplicial complex.

\medskip
\noindent
\textbf{{Backward switch}}
is the symmetric version of forward switch,
requiring $\tG\not\subseteq\sG$:

\vspace{-1em}
\begin{center}
\begin{tikzpicture}
\tikzstyle{every node}=[minimum width=24em]
\node (a) at (0,0) {$\Fcal: K_0 \leftrightarrow
\cdots
\leftrightarrow 
K_{i-1}\bakinctosp{\sG} 
K_i 
\bakinctosp{\tG} K_{i+1}
\leftrightarrow
\cdots \leftrightarrow K_\filtcnt$}; 
\node (b) at (0,-0.6){$\Fcal': K_0 \leftrightarrow
\cdots
\leftrightarrow 
K_{i-1}\bakinctosp{\tG} 
K'_i 
\bakinctosp{\sG} K_{i+1}
\leftrightarrow
\cdots \leftrightarrow K_\filtcnt$};
\path[->] (a.0) edge [bend left=90,looseness=1.5,arrows={-latex},dashed] (b.0);
\end{tikzpicture}
\end{center}

\vspace{-1.5em}\noindent

\medskip
\noindent
\textbf{{Outward switch}}
requires $\sG\neq\tG$:

\vspace{-1em}
\begin{center}
\begin{tikzpicture}
\tikzstyle{every node}=[minimum width=24em]
\node (a) at (0,0) {$\Fcal: K_0 \leftrightarrow
\cdots
\leftrightarrow 
K_{i-1}\inctosp{\sG} 
K_i 
\bakinctosp{\tG} K_{i+1}
\leftrightarrow
\cdots \leftrightarrow K_\filtcnt$}; 
\node (b) at (0,-0.6){$\Fcal': K_0 \leftrightarrow
\cdots
\leftrightarrow 
K_{i-1}\bakinctosp{\tG} 
K'_i 
\inctosp{\sG} K_{i+1}
\leftrightarrow
\cdots \leftrightarrow K_\filtcnt$};
\path[->] (a.0) edge [bend left=90,looseness=1.5,arrows={-latex},dashed] (b.0);
\end{tikzpicture}
\end{center}

\vspace{-1em}\noindent
Notice that if $\sG=\tG$, then we cannot delete $\tG$ from $K_{i-1}$ in $\Fcal'$
because $\tG\not\in K_{i-1}$.

\medskip
\noindent
\textbf{{Inward switch}}
is the reverse of outward switch,
requiring $\sG\neq\tG$:

\vspace{-1em}
\begin{center}
\begin{tikzpicture}
\tikzstyle{every node}=[minimum width=24em]
\node (a) at (0,0) {$\Fcal: K_0 \leftrightarrow
\cdots
\leftrightarrow 
K_{i-1}\bakinctosp{\sG} 
K_i 
\inctosp{\tG} K_{i+1}
\leftrightarrow
\cdots \leftrightarrow K_\filtcnt$}; 
\node (b) at (0,-0.6){$\Fcal': K_0 \leftrightarrow
\cdots
\leftrightarrow 
K_{i-1}\inctosp{\tG} 
K'_i 
\bakinctosp{\sG} K_{i+1}
\leftrightarrow
\cdots \leftrightarrow K_\filtcnt$};
\path[->] (a.0) edge [bend left=90,looseness=1.5,arrows={-latex},dashed] (b.0);
\end{tikzpicture}
\end{center}

\vspace{-1.5em}\noindent

\medskip
\noindent
\textbf{{Outward expansion}}
requires $\sG$ to be a simplex in $K_i$ without cofaces and $K'_{i-1}=K_{i}=K'_{i+1}$:

\vspace{-0.5em}
\begin{center}
\begin{tikzpicture}
\node (a) at (0,0) {$\Fcal: K_0 \leftrightarrow
\cdots\leftrightarrow K_{i-2}
\leftrightarrow 
K_i 
\leftrightarrow K_{i+2}\leftrightarrow
\cdots \leftrightarrow K_\filtcnt$}; 
\node (b) at (0,-0.6){$\Fcal': K_0 \leftrightarrow
\cdots\leftrightarrow K_{i-2}
\leftrightarrow 
K'_{i-1}
\bakinctosp{\sG} 
K'_i 
\inctosp{\sG} 
K'_{i+1}
\leftrightarrow K_{i+2}\leftrightarrow
\cdots \leftrightarrow K_\filtcnt$};
\draw[->,arrows={-latex},dashed] (a.0) .. controls (+6.5,+0) and (+7,-0.35) .. (b.0);
\end{tikzpicture}
\end{center}

\vspace{-1em}\noindent
To clearly show 
the correspondence of complexes in $\Fcal$ and $\Fcal'$, 
indices for $\Fcal$ are made nonconsecutive
in which $i-1$ and $i+1$ are skipped.

\medskip
\noindent
\textbf{{Outward contraction}}
is the reverse of outward expansion, requiring $K'_{i}=K_{i-1}=K_{i+1}$:

\vspace{-1em}
\begin{center}
\begin{tikzpicture}
\node (a) at (0,0) {$\Fcal: K_0 \leftrightarrow
\cdots\leftrightarrow K_{i-2}
\leftrightarrow 
K_{i-1}\bakinctosp{\sG} 
K_i 
\inctosp{\sG} K_{i+1}
\leftrightarrow K_{i+2}\leftrightarrow
\cdots \leftrightarrow K_\filtcnt$}; 
\node (b) at (0,-0.6){$\Fcal': K_0 \leftrightarrow
\cdots\leftrightarrow K_{i-2}
\leftrightarrow 
K'_{i}
\leftrightarrow K_{i+2}\leftrightarrow
\cdots \leftrightarrow K_\filtcnt$};
\draw[->,arrows={-latex},dashed] (a.0) .. controls (+7,-0.2) and (+6.5,-0.6) .. (b.0);
\end{tikzpicture}
\end{center}

\vspace{-1em}\noindent

\noindent
\textbf{{Inward expansion}}
is similar to outward expansion with the difference
that the two inserted arrows now pointing toward each other; 
it requires that $\sG\not\in K_i$, boundary simplices of $\sG$ be in $K_i$, and $K'_{i-1}=K_{i}=K'_{i+1}$:

\vspace{-0.5em}
\begin{center}
\begin{tikzpicture}
\node (a) at (0,0) {$\Fcal: K_0 \leftrightarrow
\cdots\leftrightarrow K_{i-2}
\leftrightarrow 
K_i 
\leftrightarrow K_{i+2}\leftrightarrow
\cdots \leftrightarrow K_\filtcnt$}; 
\node (b) at (0,-0.6){$\Fcal': K_0 \leftrightarrow
\cdots\leftrightarrow K_{i-2}
\leftrightarrow 
K'_{i-1}
\inctosp{\sG} 
K'_i 
\bakinctosp{\sG} 
K'_{i+1}
\leftrightarrow K_{i+2}\leftrightarrow
\cdots \leftrightarrow K_\filtcnt$};
\draw[->,arrows={-latex},dashed] (a.0) .. controls (+6.5,+0) and (+7,-0.35) .. (b.0);
\end{tikzpicture}
\end{center}

\noindent
\textbf{{Inward contraction}}
is the reverse of inward expansion, requiring $K'_{i}=K_{i-1}=K_{i+1}$:

\vspace{-1em}
\begin{center}
\begin{tikzpicture}
\node (a) at (0,0) {$\Fcal: K_0 \leftrightarrow
\cdots\leftrightarrow K_{i-2}
\leftrightarrow 
K_{i-1}\inctosp{\sG} 
K_i 
\bakinctosp{\sG} K_{i+1}
\leftrightarrow K_{i+2}\leftrightarrow
\cdots \leftrightarrow K_\filtcnt$}; 
\node (b) at (0,-0.6){$\Fcal': K_0 \leftrightarrow
\cdots\leftrightarrow K_{i-2}
\leftrightarrow 
K'_{i}
\leftrightarrow K_{i+2}\leftrightarrow
\cdots \leftrightarrow K_\filtcnt$};
\draw[->,arrows={-latex},dashed] (a.0) .. controls (+7,-0.2) and (+6.5,-0.6) .. (b.0);
\end{tikzpicture}
\end{center}

\paragraph{Universality of the operations.}
We present the following fact (proof in Appendix~\ref{sec:pf-prop-universality}):

\begin{proposition}\label{prop:universality}
Let $\Fcal_1,\Fcal_2$ be any two simplex-wise zigzag filtrations
starting and ending with empty complexes.
Then $\Fcal_1$ can be transformed into $\Fcal_2$ by
a sequence of the update operations listed above.
\end{proposition}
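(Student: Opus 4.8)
The plan is to show that any simplex-wise zigzag filtration $\Fcal_1$ (starting and ending with empty complexes) can be transformed via the update operations into a canonical "normal form" filtration; since the operations are reversible, this suffices. The natural normal form to target is the \emph{one-way oscillation}: a filtration that first builds up the total complex $K$ one simplex at a time (using a fixed simplex ordering consistent with the face poset), then tears it down one simplex at a time. More precisely, if $\Fcal_1$ and $\Fcal_2$ have total complexes $K^{(1)}$ and $K^{(2)}$ respectively, I would first use outward/inward expansions to pad each filtration so that both traverse the \emph{same} total complex $K = K^{(1)} \cup K^{(2)}$ (inserting, for each simplex of $K$ not yet present, a transient addition-deletion pair somewhere legal, respecting face constraints by introducing simplices in order of increasing dimension); then show each can be brought to the canonical up-down form on $K$.

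The key steps, in order: (1) \emph{Dimension/ordering setup.} Fix a total order $\sigma_1, \dots, \sigma_n$ on the simplices of $K$ refining the face partial order (lower dimension first, or more generally any linear extension). (2) \emph{Reaching a common total complex.} For any simplex $\sigma \in K$ absent from $\Fcal_1$, its faces either already appear in $\Fcal_1$ or can be inserted first by induction on dimension; then an inward expansion at the empty end (or wherever all faces of $\sigma$ are simultaneously present — the two empty endpoints always qualify for minimal simplices) inserts a transient $\sigma$. Iterating, both filtrations now have total complex $K$. (3) \emph{Sorting to canonical form.} Using forward, backward, outward, and inward switches, commute adjacent inclusions to push the filtration toward the shape $\varnothing \incto \cdots \incto K \bakincto \cdots \bakincto \varnothing$ followed (or interleaved) by transient bumps for simplices not "meant" to survive; then use contractions to delete all transient bumps except the canonical single ascent and descent. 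A bubble-sort-style argument on an appropriate potential function (e.g., number of "direction reversals" or an inversion count relative to the target order) shows termination. (4) Conclude: $\Fcal_1 \leadsto (\text{normal form on } K) \leadsto \Fcal_2$ by reversing $\Fcal_2$'s transformation.

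The main obstacle I expect is step (3), and specifically the \emph{side conditions} on the switches and contractions: forward switch needs $\sigma \not\subseteq \tau$, outward/inward switch needs $\sigma \neq \tau$, expansions need boundary/coface conditions, etc. One must argue that whenever the filtration is not yet in normal form, \emph{some} legal operation makes progress — i.e., that the side conditions never simultaneously block every useful move. The clean way to handle this is to always operate on a "topmost" simplex: a maximal-dimension simplex currently present, which has no cofaces, so outward switches and outward contractions involving it are always legal; symmetrically one handles additions by working with simplices all of whose faces are present. Carefully choosing \emph{which} simplex to move next (e.g., process simplices in reverse topological order when tearing structure down, forward order when building it up) should keep every required operation legal, and a straightforward potential-function bound (bounded by something like $O(n \filtcnt)$ inversions) gives termination. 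The rest — verifying each individual commutation is a named operation and that the padding in step (2) respects simplicial-complex validity — is routine bookkeeping.
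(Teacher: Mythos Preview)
Your approach is correct in outline but takes a noticeably different route from the paper's. The paper's proof is shorter: it reduces every simplex-wise zigzag filtration to the \emph{empty} filtration (not to a canonical up-down filtration on a common total complex), and then connects $\Fcal_1$ and $\Fcal_2$ through empty. Concretely, the paper (i) locates the first deletion followed later by an addition and, at the junction $\bakinctosp{\sigma_{j-1}}\inctosp{\sigma_j}$, applies either an inward switch (if $\sigma_{j-1}\neq\sigma_j$) or an outward contraction (if $\sigma_{j-1}=\sigma_j$), repeating until the filtration is up-down; then (ii) uses forward/backward switches and inward contractions to collapse the up-down filtration to empty.

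The comparison: by targeting the empty filtration, the paper sidesteps your entire Step~(2). There is no need to pad both filtrations to share a total complex, no inductive insertion of faces via inward expansions, and no worry about finding an index where all faces of a missing simplex are simultaneously present. Your approach works, but the padding machinery is overhead that the paper's choice of canonical form simply avoids. What your approach buys in exchange is a canonical form that is ``maximal'' rather than ``minimal,'' which might be convenient if one wanted to bound the number of operations in terms of the joint total complex --- but for the bare universality statement, the paper's reduction is cleaner. Your identification of the main obstacle (the side conditions on switches and contractions) and your resolution (operate on a topmost simplex so coface conditions are vacuous) are on point and essentially match how the paper's moves stay legal.
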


\subsection{Timing results for an example application}
\label{sec:dpc-timing}

We implement the representative-based update algorithms 
to compute vines and vineyards for dynamic point cloud (henceforth shortened as \emph{DPC})
as described in Appendix~\ref{sec:zzup-app}.
The source code is made public via: 
{\small\url{https://github.com/taohou01/zzup}}.
To demonstrate the efficiency of the representative-based update algorithms,
we compare their running time with that incurred by 
invoking a zigzag persistence algorithm
from scratch on each filtration.
For computing zigzag persistence from scratch,
we use an implementation\footnote{\url{https://github.com/taohou01/fzz}}
of the \textsc{FasZigzag} algorithm~\cite{dey2022fast},
which, according to the experiments in~\cite{dey2022fast},
gives the best running time for all
inputs among the algorithms tested.
For generating DPC datasets,
we use an implementation\footnote{\url{https://github.com/Nikorasu/PyNBoids}}
of the \emph{Boids}~\cite{kim2020spatiotemporal,reynolds1987flocks} model, 
which simulates the flocking behaviour of animals/objects such as birds.
As listed in Table~\ref{tab:timing}, two DPC datasets are generated, one with 10 boids (B)
moving over 100 time units (TU) and another with 15 boids moving over 20 time units.
For the Rips complexes changing over distance and time,
we only consider simplices up to dimension 3.
Table~\ref{tab:timing} also lists the numbers of different operations performed
for the datasets and the maximum length (MLen) of all filtrations generated.
From the table, we see that the accumulated computation time taken by our update algorithms
($\text{T}_\text{UP}$) 
is significantly less than that taken by invoking \textsc{FastZigzag} from scratch each time 
($\text{T}_\text{FS}$). 

\begin{table}[h!]
    \centering
    \caption{Test results on the two generated DPC datasets}
    \label{tab:timing}
    \begin{tabular}{rrrrrrrrrrr}
         \toprule
          B & TU & {fw\_sw} & {bw\_sw} & {ow\_sw} & {iw\_con} & {ow\_exp} & MLen & $\text{T}_\text{UP}$ & $\text{T}_\text{FS}$ \\
         \midrule
          10 & 100 & 23230 & 23000 & 42809 & 1646 & 1271 & 1200 & 0.54s & 35.19s \\
          15 & 20 & 736675 & 1107417 & 3284767 & 11093 & 4918 & 13732 & 11m1s & >28h\tablefootnote{The program ran for more than 28 hours and did not finish.} \\
         \bottomrule
    \end{tabular}
\end{table}

\section{Update algorithms based on \textsc{FastZigzag}}
\label{sec:fzz-up}

In this section,
we provide algorithms for the update operations
based on the \textsc{FastZigzag} algorithm~\cite{dey2022fast}.
We first briefly overview \textsc{FastZigzag}
(see~\cite{dey2022fast} for a detailed presentation),
and then describe how we utilize the algorithm for updates
with the help of the {transposition} operation proposed by Cohen-Steiner et al.~\cite{cohen2006vines}.
In Section~\ref{sec:adj-change},
we provide evidence for why the update algorithm for transpositions~\cite{cohen2006vines} cannot be
applied on outward expansion and contraction.

\subsection{Overview of \textsc{FastZigzag} algorithm}
\label{sec:fzz}

The \textsc{FastZigzag} algorithm
builds filtrations 
on the so-called \emph{$\DG$-complexes}~\cite{hatcher2002algebraic}.
Building blocks of $\DG$-complexes,
called \emph{cells} or \emph{$\DG$-cells},
are combinatorial equivalents
of simplices
(each $p$-cell is
formed by $p+1$ vertices
and has $p+1$ number of $(p-1)$-cells in the boundary)
whose common faces have more
relaxed forms~\cite{dey2022fast}.
Assuming a \emph{simplex}-wise zigzag filtration 
\[\Fcal:
\emptyset=
K_0\leftrightarrowsp{\fsimp{}{0}} K_1\leftrightarrowsp{\fsimp{}{1}}
\cdots 
\leftrightarrowsp{\fsimp{}{\filtcnt-1}} K_\filtcnt
=\emptyset
\]
consisting of simplicial complexes as input,
the \textsc{FastZigzag} algorithm
converts $\Fcal$ into the following \emph{cell}-wise non-zigzag filtration
consisting of $\DG$-complexes:
\[\ef:\emptyset\inctosp{\oG}\hat{K}_0\inctosp{\hatfsimp_0} 
\hat{K}_1\inctosp{\hatfsimp_{1}}
\cdots\inctosp{\hatfsimp_{\simpcnt-1}} 
\hat{K}_{\simpcnt}
\inctosp{\hatfsimp_{\simpcnt}}
\hat{K}_{\simpcnt+1}
\inctosp{\hatfsimp_{\simpcnt+1}}
\cdots
\inctosp{\hatfsimp_{\filtcnt-1}} \hat{K}_{\filtcnt},\]
where $\filtcnt=2\simpcnt$
($\filtcnt$ is an even number because an added simplex
must be eventually deleted in ${\Fcal}$).
In $\ef$, $\oG$ is a vertex used for \emph{coning}.
Cells $\hatfsimp_0,\hatfsimp_1,\ldots,\hatfsimp_{\simpcnt-1}$ 
are copies of all added simplices in $\Fcal$ 
with the order of addition preserved.
Cells 
$\hatfsimp_{\simpcnt},\hatfsimp_{\simpcnt+1},\ldots,\hatfsimp_{\filtcnt-1}$ 
are \emph{cones} of those $\DG$-cells corresponding to all simplices deleted in $\Fcal$,
with the order reversed.

\begin{definition}
In $\Fcal$ or $\ef$, 
let each addition or deletion of a simplex
be uniquely identified by its index in the filtration,
e.g., the index of $K_i\leftrightarrowsp{\fsimp{}{i}}K_{i+1}$ in $\Fcal$ is $i$.
Then, 
the \defemph{creator} of an interval $[b,d]\in\Pers_*(\Fcal)\text{ or }\Pers_*(\ef)$
is an addition/deletion indexed at $b-1$,
and the \defemph{destroyer} of the interval
is an addition/deletion indexed at $d$.\footnote{We notice the following 
(whose reasons are evident from later contents): 
(i) the index for
the initial addition of $\oG$ in $\ef$ is not needed and therefore is undefined; 
(ii) we require $[b,d]\in\Pers_*(\ef)$ to be a finite interval ($d<\filtcnt$).}
\end{definition}
Notice that  creators and destroyers defined above are the same 
as the `simplex pairs' in standard persistence~\cite{edelsbrunner2000topological}.

As stated previously, 
each $\hatfsimp_i$ in $\ef$ for $0\leq i<\simpcnt$ corresponds
to an addition in $\Fcal$, and
each $\hatfsimp_i$ for $\simpcnt\leq i<\filtcnt$ corresponds
to a deletion in $\Fcal$.
This naturally defines a bijection $\Phi$
from the additions and deletions in $\Fcal$ to the additions in $\ef$ excluding $\oG$.
Moreover, for simplicity, we let the domain and codomain of $\Phi$
be the sets of indices for the additions and deletions.
We then summarize the interval mapping~\cite{dey2022fast} for \textsc{FastZigzag}
as follows:

\begin{theorem}
\label{thm:fzz-intv-map}
Given $\Pers_*(\ef)$, one can retrieve $\Pers_*(\Fcal)$ 
using the following bijective mapping
from the set of finite intervals of $\Pers_*(\ef)$
to $\Pers_*(\Fcal)$:
an 
interval $[b,d]\in\Pers_*(\ef)$ 
with a creator indexed at $b-1$ and a destroyer indexed at $d$
is mapped to an interval $I\in \Pers_*(\Fcal)$ with
the same creator and destroyer indexed at 
$\Phi\inv(b-1)$ and $\Phi\inv(d)$ respectively.
Specifically, if $\Phi\inv(b-1)<\Phi\inv(d)$, then $I=[\Phi\inv(b-1)+1,\Phi\inv(d)]$,
where $\Phi\inv(b-1)$ indexes the creator and $\Phi\inv(d)$ indexes the destroyer;
otherwise,
$I=[\Phi\inv(d)+1,\Phi\inv(b-1)]$,
where $\Phi\inv(d)$ indexes the creator and $\Phi\inv(b-1)$ indexes the destroyer.
\end{theorem}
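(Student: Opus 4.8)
The plan is to establish the bijection and the interval correspondence by tracking, for each finite interval of $\Pers_*(\ef)$, how its creator and destroyer indices pull back under $\Phi$ to additions/deletions in $\Fcal$, and then invoking the interval-decomposition structure of the zigzag module $\Hm_*(\Fcal)$. Since $\ef$ is a cell-wise \emph{non-zigzag} filtration, standard persistence pairs its cell-additions into creator/destroyer pairs, and each finite interval $[b,d]\in\Pers_*(\ef)$ corresponds to exactly one such pair (the cell $\hatfsimp_{b-1}$ creating a class that the cell $\hatfsimp_d$ destroys). The map $\Phi$ sends this pair of cell-indices to a pair of indices in $\Fcal$, each of which is either a simplex-addition or a simplex-deletion. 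First I would recall from the structure of $\ef$ that additions indexed $0,\dots,\simpcnt-1$ are copies of simplex-additions of $\Fcal$ (order preserved) and additions indexed $\simpcnt,\dots,\filtcnt-1$ are cones of simplex-deletions of $\Fcal$ (order reversed); this makes $\Phi$ explicit and lets me compute $\Phi\inv(b-1)$ and $\Phi\inv(d)$ concretely in each of the four cases (add/add, add/del, del/add, del/del).

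The heart of the argument is a Mayer--Vietoris / coning comparison between $\Hm_*(\Fcal)$ and $\Hm_*(\ef)$. I would cite the interval-mapping result of \textsc{FastZigzag} from~\cite{dey2022fast} as the substantive input: it already asserts that a finite interval of $\Pers_*(\ef)$ corresponds to an interval of $\Pers_*(\Fcal)$ sharing the same creator and destroyer (as filtration events), so what remains is purely a bookkeeping computation converting ``creator/destroyer event'' into ``left/right endpoint'' of an interval in $\Set{0,\dots,\filtcnt}$. For this, I use the birth/death index conventions set up in the Preliminaries: an event at position $j$ that is a creator contributes a \emph{birth} at index $j+1$ (whether it is a forward-injective addition or a backward-surjective deletion), and an event at position $j$ that is a destroyer contributes a \emph{death} at index $j$. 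Let $j_{c}=\Phi\inv(b-1)$ be the creator event and $j_{d}=\Phi\inv(d)$ the destroyer event in $\Fcal$. If $j_c<j_d$, the class is born at $j_c+1$ and dies entering index $j_d$, so the interval is $[j_c+1,j_d]$ with $j_c$ indexing the creator and $j_d$ the destroyer; if $j_c>j_d$, the roles reverse --- the event at $j_d$ (being to the left) must be the birth-producing one and the event at $j_c$ the death-producing one --- yielding $[j_d+1,j_c]$. This is exactly the stated formula.

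**The main obstacle** I anticipate is justifying the role-swap in the case $\Phi\inv(b-1)>\Phi\inv(d)$: a priori ``creator'' and ``destroyer'' are defined relative to $\ef$, where the filtration is monotone, but after pulling back to $\Fcal$ the two events can occur in the opposite temporal order, and one must argue that the \emph{leftmost} of the two events in $\Fcal$ is necessarily the one producing the birth (start of the persistence interval) and the rightmost the one producing the death. I would handle this by appealing to the diamond/coning correspondence underlying \textsc{FastZigzag}: the interval module summand of $\Hm_*(\Fcal)$ in question is supported exactly on the complexes strictly between the two events, and its endpoints are determined by which event injects a new class from the left and which quotients it out from the right; whichever of $\Phi\inv(b-1),\Phi\inv(d)$ is smaller plays the former role by definition of birth index, and the larger plays the latter by definition of death index. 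A secondary, more routine obstacle is checking that $\Phi$ is genuinely a bijection between the finite-interval creators/destroyers of $\ef$ and all additions/deletions of $\Fcal$ --- this follows because every added simplex of $\Fcal$ is eventually deleted (so $\filtcnt=2\simpcnt$ and the counts match), the initial coning addition of $\oG$ is paired off with the cone of the final deletion and hence never appears as the creator of a finite interval, and the remaining $\simpcnt-1$ creators and $\simpcnt$ destroyers in $\ef$ are matched to the $2\simpcnt$ events of $\Fcal$ minus the undefined initial index.
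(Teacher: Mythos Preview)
The paper does not give its own proof of this theorem: it is stated as a summary of the interval-mapping result from~\cite{dey2022fast}, with the justification deferred to Propositions~15 and~19 there. Your proposal likewise defers the substantive step to~\cite{dey2022fast} (``I would cite the interval-mapping result of \textsc{FastZigzag} from~\cite{dey2022fast} as the substantive input''), so in that sense you and the paper are doing the same thing---treating the creator/destroyer correspondence between $\Pers_*(\ef)$ and $\Pers_*(\Fcal)$ as a black box and then performing the endpoint bookkeeping. Your bookkeeping argument (that whichever of $\Phi\inv(b-1),\Phi\inv(d)$ is smaller must supply the birth index and the larger the death index, by the birth/death conventions of the Preliminaries) is correct and is exactly what the theorem statement records.

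One small error in your final paragraph: the vertex $\oG$ is \emph{not} ``paired off with the cone of the final deletion.'' As the paper notes immediately after the theorem, $\oG$ creates the unique \emph{infinite} interval $[0,\infty)$ in dimension~$0$ of $\Pers_*(\ef)$; it is unpaired. The correct count is that $\ef$ has $\filtcnt+1=2\simpcnt+1$ cell additions, of which $\oG$ alone creates an infinite class, and the remaining $2\simpcnt$ cells are partitioned into $\simpcnt$ creator/destroyer pairs giving $\simpcnt$ finite intervals; these map bijectively via $\Phi\inv$ to the $2\simpcnt$ events of $\Fcal$, which in turn determine $\simpcnt$ intervals of $\Pers_*(\Fcal)$.
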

Notice that the only infinite interval in $\Pers_*(\ef)$ is $[0,\infty)$ in dimension 0 created by adding $\oG$~\cite{cohen2009extending}.
Also, the dimension of an interval needs to be properly set when mapping the intervals as described above.
For brevity, we omit the details; see Proposition~15 and~19 in~\cite{dey2022fast}.

\subsection{Using \textsc{FastZigzag} for updates}

We utilize the conversion of a zigzag filtration into a non-zigzag filtration 
in \textsc{FastZigzag}
and the transposition operation proposed in~\cite{cohen2006vines} 
to update the barcodes for the six operations in Table~\ref{tab:complexities}.
Notice that a transposition is indeed a forward switch applied to a non-zigzag
filtration, which can be updated in linear time w.r.t the filtration's length~\cite{cohen2006vines}.
For the update, we maintain the following core data structures
for the zigzag filtration and its converted non-zigzag filtration before and after the operation:
(i) two arrays $\phi,\phi'$ of size $\filtcnt$ encoding the mapping $\Phi$ and $\Phi\inv$;
(ii) another array $\Pi$ of size $\filtcnt$ recording the pairing of creators and destroyers
for the converted non-zigzag filtration.
By Theorem~\ref{thm:fzz-intv-map},
the creator-destroyer pairing for the original zigzag filtration 
can be derived from the core data structures and hence the barcode can be easily updated.

\paragraph{Outward/inward switch.}
Since the corresponding non-zigzag filtration before and after outward/inward switch
stays the same, we only need to update entries in $\phi,\phi'$ that change. So the time complexity is $O(1)$.

\paragraph{Forward/backward switch.}
Corresponding to a forward/backward switch on the original zigzag filtration,
there is a transposition of two additions in the converted non-zigzag filtration.
Updating the pairing in $\Pi$ then takes $O(\filtcnt)$ time using the algorithm in~\cite{cohen2006vines}.
Notice that $\phi,\phi'$ stay the same before and after the switch.
So forward/backward switch takes $O(\filtcnt)$ time.

\paragraph{Inward expansion.}
For the following inward expansion:

\vspace{-0.5em}
\begin{center}
\begin{tikzpicture}
\node (a) at (0,0) {$\Fcal: K_0 \leftrightarrow
\cdots\leftrightarrow K_{i-2}
\leftrightarrow 
K_i 
\leftrightarrow K_{i+2}\leftrightarrow
\cdots \leftrightarrow K_\filtcnt$}; 
\node (b) at (0,-0.6){$\Fcal': K_0 \leftrightarrow
\cdots\leftrightarrow K_{i-2}
\leftrightarrow 
K'_{i-1}
\inctosp{\sG} 
K'_i 
\bakinctosp{\sG} 
K'_{i+1}
\leftrightarrow K_{i+2}\leftrightarrow
\cdots \leftrightarrow K_\filtcnt$};
\draw[->,arrows={-latex},dashed] (a.0) .. controls (+6.5,+0) and (+7,-0.35) .. (b.0);
\end{tikzpicture}
\end{center}

\vspace{-1em}\noindent
we first attach the additions of $\hat{\sG}$ and $\oG\cdot\hat{\sG}$ to the end of the non-zigzag filtration
corresponding to $\Fcal$,
where $\hat{\sG}$ is the $\DG$-cell corresponding to the inserted simplex $\sG$.
Attaching the two additions needs performing two rounds of reductions in 
the persistence algorithm~\cite{cohen2006vines,edelsbrunner2000topological}
and therefore takes $O(\filtcnt^2)$ time.
We then perform transpositions (and update $\Pi$ accordingly) 
to switch the additions of $\hat{\sG}$ and $\oG\cdot\hat{\sG}$
to proper positions so that the non-zigzag filtration
correctly corresponds to the new zigzag filtration $\Fcal'$.
After the transpositions, we also perform necessary updates for $\phi,\phi'$
which takes $O(\filtcnt)$ time.
Since $O(\filtcnt)$ transpositions are performed,
the time complexity of inward expansion
is $O(\filtcnt^2)$.

\paragraph{Inward contraction.}
The algorithm for inward contraction follows the reverse process of that for inward expansion:
we first bring the additions of $\hat{\sG}$ and $\oG\cdot\hat{\sG}$
(defined similarly as previous) to the end of the non-zigzag filtration
by transpositions, and then delete the two additions at the end.
Since $O(\filtcnt)$ transpositions are performed
and updating $\phi,\phi'$ takes $O(\filtcnt)$ time,
the time complexity of inward contraction
is $O(\filtcnt^2)$.

\subsection{Change of adjacency in outward expansion/contraction}
\label{sec:adj-change}

We now explain why the update algorithm for transposition in~\cite{cohen2006vines} cannot be 
applied directly on outward expansion and contraction. 
Consider the following outward expansion:

\vspace{-0.5em}
\begin{center}
\begin{tikzpicture}
\node (a) at (0,0) {$\Fcal: K_0 \leftrightarrow
\cdots\leftrightarrow K_{i-2}
\leftrightarrow 
K_i 
\leftrightarrow K_{i+2}\leftrightarrow
\cdots \leftrightarrow K_\filtcnt$}; 
\node (b) at (0,-0.6){$\Fcal': K_0 \leftrightarrow
\cdots\leftrightarrow K_{i-2}
\leftrightarrow 
K'_{i-1}
\bakinctosp{\sG} 
K'_i 
\inctosp{\sG} 
K'_{i+1}
\leftrightarrow K_{i+2}\leftrightarrow
\cdots \leftrightarrow K_\filtcnt$};
\draw[->,arrows={-latex},dashed] (a.0) .. controls (+6.5,+0) and (+7,-0.35) .. (b.0);
\end{tikzpicture}
\end{center}

\vspace{-1em}\noindent
where $\sG$ is a $\Dim$-simplex.
Let $\ef$ and $\ef'$ be the non-zigzag filtrations constructed 
by \textsc{FastZigzag} for $\Fcal$ and $\Fcal'$ respectively.
Since $\sG\in K_i$, $\sG$ must have been added in $\Fcal$ before
$K_i$. Let $\hat{\sG}^0$ be the $\DG$-cell in $\ef$ corresponding to the most 
recent addition of $\sG$ before $K_i$ in $\Fcal$.
Furthermore, if there are cells added after $\hat{\sG}^0$ 
in $\ef$ which are copies of $\sG$,
let $\hat{\sG}^1$ be the first such cell;
otherwise, let $\hat{\sG}^1$ be the first coned cell added in $\ef$.
Then, let $\GG$ be the set of $(\Dim+1)$-cells added between $\hat{\sG}^0$ and $\hat{\sG}^1$
in $\ef$ whose corresponding simplices in $\Fcal$ 
contain $\sG$ in boundaries.
By the construction of $\ef$~\cite{dey2022fast},
cells in $\GG$ must have $\hat{\sG}^0$ as a boundary $\Dim$-cell.

Now consider $\ef'$. Due to its construction,
we must have that there is a $\Dim$-cell $\hat{\sG}'$ in $\ef'$
corresponding to the addition $K'_i\inctosp{\sG}K'_{i+1}$.
Notice that $\hat{\sG}'$ does not appear in $\ef$ and must be
added between $\hat{\sG}^0$ and $\hat{\sG}^1$ in $\ef'$
(because $\hat{\sG}^0$ corresponds to the most 
recent addition of $\sG$ before $K_i$).
Let $\GG'$ be the set of $(\Dim+1)$-cells added between $\hat{\sG}'$ and $\hat{\sG}^1$
in $\ef'$ whose corresponding simplices in $\Fcal'$
contain $\sG$ in boundaries.
Then, cells in $\GG'$ must now have $\hat{\sG}'$ as a boundary $\Dim$-cell~\cite{dey2022fast}.
Notice that $\GG'\subseteq\GG$.
Therefore, for the $(\Dim+1)$-cells in $\GG'$,
one boundary $\Dim$-cell
$\hat{\sG}^0$ changes to $\hat{\sG}'$
when going from $\ef$ to $\ef'$.
However, the update algorithm for transposition in~\cite{cohen2006vines}
cannot change the boundary (adjacency) relation for cells,
even if we have switched $\hat{\sG}'$ to the correct position by transpositions.
Notice that we also need to add a coned $\DG$-cell corresponding to
the deletion $K'_{i-1}\bakinctosp{\sG}K'_i$ in $\ef'$;
the change in $\ef'$ by adding this coned cell is similar as above
and details are omitted.

Since outward contraction is the reverse of outward expansion,
the change in the converted non-zigzag filtration is symmetric to previous:
one boundary $\Dim$-cell in some $(\Dim+1)$-cells changes to an 
earlier copy of a $\Dim$-simplex $\sG$.
Hence, we also cannot directly apply the update algorithm for transposition~\cite{cohen2006vines}
on outward contraction.

\section{Update algorithms based on maintaining full representatives}
\label{sec:update-alg}

In this section, we present update algorithms for {all} the eight operations
based on explicit maintenance of representatives for the persistence intervals.
As stated earlier, the representative-based algorithms are useful 
in the following situations:
(i) an application that requires outward expansion and contraction which
cannot use the \textsc{FastZigzag}-based approach
(see Section~\ref{sec:adj-change});
(ii) an application that requires explicit updates of representatives (see Appendix~\ref{sec:zzup-app}).

We first present the update algorithms
for outward expansion and contraction 
and then present the algorithms for the remaining operations.
Due to space restrictions,
algorithms for some operations are put into Appendix~\ref{sec:update-alg-cont}.
We begin by
laying some foundations 
for the update algorithms
in Section~\ref{sec:principles},
where we formally present the definition of representatives 
(adapted from~\cite{maria2014zigzag}; see Definition~\ref{dfn:rep-seq}).
Notations for all operations adopted in Section~\ref{sec:update-oper} are retained, 
e.g.,
$\Fcal$ and $\Fcal'$
denote the filtration before and after the update respectively.
Before the update,
we assume that we are given the barcode $\Pers_*(\Fcal)$
and the representatives for their intervals.
Our goal is to compute $\Pers_*(\Fcal')$ and the representatives for
$\Pers_*(\Fcal')$ based on what is given.
This is achieved by adjusting the pairing of birth and death indices 
for $\Fcal'$ so that we can identify representatives
for every interval induced from the pairing.
Proposition~\ref{prop:pn-paring-w-rep} in Section~\ref{sec:principles}
justifies such an approach.
Hence, the correctness of the algorithms in this section follows
from the correctness of the representatives being updated,
which is implicit in our description.

\subsection{Principles of representative-based updates}
\label{sec:principles}

We first present the following proposition useful to many of the update algorithms:

\begin{proposition}\label{prop:homolog-cyc-no-new-simp}
For a simplex-wise inclusion $X\inctosp{\sG}X'$
of two complexes,
let $z$ be a cycle in $X'$ homologous to 
a cycle in $X$.
Then, $\sG\not\in z$.
\end{proposition}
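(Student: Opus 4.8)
The plan is to read off the coefficient of $\sG$ in $z$ and show it must be zero, using only that we work over $\Zbb_2$ and that simplicial complexes are downward closed. First I would dispose of the trivial case: if $z$ is a $\Dim$-cycle and $\dim\sG\neq\Dim$, then $\sG$ cannot occur in the $\Dim$-chain $z$ and we are done. So assume $\dim\sG=\Dim$, write $z=z_0+\eG\,\sG$ with $z_0\in\Chn_\Dim(X)$ a $\Dim$-chain not containing $\sG$ and $\eG\in\Zbb_2$ the coefficient of $\sG$ in $z$, and aim to prove $\eG=0$.

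The key observation is about the inclusion $X\inctosp{\sG}X'$. Since it adds the single $\Dim$-simplex $\sG$, the complexes $X$ and $X'$ have exactly the same simplices in dimension $\Dim+1$; in particular $\Chn_{\Dim+1}(X)=\Chn_{\Dim+1}(X')$. Moreover, no $(\Dim+1)$-simplex $\tG$ of $X$ can have $\sG$ as a face, since otherwise $\sG\in X$ by downward closure, contradicting $\sG\in X'\setminus X$. Hence, for every $(\Dim+1)$-chain $c$ of $X'$ (equivalently, of $X$), the boundary $\partial c$ is a $\Dim$-chain supported on $\Dim$-faces of $(\Dim+1)$-simplices of $X$, none of which equals $\sG$; that is, the coefficient of $\sG$ in $\partial c$ is $0$.

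Now invoke the hypothesis: as $z$ is homologous (in $X'$) to a cycle $w\in\Cyc_\Dim(X)$, there is a $(\Dim+1)$-chain $c$ in $X'$ with $z+w=\partial c$. Comparing the $\sG$-coefficients of both sides: $w$ contributes $0$ because $w$ is a chain in $X$ and $\sG\notin X$; $\partial c$ contributes $0$ by the previous paragraph; therefore the $\sG$-coefficient of $z$ is $0$, i.e.\ $\eG=0$, so $\sG\notin z$. There is no real obstacle in this argument — it is a few lines once the bookkeeping is fixed; the only point deserving an explicit sentence is the middle step, namely that attaching a single $\Dim$-simplex leaves $\Chn_{\Dim+1}$ unchanged and that downward closure prevents $\sG$ from being a face of any simplex already in $X$.
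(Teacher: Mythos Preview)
Your proof is correct and follows essentially the same approach as the paper's: both argue that $\sG$ has no cofaces in $X'$ (you derive this from downward closure and the fact that $X'\setminus X=\{\sG\}$; the paper states it directly), so $\sG\notin\partial(A)$ for any $(\Dim+1)$-chain $A$, and since the homologous cycle in $X$ also omits $\sG$, so does $z$. The paper's version is just a terser rendition of your argument, omitting the dimension case split and the explicit $\eG$-coefficient bookkeeping.
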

\begin{proof}
Let $y$ be the cycle in $X$ that $z$ is homologous to.
We have $z=y+\partial(A)$ for $A\subseteq X'$.
Since $\sG\not\in y$ and $\sG\not\in\partial(A)$
($\sG$ has no cofaces in $X'$), we have that $\sG\not\in z$.
\end{proof}

Throughout the subsection, let $\Fcal: \emptyset=K_0 \leftrightarrowsp{\fsimp_0} K_1 \leftrightarrowsp{\fsimp_1}
\cdots \leftrightarrowsp{\fsimp_{\filtcnt-1}} K_\filtcnt=\emptyset$ be a simplex-wise filtration
starting and ending with empty complexes.

\begin{definition}[Representative]
\label{dfn:rep-seq}
Let $[b,d]\subseteq\Set{1,\ldots,\filtcnt-1}$ be an interval.
A \defemph{$\Dim$-th representative sequence} 
$($also simply called \defemph{$\Dim$-th representative}$)$
for $[b,d]$ consists of a sequence of $\Dim$-cycles
$\Set{\cyc_i\in \Cyc_\Dim(K_i)\given b\leq i\leq d}$
and a sequence of $(\Dim+1)$-chains
$\Set{\chn_i\given b-1\leq i\leq d}$,
typically denoted as
\[\chn_{b-1}\dashleftarrow\cyc_{b}\rseqlrarr{\chn_{b}}\cdots
\rseqlrarr{\chn_{d-1}}\cyc_{d}\dashrightarrow\chn_{d},\]
such that for each $i$ with $b\leq i<d$:
\begin{itemize}
    \item if $K_i\incto K_{i+1}$ is forward, then $\chn_i\in\Chn_{\Dim+1}(K_{i+1})$
    and $\cyc_i+\cyc_{i+1}=\partial(\chn_i)$ in $K_{i+1}$;
    \item if $K_i\bakincto K_{i+1}$ is backward, then $\chn_i\in\Chn_{\Dim+1}(K_{i})$
    and $\cyc_i+\cyc_{i+1}=\partial(\chn_i)$ in $K_{i}$.
\end{itemize}
Furthermore, the sequence 
satisfies the additional conditions:
\begin{description}
    \item[Birth condition:] \label{itm:dfn-rep-birth}
    If $K_{b-1}\bakinctosp{\fsimp_{b-1}} K_{b}$ is backward,
    then $\cyc_b=\partial(\chn_{b-1})$ 
    for $\chn_{b-1}$ a $(\Dim+1)$-chain in $K_{b-1}$ containing
    $\fsimp_{b-1}$;
    if $K_{b-1}\inctosp{\fsimp_{b-1}} K_{b}$ is forward,
    then $\fsimp_{b-1}\in\cyc_b$
    and $\chn_{b-1}$ is undefined.
    
    \item[Death condition:] \label{itm:dfn-rep-death}
    If $K_{d}\inctosp{\fsimp_{d}} K_{d+1}$ is forward,
    then $\cyc_d=\partial(\chn_{d})$ 
    for $\chn_{d}$ a $(\Dim+1)$-chain in $K_{d+1}$ containing
    $\fsimp_{d}$;
    if $K_{d}\bakinctosp{\fsimp_{d}} K_{d+1}$ is backward,
    then $\fsimp_{d}\in\cyc_d$
    and $\chn_{d}$ is undefined.
\end{description}
Moreover, 
we relax the above definition and define a \defemph{post-birth} 
representative sequence for $[b,d]$ by ignoring the death condition.
Similarly, we define a \defemph{pre-death} 
representative sequence for $[b,d]$ by ignoring the birth condition.
\end{definition}
We sometimes ignore the undefined chains 
(e.g., $\chn_{b-1}$ or $\chn_{d}$) for $[b,d]$ when denoting
a representative sequence.
Also, the cycle $z_i$ in the above definition is
called the {\it representative $\Dim$-cycle at} index $i$
for $[b,d]$.

The following proposition from~\cite{dey2021graph} 
says that as long as 
one has a pairing of the birth and death indices s.t.\ each interval
induced by the pairing has a representative sequence,
one has the barcode.

\begin{proposition}\label{prop:pn-paring-w-rep}
Let $\pi:\pinds(\Fcal)\to\ninds(\Fcal)$ be a bijection.
If every $b\in\pinds(\Fcal)$ satisfies that $b\leq\pi(b)$ and the interval $[b,\pi(b)]$
has a representative sequence,
then $\Pers_*(\Fcal)=\Set{[b,\pi(b)]\given b\in\pinds(\Fcal)}$.
\end{proposition}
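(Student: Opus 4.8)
The plan is to reduce to the known decomposition of the zigzag module and then show that the given representative sequences furnish enough structure to pin down the barcode uniquely. Recall that $\Hm_*(\Fcal)\simeq\bigoplus_{k\in\LG}\Ical^{[\birth_k,\death_k]}$ and $\Pers_*(\Fcal)=\Set{[\birth_k,\death_k]\given k\in\LG}$ is well-defined. Since $\Fcal$ is simplex-wise and starts/ends with empty complexes, each arrow of $\Hm_*(\Fcal)$ is injective with one-dimensional cokernel or surjective with one-dimensional kernel, so $\bigCard{\pinds(\Fcal)}=\bigCard{\ninds(\Fcal)}$ and a bijection $\pi$ as hypothesized exists. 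I want to argue that the multiset $\Set{[b,\pi(b)]\given b\in\pinds(\Fcal)}$ coincides with $\Pers_*(\Fcal)$.

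First I would observe a counting constraint that is forced locally by the filtration: for any index $j$, the number of intervals in $\Pers_*(\Fcal)$ containing the ``edge'' between $K_j$ and $K_{j+1}$ equals $\dim\Ima\big(\Hm_*(K_j)\leftrightarrow\Hm_*(K_{j+1})\big)$ (or more simply, the matching between births and deaths is determined slot-by-slot by which arrows are births and which are deaths). Both candidate interval sets have the same endpoints' multisets $\pinds(\Fcal)$ as left-endpoints-minus-one and $\ninds(\Fcal)$ as right-endpoints, by definition of a birth/death index and by the hypothesis on $\pi$. So the two multisets of intervals have the same ``profile'' of starts and ends; what must be shown is that the pairing of starts with ends agrees.

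The main step, and the one I expect to be the real obstacle, is to show that a representative sequence for $[b,\pi(b)]$ actually certifies that the interval module $\Ical^{[b,\pi(b)]}$ appears as a summand. Concretely, from a representative sequence $\chn_{b-1}\dashleftarrow\cyc_b\rseqlrarr{\chn_b}\cdots\rseqlrarr{\chn_{\pi(b)-1}}\cyc_{\pi(b)}\dashrightarrow\chn_{\pi(b)}$ one reads off, for each $i\in[b,\pi(b)]$, a homology class $\cl{\cyc_i}\in\Hm_*(K_i)$; the chain conditions $\cyc_i+\cyc_{i+1}=\partial(\chn_i)$ say precisely that these classes are carried to one another by the induced maps, i.e. they assemble into a nonzero morphism $\Ical^{[b,\pi(b)]}\to\Hm_*(\Fcal)$ of zigzag modules. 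The birth condition forces $\cl{\cyc_b}$ to be ``new'' at index $b$ (either $\cyc_b$ contains the just-added simplex $\fsimp_{b-1}$, so it is not in the image of $\Hm_*(K_{b-1})$, or $\cyc_b$ bounds a chain using the just-deleted simplex, so it dies going backward into $K_{b-1}$); symmetrically the death condition forces the class to vanish just past $\pi(b)$. Using the interval decomposition, a nonzero map from $\Ical^{[b,\pi(b)]}$ whose source is born exactly at $b$ and dies exactly at $\pi(b)$ must hit a summand with interval $[b,\pi(b)]$ --- this is the standard fact that $\Hom(\Ical^{[b,d]},\Ical^{[b',d']})\neq 0$ only when $[b,d]$ and $[b',d']$ overlap compatibly, and the birth/death conditions rule out all cases except $[b',d']=[b,\pi(b)]$. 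Hence each $b\in\pinds(\Fcal)$ contributes at least one copy of $[b,\pi(b)]$ to $\Pers_*(\Fcal)$.

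Finally I would close the argument by a cardinality/matching count: $\bigCard{\Pers_*(\Fcal)}=\bigCard{\pinds(\Fcal)}$ (each interval has a unique creator), the assignment $b\mapsto[b,\pi(b)]$ is injective (distinct $b$ give distinct left endpoints), and we have just shown each such interval lies in $\Pers_*(\Fcal)$; since $\pi$ is a bijection the number of pairs equals $\bigCard{\pinds(\Fcal)}=\bigCard{\Pers_*(\Fcal)}$, so the inclusion of multisets is an equality, giving $\Pers_*(\Fcal)=\Set{[b,\pi(b)]\given b\in\pinds(\Fcal)}$. The delicate point throughout is handling the backward arrows uniformly with the forward ones when translating the chain-level conditions into statements about the induced maps on homology and about ``birth''/``death'' of a class; I would isolate that as a short lemma (essentially the content of Definition~\ref{dfn:rep-seq} read on homology) and cite~\cite{dey2021graph} for the precise homological bookkeeping, since the proposition is quoted from there.
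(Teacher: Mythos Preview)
The paper does not prove Proposition~\ref{prop:pn-paring-w-rep}; it simply attributes the statement to~\cite{dey2021graph} and proceeds. So there is no in-paper argument to compare against, and your closing remark (defer to~\cite{dey2021graph}) is literally what the paper does.

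That said, your outline has a genuine gap at its central step. You assert that for each $b$ \emph{individually}, the morphism $\varphi_b:\Ical^{[b,\pi(b)]}\to\Hm_*(\Fcal)$ together with the birth and death conditions forces some summand of $\Hm_*(\Fcal)$ to have interval exactly $[b,\pi(b)]$. This per-interval claim is false. Take a standard (all-forward) module $\Hm_*(\Fcal)\cong\Ical^{[1,4]}\oplus\Ical^{[2,3]}$. A nonzero morphism $\varphi:\Ical^{[2,4]}\to\Hm_*(\Fcal)$ exists with nonzero component in each summand; the $\Ical^{[2,3]}$-component makes $\varphi(2)$ lie outside the image from index~$1$ (so a chain-level representative with $\fsimp_1\in z_2$ exists), while the $\Ical^{[1,4]}$-component makes $\varphi(4)\neq 0$ lie in the kernel toward index~$5$ (so $z_4=\partial(c_4)$ with $\fsimp_4\in c_4$ exists). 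Thus $[2,4]$ admits a representative sequence satisfying both endpoint conditions even though $[2,4]\notin\Pers_*(\Fcal)$. The birth condition pins down \emph{some} summand starting at $b$ and the death condition pins down \emph{some} summand ending at $\pi(b)$, but nothing in your argument forces these to be the same summand; your cardinality closure then rests on an unestablished inclusion.

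The repair is to argue globally rather than one $b$ at a time. From the birth condition plus the $\Hom$-vanishing constraint you do get, for every $b\in\pinds(\Fcal)$, that the true partner $\rho(b)$ satisfies $\rho(b)\le\pi(b)$ (in the zigzag case, $\rho(b)$ is dominated by $\pi(b)$ in the death order $\dles$ of Definition~\ref{dfn:bd-ordere}). Since $\rho$ and $\pi$ are both bijections from $\pinds(\Fcal)$ onto the same finite totally ordered set $\ninds(\Fcal)$, pointwise domination forces $\rho=\pi$. Making this precise in the zigzag setting requires writing down when $\Hom(\Ical^{[b,d]},\Ical^{[b',d']})\neq 0$ in terms of $\bles$ and $\dles$; that is the substantive missing ingredient, not merely bookkeeping.
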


\begin{definition}[Birth/death order~\cite{maria2014zigzag}]
\label{dfn:bd-ordere}
Define a total order `$\bles$' for the \defemph{birth} indices in $\Fcal$.
For two indices $b_1,b_2\in \Set{1,\ldots,\filtcnt-1}$
s.t.\ $b_1\neq b_2$,
one has that
$b_1\bles b_2$
iff:
    (i)
    $b_1<b_2$ and 
    $K_{b_2-1}\incto K_{b_2}$ is forward,
    or (ii) $b_1>b_2$ and 
    $K_{b_1-1}\bakincto K_{b_1}$ is backward.
Symmetrically,
define a total order `$\dles$' for the \defemph{death} indices in $\Fcal$.
For two indices $d_1,d_2\in \Set{1,\ldots,\filtcnt-1}$
s.t.\ $d_1\neq d_2$,
one has that $d_1\dles d_2$
iff:
    (i)
    $d_1>d_2$ and 
    $K_{d_2}\bakincto K_{d_2+1}$ is backward,
    or (ii) $d_1<d_2$ and 
    $K_{d_1}\incto K_{d_1+1}$ is forward.
\end{definition}
The motivation behind the above orders is as follows:
for two intervals $[b_1,i],[b_2,i]$ s.t.\ $b_1\bles b_2$,
a post-birth representative for $[b_1,i]$ can always be
`added to' a post-birth representative for $[b_2,i]$ (see Section~\ref{sec:rep-oper}).
A similar fact holds for the order `$\dles$'.

\begin{definition}
Two non-disjoint intervals $[b_1,d_1],[b_2,d_2]\subseteq\Set{1,\ldots,\filtcnt-1}$ %
are called \defemph{comparable} if $b_1\bles b_2$ and $d_1\dles d_2$,
or $b_2\bles b_1$ and $d_2\dles d_1$.
Also, we use `$[b_1,d_1]\iles [b_2,d_2]$' to denote the situation
that $b_1\bles b_2$ and $d_1\dles d_2$.
\end{definition}

\subsubsection{Operations on representatives}
\label{sec:rep-oper}

We present some operations on representative sequences useful for 
the update algorithms.

\paragraph{Sum for post-birth representatives.}
For the following $\Dim$-th post-birth representatives 
\begin{align*}
\begin{split}
\rseq_1:\chn_{b_1-1}\dashleftarrow\cyc_{b_1}\rseqlrarr{\chn_{b_1}}\cdots
\rseqlrarr{\chn_{i-1}}\cyc_{i},\quad
\rseq_2:\chn'_{b_2-1}\dashleftarrow\cyc'_{b_2}\rseqlrarr{\chn'_{b_2}}\cdots
\rseqlrarr{\chn'_{i-1}}\cyc'_{i}
\end{split}
\end{align*}
for two intervals $[b_1,i],[b_2,i]\subseteq\Set{1,\ldots,\filtcnt-1}$
where $b_1\bles b_2$,
we define a \emph{sum} of $\rseq_1$ and $\rseq_2$, denoted $\rseq_1\brepsum\rseq_2$.
If $b_1<b_2$ 
(i.e., $K_{b_2-1}\incto K_{b_2}$ is forward), 
then $\rseq_1\brepsum\rseq_2$ is defined as:
\[\rseq_1\brepsum\rseq_2:\cyc_{b_2}+\cyc'_{b_2}
\rseqlrarr{\chn_{b_2}+\chn'_{b_2}}\cdots
\rseqlrarr{\chn_{i-1}+\chn'_{i-1}}\cyc_{i}+\cyc'_{i};\]
if $b_1>b_2$ 
(i.e., $K_{b_1-1}\bakincto K_{b_1}$ is backward), 
then $\rseq_1\brepsum\rseq_2$ is defined as:
\begin{align*}
\begin{split}
\rseq_1\brepsum\rseq_2:
\chn'_{b_2-1}\dashleftarrow\cyc'_{b_2}\rseqlrarr{\chn'_{b_2}}\cdots
\rseqlrarr{\chn'_{b_1-2}}\cyc'_{b_1-1}
\rseqlarr{\chn_{b_1-1}+\chn'_{b_1-1}}\cyc_{b_1}+\cyc'_{b_1}
\rseqlrarr{\chn_{b_1}+\chn'_{b_1}}&\\
\cdots
\rseqlrarr{\chn_{i-1}+\chn'_{i-1}}\cyc_{i}+\cyc'_{i}&.
\end{split}
\end{align*}
It can be verified that $\rseq_1\brepsum\rseq_2$ 
is a $\Dim$-th post-birth representative for $[b_2,i]$.
For example, when $b_1<b_2$, 
since
$\fsimp_{b_2-1}\not\in\cyc_{b_2}$ (Proposition~\ref{prop:homolog-cyc-no-new-simp})
and $\fsimp_{b_2-1}\in\cyc'_{b_2}$,
we have that $\fsimp_{b_2-1}\in\cyc_{b_2}+\cyc'_{b_2}$.

\paragraph{Sum for pre-death representatives.} Symmetrically, for 
$\Dim$-th pre-death representatives
\begin{align*}
\begin{split}
\rseq_1:\cyc_{i}\rseqlrarr{\chn_{i}}\cdots
\rseqlrarr{\chn_{d_1-1}}\cyc_{d_1}\dashrightarrow\chn_{d_1},\quad
\rseq_2:\cyc'_{i}\rseqlrarr{\chn'_{i}}\cdots
\rseqlrarr{\chn'_{d_2-1}}\cyc'_{d_2}\dashrightarrow\chn'_{d_2}
\end{split}
\end{align*}
for intervals $[i,d_1]$, $[i,d_2]$
s.t.\ $d_1\dles d_2$,
we define a \emph{sum} $\rseq_1\drepsum\rseq_2$
as a $\Dim$-th pre-death representative for $[i,d_2]$.
If $d_1>d_2$ 
(i.e., $K_{d_2}\bakincto K_{d_2+1}$ is backward), 
then $\rseq_1\drepsum\rseq_2$ is:
\[\rseq_1\drepsum\rseq_2:\cyc_{i}+\cyc'_{i}
\rseqlrarr{\chn_{i}+\chn'_{i}}\cdots
\rseqlrarr{\chn_{d_2-1}+\chn'_{d_2-1}}\cyc_{d_2}+\cyc'_{d_2};\]
if $d_1<d_2$ 
(i.e., $K_{d_1}\incto K_{d_1+1}$ is forward), 
then $\rseq_1\drepsum\rseq_2$ is:
\begin{align*}
\begin{split}
\rseq_1\drepsum\rseq_2:
\cyc_{i}+\cyc'_{i}\rseqlrarr{\chn_{i}+\chn'_{i}}\cdots
\rseqlrarr{\chn_{d_1-1}+\chn'_{d_1-1}}\cyc_{d_1}+\cyc'_{d_1}
\rseqrarr{\chn_{d_1}+\chn'_{d_1}}\cyc'_{d_1+1}
\rseqlrarr{\chn'_{d_1+1}}&\\
\cdots
\rseqlrarr{\chn'_{d_2-1}}\cyc'_{d_2}\dashrightarrow\chn'_{d_2}&.
\end{split}
\end{align*}

\paragraph{Concatenation.}
Let $\rseq_1$ be a $\Dim$-th  post-birth representative for $[b,i]$
and $\rseq_2$ be a $\Dim$-th  pre-death representative for $[i,d]$,
which are of the forms:
\[\rseq_1:\chn_{b-1}\dashleftarrow\cyc_{b}\rseqlrarr{\chn_{b}}\cdots
\rseqlrarr{\chn_{i-1}}\cyc_{i},\quad
\rseq_2:\cyc'_{i}\rseqlrarr{\chn'_{i}}\cdots
\rseqlrarr{\chn'_{d-1}}\cyc'_{d}\dashrightarrow\chn'_{d}.
\]
If $\cyc_{i}$ is homologous to $\cyc'_{i}$ in $K_i$,
i.e., $\cyc_i+\cyc'_i=\partial(A)$ for $A \in \Chn_{\Dim+1}(K_i)$,
then we define a \emph{concatenation} of $\rseq_1$ and $\rseq_2$,
denoted $\rseq_1\rconcat\rseq_2$,
as:
\[
\rseq_1\rconcat\rseq_2:\chn_{b-1}\dashleftarrow\cyc_{b}\rseqlrarr{\chn_{b}}\cdots
\rseqlrarr{\chn_{i-2}}\cyc_{i-1}
\rseqlrarr{\chn_{i-1}+A}\cyc'_{i}\rseqlrarr{\chn'_{i}}\cdots
\rseqlrarr{\chn'_{d-1}}\cyc'_{d}\dashrightarrow\chn'_{d}.
\]
Notice that $\rseq_1\rconcat\rseq_2$ is a 
$\Dim$-th representative sequence for $[b,d]$.

\paragraph{Prefix, suffix, and sum for representatives.}
Let 
\[\rseq:\chn_{b-1}\dashleftarrow\cyc_{b}\rseqlrarr{\chn_{b}}\cdots
\rseqlrarr{\chn_{d-1}}\cyc_{d}\dashrightarrow\chn_{d}\]
be a $\Dim$-th representative sequence for an interval 
$[b,d]\subseteq\Set{1,\ldots,\filtcnt-1}$. For an index $i\in[b,d]$,
define a {\it prefix} $\rseq\rsprefix{i}$ as
a $\Dim$-th post-birth representative for $[b,i]$:
\[\rseq\rsprefix{i}:\chn_{b-1}\dashleftarrow\cyc_{b}\rseqlrarr{\chn_{b}}\cdots
\rseqlrarr{\chn_{i-1}}\cyc_{i}.\]
Similarly, define a {\it suffix} $\rseq\rssuffix{i}$ as
a $\Dim$-th pre-death representative for $[i,d]$:
\[\rseq\rssuffix{i}:\cyc_{i}\rseqlrarr{\chn_{i}}\cdots
\rseqlrarr{\chn_{d-1}}\cyc_{d}\dashrightarrow\chn_{d}.\]

Let $[b_1,d_1],[b_2,d_2]\subseteq\Set{1,\ldots,\filtcnt-1}$ 
be two intervals containing a common index $i$
and let $\rseq_1,\rseq_2$
be $\Dim$-th representative sequences for $[b_1,d_1],[b_2,d_2]$
respectively.
We define a \emph{sum} of $\rseq_1$ and $\rseq_2$, denoted $\rseq_1\repsum\rseq_2$,
as a $\Dim$-th representative sequence for the interval 
$[\Max_{\bles}\Set{b_1,b_2},\Max_{\dles}\Set{d_1,d_2}]$:
\[\rseq_1\repsum\rseq_2:=\big(\rseq_1\rsprefix{i}\brepsum\rseq_2\rsprefix{i}\big)
\rconcat\big(\rseq_1\rssuffix{i}\drepsum\rseq_2\rssuffix{i}\big).\]
Notice that the values of $\rseq_1\repsum\rseq_2$ are indeed irrelevant to the choice of $i$.
Specifically, if $[b_1,d_1]\iles [b_2,d_2]$,
then $\rseq_1\repsum\rseq_2$ is a $\Dim$-th representative
for $[b_2,d_2]$.

\subsubsection{Data structures for representatives}

We use a simple data structure to implement
a $\Dim$-th representative sequence
for an interval $[b,d]$. Using an array,
each index $i\in[b,d]$ is associated with a pointer 
to the $\Dim$-cycle at $i$. 
Notice that consecutive indices in $[b,d]$ may be associated with
the same $\Dim$-cycle. In this case, to save memory space, 
we let the pointers for these indices point to the same 
memory location.
We also do the similar thing for the $(\Dim+1)$-chains.
Let $\simpcnt$ be the number of simplices in $\bigunion_{i=0}^\filtcnt K_i$.
Then, the summation of two representative sequences takes $O(\filtcnt\simpcnt)$ time
because $[b,d]$ has $O(\filtcnt)$ indices
and adding two cycles or chains at a index
takes $O(\simpcnt)$ time.

\subsection{Outward expansion}
\label{sec:out-expan}

Recall that an outward expansion is the following operation:

\vspace{-0.5em}
\begin{center}
\begin{tikzpicture}
\node (a) at (0,0) {$\Fcal: K_0 \leftrightarrow
\cdots\leftrightarrow K_{i-2}
\leftrightarrow 
K_i 
\leftrightarrow K_{i+2}\leftrightarrow
\cdots \leftrightarrow K_\filtcnt$}; 
\node (b) at (0,-0.6){$\Fcal': K_0 \leftrightarrow
\cdots\leftrightarrow K_{i-2}
\leftrightarrow 
K'_{i-1}
\bakinctosp{\sG} 
K'_i 
\inctosp{\sG} 
K'_{i+1}
\leftrightarrow K_{i+2}\leftrightarrow
\cdots \leftrightarrow K_\filtcnt$};
\draw[->,arrows={-latex},dashed] (a.0) .. controls (+6.5,+0) and (+7,-0.35) .. (b.0);
\end{tikzpicture}
\end{center}

\vspace{-1em}\noindent
where $K'_{i-1}=K_{i}=K'_{i+1}$.
We also assume that $\sG$ is a $\Dim$-simplex.
Notice that
indices for $\Fcal$ are nonconsecutive
in which $i-1$ and $i+1$ are skipped.

For the update,
we first determine whether the induced map 
$\Hm_*(K'_{i-1})\leftarrow \Hm_*(K'_i)$
is injective
or 
surjective
by checking whether 
$\sG$ is in a $\Dim$-cycle in $K'_{i-1}$
(injective)
or not
(surjective).
Let $\Set{I_j\given j\in\Bcal}$ be the set of 
intervals in $\Pers_\Dim(\Fcal)$ containing $i$, where $\Bcal$
is an indexing set.
Also, let $\cyc^j_i$ be the representative $\Dim$-cycle at index $i$
for $I_j$.
Note that $\Set{[\cyc^j_i]\given j\in\Bcal}$ is a basis for
$\Hm_\Dim(K_i)=\Hm_\Dim(K'_{i-1})$.
Then, we claim that (i) $\sG$ is in a $\Dim$-cycle in $K'_{i-1}$
$\Leftrightarrow$
(ii) $\sG\in\cyc^j_i$ for a $j\in\Bcal$.
Hence, to determine the injectivity/surjectivity,
we only need to check condition (ii).
To prove the claim, let $\cyc\subseteq K'_{i-1}$ be a $\Dim$-cycle containing $\sG$.
Then, $\cyc=\sum_{j\in\LG}\cyc^j_i+x$, where $\LG\subseteq\Bcal$
and $x$ is a $\Dim$-boundary in $K'_{i-1}$.
We have that $\sG\not\in x$ because $\sG$ has no cofaces in $K'_{i-1}$.
Hence, $\sG\in\sum_{j\in\LG}\cyc^j_i$, which implies condition (ii).
This proves the `only if' part of the claim,
and the proof for the `if' part is obvious.

\subsubsection{$\Hm_*(K'_{i-1})\leftarrow \Hm_*(K'_i)$ is surjective}
\label{sec:out-expan-surj}
The only difference of $\Pers_*(\Fcal)$ and $\Pers_*(\Fcal')$
in this case 
is that 
there is a new interval
$[i,i]$ in $\Pers_*(\Fcal')$
with the representative $\chn_{i-1}\dashleftarrow\cyc_{i}\dashrightarrow\chn_{i}$,
where $\cyc_i=\partial(\sG)$ and $\chn_{i-1}=\chn_{i}=\sG$.
Let $[b,d]$ be an interval in $\Pers_*(\Fcal)$.
If $i\not\in[b,d]$, 
the representative for $[b,d]\in\Pers_*(\Fcal)$
can be directly used as a representative for $[b,d]\in\Pers_*(\Fcal')$.
If $i\in[b,d]$, 
let 
\[\rseq:\chn_{b-1}\dashleftarrow\cyc_{b}\rseqlrarr{\chn_{b}}\cdots
\rseqlrarr{\chn_{i-3}}\cyc_{i-2}
\rseqlrarr{\chn_{i-2}}\cyc_{i}
\rseqlrarr{\chn_{i}}\cyc_{i+2}
\rseqlrarr{\chn_{i+2}}
\cdots
\rseqlrarr{\chn_{d-1}}\cyc_{d}\dashrightarrow\chn_{d}\]
be the representative for $[b,d]\in\Pers_*(\Fcal)$.
Then, the representative for $[b,d]\in\Pers_*(\Fcal')$
is updated to the following:
\[
\cdots
\rseqlrarr{\chn_{i-3}}\cyc_{i-2}
\rseqlrarr{\chn_{i-2}}\cyc'_{i-1}
\rseqlarr{0}\cyc_{i}
\rseqrarr{0}\cyc'_{i+1}
\rseqlrarr{\chn'_{i+1}}\cyc_{i+2}
\rseqlrarr{\chn_{i+2}}
\cdots,
\]
where $\cyc'_{i-1}=\cyc'_{i+1}:=\cyc_{i}$,
$\chn'_{i+1}:=\chn_{i}$, and the remaining cycles/chains
are as in $\rseq$.

\subsubsection{$\Hm_*(K'_{i-1})\leftarrow \Hm_*(K'_i)$ is injective}

In this case, 
$\pinds(\Fcal')=\pinds(\Fcal)\union\Set{i+1}$ and
$\ninds(\Fcal')=\ninds(\Fcal)\union\Set{i-1}$.
In order to obtain $\Pers_*(\Fcal')$,
we need to find `pairings' for the death index $i-1$
and birth index $i+1$ in $\Fcal'$.
Let $\Set{I_j\given j\in\Bcal}$ be the set of 
intervals in $\Pers_\Dim(\Fcal)$ containing $i$, where $\Bcal$
is an indexing set,
and let $\cyc^j_i$ be the representative $\Dim$-cycle at index $i$
for $I_j$.
Moreover, let $\LG:=\Set{j\in\Bcal\given\sG\in\cyc^j_i}$,
and for each $j\in\LG$, let $\tilde{\rseq}_j$ be the $\Dim$-th representative sequence
for $I_j$.
We do the following: 
\begin{itemize}
    \item Whenever there exist $j,k\in\LG$
s.t.\ $I_j\iles I_k$, 
update the representative for $I_k$ 
as $\tilde{\rseq}_j\repsum{\tilde{\rseq}}_k$,
and delete $k$ from $\LG$.
Note that the $\Dim$-cycle at index $i$ in $\tilde{\rseq}_j\repsum{\tilde{\rseq}}_k$
does not contain $\sG$.
\end{itemize}

After the above operations, we have that no two intervals in 
$\Set{I_j\given j\in\LG}$ are comparable. 
We then rewrite the intervals in $\Set{I_j\given j\in\LG}$ as 
\[[b_1,d_1],[b_2,d_2],\ldots,[b_\ell,d_\ell]\text{ s.t.\ }b_1\bles b_2\bles\cdots\bles b_\ell.\]

Also, for each $j$, let $\rseq_j$ be the $\Dim$-th representative sequence
for $[b_j,d_j]\in\Pers_*(\Fcal)$.

For $j\leftarrow 1,\ldots,\ell-1$, we do the following:
\begin{itemize}
    \item Note that $d_{j+1}\dles d_j$
because otherwise $[b_j,d_j]$ and $[b_{j+1},d_{j+1}]$ would be comparable.
Then, let $[b_{j+1},d_j]$ form an interval in $\Pers_*(\Fcal')$.
The representative is set as follows:
since $\rseq_j\repsum\rseq_{j+1}$ is a representative for $[b_{j+1},d_j]$
in $\Fcal$,
in which the $\Dim$-cycle at index $i$ does not contain $\sG$,
$\rseq_j\repsum\rseq_{j+1}$ can be `expanded' to become 
a representative for $[b_{j+1},d_j]\in\Pers_*(\Fcal')$
as done in Section~\ref{sec:out-expan-surj}.
\end{itemize}

After this, let $[b_1,i-1]$ and $[i+1,d_\ell]$ form two intervals in $\Pers_*(\Fcal')$
with representatives $\rseq_1\rsprefix{i}$ and $\rseq_\ell\rssuffix{i}$
respectively.

Finally,
all the intervals in $\Pers_*(\Fcal)$
that are not `touched' in the previous steps
are carried into $\Pers_*(\Fcal')$.
The updates of representatives for these intervals
remain the same as described in Section~\ref{sec:out-expan-surj}.

\subsubsection{Time complexity}
\label{sec:out-expan-complexity}
Determining injectivity/surjectivity at the beginning takes $O(\filtcnt+\simpcnt\log \simpcnt)$
time.
Representative update for each interval containing $i$
in Section~\ref{sec:out-expan-surj}
takes $O(\filtcnt)$ time,
and there are no more than $\simpcnt$ intervals containing $i$,
so the total time spent on the surjective case is $O(\filtcnt\simpcnt)$.
The bottleneck of the injective case is the two loops,
both of which take $O(\filtcnt\simpcnt^2)$ time.
Hence, the outward expansion takes $O(\filtcnt\simpcnt^2)$ time.

\subsection{Outward contraction}
\label{sec:out-contra}

Recall that an outward contraction is the following operation:

\vspace{-1em}
\begin{center}
\begin{tikzpicture}
\node (a) at (0,0) {$\Fcal: K_0 \leftrightarrow
\cdots\leftrightarrow K_{i-2}
\leftrightarrow 
K_{i-1}\bakinctosp{\sG} 
K_i 
\inctosp{\sG} K_{i+1}
\leftrightarrow K_{i+2}\leftrightarrow
\cdots \leftrightarrow K_\filtcnt$}; 
\node (b) at (0,-0.6){$\Fcal': K_0 \leftrightarrow
\cdots\leftrightarrow K_{i-2}
\leftrightarrow 
K'_{i}
\leftrightarrow K_{i+2}\leftrightarrow
\cdots \leftrightarrow K_\filtcnt$};
\draw[->,arrows={-latex},dashed] (a.0) .. controls (+7,-0.2) and (+6.5,-0.6) .. (b.0);
\end{tikzpicture}
\end{center}

\vspace{-1em}\noindent
where $K'_{i}=K_{i-1}=K_{i+1}$.
We also assume that  $\sG$ is a $\Dim$-simplex.
Notice that the indices for $\Fcal'$ are not consecutive,
i.e., $i-1$ and $i+1$ are skipped.

For the update,
we first determine whether the induced map 
$\Hm_*(K_{i-1})\leftarrow \Hm_*(K_i)$
is injective
or 
surjective
by checking whether 
$i-1$ is a death index in $\Fcal$
(injective)
or $i$ is a birth index in $\Fcal$
(surjective).

\subsubsection{$\Hm_*(K_{i-1})\leftarrow \Hm_*(K_i)$ is surjective}
\label{sec:out-contra-surj}
Since outward contractions are
inverses of outward expansions (see Section~\ref{sec:out-expan}),
the only difference of $\Pers_*(\Fcal)$ and $\Pers_*(\Fcal')$
in this case 
is that 
$[i,i]\in\Pers_*(\Fcal)$ is deleted in $\Pers_*(\Fcal')$.
Let $[b,d]\neq [i,i]$ be an interval in $\Pers_*(\Fcal)$.
If $i\not\in[b,d]$, i.e., $b>i$ or $d<i$, then since $b\neq i+1$
and $d\neq i-1$, we have that $b\geq i+2$ or $d\leq i-2$.
So the representative for $[b,d]\in\Pers_*(\Fcal)$
can be directly used as a representative for $[b,d]\in\Pers_*(\Fcal')$.
If $i\in[b,d]$, then 
suppose that 
\[\chn_{b-1}\dashleftarrow\cyc_{b}\rseqlrarr{\chn_{b}}\cdots
\rseqlrarr{\chn_{d-1}}\cyc_{d}\dashrightarrow\chn_{d}\]
is the representative for $[b,d]\in\Pers_*(\Fcal)$, 
which needs to be updated to the following for $[b,d]\in\Pers_*(\Fcal')$:
\[\chn_{b-1}\dashleftarrow\cyc_{b}\rseqlrarr{\chn_{b}}\cdots
\rseqlrarr{\chn_{i-3}}\cyc_{i-2}
\rseqlrarr{\chn_{i-2}+\chn_{i-1}}\cyc_{i}
\rseqlrarr{\chn_{i}+\chn_{i+1}}\cyc_{i+2}
\rseqlrarr{\chn_{i+2}}
\cdots
\rseqlrarr{\chn_{d-1}}\cyc_{d}\dashrightarrow\chn_{d}.\]

\subsubsection{$\Hm_*(K_{i-1})\leftarrow \Hm_*(K_i)$ is injective}

\paragraph{Step I.}
In this case, $i-1\in\ninds(\Fcal)$, $i+1\in\pinds(\Fcal)$,
$\ninds(\Fcal')=\ninds(\Fcal)\setminus\Set{i-1}$,
and $\pinds(\Fcal')=\pinds(\Fcal)\setminus\Set{i+1}$.
Let $[b_*,i-1]$ and $[i+1,d_\circ]$ be the $\Dim$-th intervals in $\Pers_*(\Fcal)$
ending/starting with $i-1$, $i+1$ respectively,
which have the following representatives:
\begin{align*}
\begin{split}
\rseq_*:\chn^*_{b_*-1}\dashleftarrow\cyc^*_{b_*}\rseqlrarr{\chn^*_{b_*}}\cdots
\rseqlrarr{\chn^*_{i-2}}\cyc^*_{i-1},
\\
\rseq_\circ:
\cyc^\circ_{i+1}\rseqlrarr{\chn^\circ_{i+1}}\cdots
\rseqlrarr{\chn^\circ_{d_\circ-1}}\cyc^\circ_{d_\circ}\dashrightarrow\chn^\circ_{d_\circ}.
\end{split}
\end{align*}
Then, let $\Set{[\bG_j,\dG_j]\given j\in\Bcal}$ be the set of intervals in $\Pers_\Dim(\Fcal)$
containing $i+1$, where $\Bcal$ is an indexing set.
Notice that $[i+1,d_\circ]\in\Set{[\bG_j,\dG_j]\given j\in\Bcal}$.
Moreover, for each $j\in\Bcal$, 
denote the $\Dim$-th representative for $[\bG_j,\dG_j]$ as:
\[\tilde{\rseq}_j:\tilde{\chn}^j_{\bG_j-1}\dashleftarrow\tilde{\cyc}^j_{\bG_j}
\rseqlrarr{\tilde{\chn}^j_{\bG_j}}\cdots
\rseqlrarr{\tilde{\chn}^j_{\dG_j-1}}\tilde{\cyc}^j_{\dG_j}\dashrightarrow\tilde{\chn}^j_{\dG_j}.\]
Then, the set of homology classes 
$\Set{[\tilde{\cyc}^j_{i+1}]\given {j\in\Bcal}}$,
which contains $[\cyc^\circ_{i+1}]$,
is a basis for $\Hm_\Dim(K_{i+1})$.
Since $\cyc^*_{i-1}\subseteq K_{i-1}=K_{i+1}$,
we can write $\cyc^*_{i-1}$ as the following sum:
\begin{equation}\label{eqn:z*i-1-sum}
\cyc^*_{i-1}=\cyc^\circ_{i+1}+\sum_{j\in\LG}\tilde{\cyc}^j_{i+1}+C,
\end{equation}
where $\LG\subseteq\Bcal$ and $C$ is the boundary
of a $(p+1)$-chain $A$ in $K_{i+1}$.
The sum in Equation~(\ref{eqn:z*i-1-sum}) must contain
$\cyc^\circ_{i+1}$ because: (i) $\sG\in\cyc^*_{i-1}$ and $\sG\in\cyc^\circ_{i+1}$
by Definition~\ref{dfn:rep-seq};
(ii) no cycle in $\Set{\tilde{\cyc}^j_{i+1}\given j\in\Bcal}$ other than $\cyc^\circ_{i+1}$ contains $\sG$
(Proposition~\ref{prop:homolog-cyc-no-new-simp});
(iii) no boundary in $K_{i+1}$ contains $\sG$ since $\sG$ has no cofaces in $K_{i+1}$.
Equation~(\ref{eqn:z*i-1-sum}) can be executed by first 
computing a boundary basis for $K_{i+1}$, 
which forms a cycle basis for $K_{i+1}$ along with $\Set{\tilde{\cyc}^j_{i+1}\given j\in\Bcal}$, and then performing a Gaussian elimination on the cycle basis.

\paragraph{Step II.}
Do the following: 
\begin{itemize}
    \item Whenever there is a $j\in\LG$
s.t.\ $\bG_j\bles b_*$, 
update the representative for $[b_*,i-1]$ 
as $\rseq_*:=\rseq_*\repsum\tilde{\rseq}_j$.
The update of $\rseq_*$ is valid because
$\dG_j\dles i-1$ ($\dG_j>i-1$
and $K_{i-1}\bakincto K_i$ is backward),
which means that $[\bG_j,\dG_j]\iles[b_*,i-1]$.
Then, delete $j$ from $\LG$.
    \item Similarly,
whenever there is a $j\in\LG$
s.t.\ $\dG_j\dles d_\circ$,
update the representative for $[i+1,d_\circ]$ 
as $\rseq_\circ:=\rseq_\circ\repsum\tilde{\rseq}_j$
because $[\bG_j,\dG_j]\iles[i+1,d_\circ]$.
Then, delete $j$ from $\LG$.
\end{itemize}

Note that Equation~(\ref{eqn:z*i-1-sum}) still holds 
after the above operations.
To see this, suppose that, e.g., there is an $\ell\in\LG$
s.t.\ $\bG_\ell\bles b_*$. We can rewrite
Equation~(\ref{eqn:z*i-1-sum}) as:
\begin{equation*}
\cyc^*_{i-1}+\tilde{\cyc}^\ell_{i-1}=\cyc^\circ_{i+1}+\sum_{j\in\LG\setminus\Set{\ell}}\tilde{\cyc}^j_{i+1}+C+\partial\big(\tilde{\chn}^\ell_{i-1}+\tilde{\chn}^\ell_{i}\big),
\end{equation*}
in which 
$\tilde{\cyc}^\ell_{i-1}=\tilde{\cyc}^\ell_{i+1}+\partial(\tilde{\chn}^\ell_{i-1}+\tilde{\chn}^\ell_{i})$.
Since $\cyc^*_{i-1}+\tilde{\cyc}^\ell_{i-1}$
is the cycle at index $i-1$ for the updated $\rseq_*$ in the iteration,
Equation~(\ref{eqn:z*i-1-sum}) still holds;
but we also need to update $C$ and $A$ in this case.

After the operations in this step, we have that $b_*\bles\bG_j$
and $d_\circ\dles\dG_j$ for any $j\in\LG$.

\paragraph{Step III.}
Rewrite the intervals in $\Set{[\bG_j,\dG_j]\given j\in\LG}$
as 
\[[b_1,d_1],[b_2,d_2],\ldots,[b_\ell,d_\ell]\text{ s.t.\ }b_1\bles b_2\bles\cdots\bles b_\ell.\]

Also, for each $j$ s.t.\ $1\leq j\leq\ell$, 
denote the $\Dim$-th representative for $[b_j,d_j]$ as
\[{\rseq}_j:{\chn}^j_{b_j-1}\dashleftarrow{\cyc}^j_{b_j}
\rseqlrarr{{\chn}^j_{b_j}}\cdots
\rseqlrarr{{\chn}^j_{d_j-1}}{\cyc}^j_{d_j}\dashrightarrow{\chn}^j_{d_j}.\]

Then, Equation~(\ref{eqn:z*i-1-sum}) can be rewritten as
\begin{equation}
\label{eqn:z*i-1-sum-rw}
\cyc^*_{i-1}=\cyc^\circ_{i+1}+\sum_{j=1}^\ell{\cyc}^j_{i+1}+C.
\end{equation}

Next, we pair the birth indices ${b_*,b_1,\ldots,b_\ell}$
with the death indices ${d_\circ,d_1,\ldots,d_\ell}$
to form intervals for $\Pers_*(\Fcal')$.
Initially, all these indices are `unpaired'.
We first pair $b_*$ with ${d}^\star=\Max_{\dles}\Set{d_1,\ldots,d_\ell}$
(and hence $b_*$, ${d^\star}$ become `paired')
to form an interval $[b_*,{d}^\star]\in\Pers_*(\Fcal')$,
with the following representative:
\begin{equation}
\label{eqn:b*-d-star-rep}
\rseq_*\rconcat\big(\rseq_\circ\drepsum\rseq_1\rssuffix{i+1}\drepsum
\cdots\drepsum\rseq_\ell\rssuffix{i+1}\big).
\end{equation}
We treat $\rseq_*$ as a $\Dim$-th post-birth representative
for $[b_*,i]$ in $\Fcal'$
and treat $\rseq_\circ\drepsum\rseq_1\rssuffix{i+1}\drepsum
\cdots\drepsum\rseq_\ell\rssuffix{i+1}$
as a $\Dim$-th pre-death representative
for $[i,d^\star]$ in $\Fcal'$
(because $d_\circ\dles d^\star$ and ${d}^\star=\Max_{\dles}\Set{d_1,\ldots,d_\ell}$).
The concatenation in Equation~(\ref{eqn:b*-d-star-rep})
is well-defined because (i) $\cyc^*_{i-1}$ is the $\Dim$-cycle at index $i-1$ in $\rseq_*$;
(ii) $\cyc^\circ_{i+1}+\sum_{j=1}^\ell{\cyc}^j_{i+1}$ is
the $\Dim$-cycle at index $i+1$ in $\rseq_\circ\drepsum\rseq_1\rssuffix{i+1}\drepsum
\cdots\drepsum\rseq_\ell\rssuffix{i+1}$;
(iii) the two $\Dim$-cycles are homologous in $K'_i=K_{i-1}=K_{i+1}$
due to Equation~(\ref{eqn:z*i-1-sum-rw}).

Similarly, we pair $b_\ell=\Max_{\bles}\Set{b_1,\ldots,b_\ell}$ with $d_\circ$
to form an interval $[b_\ell,d_\circ]\in\Pers_*(\Fcal')$,
with the following representative:
\begin{equation}
\label{eqn:b_l-d^circ-rep}
\big(\rseq_*\brepsum\rseq_1\rsprefix{i-1}\brepsum
\cdots\brepsum\rseq_\ell\rsprefix{i-1}\big)\rconcat\rseq_\circ.
\end{equation}

Then,
we pair the 
remaining indices
$\Set{b_1,\ldots,b_{\ell-1}}$
with
$\Set{d_1,\ldots,d_{\ell}}\setminus\Set{d^\star}$.
Specifically,
for $r:=1,\ldots,\ell-1$,
pair $b_r$ with a death index as follows.
\begin{itemize}
    \item If $d_r$ is unpaired, then pair $b_r$ with $d_r$.
    The representative for $[b_r,d_r]\in\Pers_*(\Fcal')$
    can be updated from the representative for $[b_r,d_r]\in\Pers_*(\Fcal)$
    as described in Section~\ref{sec:out-contra-surj}.
    
    \item If $d_r$ is paired, then 
    $d_\circ,d_1,\ldots,d_r$
    must be all the paired death indices so far
    because
    (i) $d_1,\ldots,d_{r-1}$ must be paired in previous iterations;
    (ii) the paired birth indices so far are $b_*,b_1,\ldots,b_{r-1},b_\ell$,
    which match the cardinality of 
    $d_\circ,d_1,\ldots,d_r$,
    and so there can be no more paired death indices.    
    Since $d_{r+1},\ldots,d_\ell$ are all unpaired, 
    we pair $b_r$ with $\dG=\Max_{\dles}\Set{d_{r+1},\ldots,d_\ell}$.
    The representative for $[b_r,\dG]\in\Pers_*(\Fcal)'$ is set as
    \begin{equation}
    \label{eqn:b_r-dG-rep}
    \big(\rseq_*\brepsum\rseq_1\rsprefix{i-1}\brepsum
    \cdots\brepsum\rseq_r\rsprefix{i-1}\big)\rconcat
    \big(\rseq_\circ\drepsum\rseq_{r+1}\rssuffix{i+1}\drepsum
    \cdots\drepsum\rseq_\ell\rssuffix{i+1}\big).
    \end{equation}
    The validity of the above representative follows from:
    (i) $b_*\bles b_1\bles\cdots\bles b_r$;
    (ii) the concatenation is well-defined because by Equation~(\ref{eqn:z*i-1-sum-rw}), $\cyc^*_{i-1}+\sum_{j=1}^r{\cyc}^j_{i-1}$ is homologous to
    $\cyc^\circ_{i+1}+\sum_{j=r+1}^\ell{\cyc}^j_{i+1}$ in $K'_i$.
\end{itemize}

Note that in order to compute the representative in Equation~(\ref{eqn:b_r-dG-rep})
efficiently,
we maintain the sum $\rseq_*\brepsum\rseq_1\rsprefix{i-1}\brepsum
\cdots\brepsum\rseq_r\rsprefix{i-1}$ at each iteration,
by adding $\rseq_r\rsprefix{i-1}$ to the sum for the previous iteration.
Similarly,
we maintain the sum $\rseq_\circ\drepsum\rseq_{r+1}\rssuffix{i+1}\drepsum
\cdots\drepsum\rseq_\ell\rssuffix{i+1}$,
which is initially $\rseq_\circ\drepsum\rseq_{1}\rssuffix{i+1}\drepsum
\cdots\drepsum\rseq_\ell\rssuffix{i+1}$,
and add $\rseq_{r}\rssuffix{i+1}$ at each iteration.
Since each iteration only performs a constant number of 
sums and concatenations of representatives,
which take $O(\filtcnt\simpcnt)$ time,
the total time spent on computing
Equation~(\ref{eqn:b_r-dG-rep}) is $O(\filtcnt\simpcnt^2)$.

\paragraph{Step IV.}
Every interval in $\Pers_*(\Fcal)$
that is not `touched' in the previous steps
is carried into $\Pers_*(\Fcal')$.
The update of representatives for these intervals
are the same as in Section~\ref{sec:out-contra-surj}.

\subsubsection{Time complexity}
\label{sec:out-contra-complexity}
By a similar analysis as in Section~\ref{sec:out-expan-complexity},
the time spent on the surjective case is $O(\filtcnt\simpcnt)$.
For the injective case, the complexity of Step I
is dominated by the cost of boundary basis computation, 
which can be accomplished in $O(\simpcnt^3)$ time 
by invoking a persistence algorithm~\cite{cohen2006vines}.
In Step II,
there are at most $\simpcnt$ iterations and each iteration
takes $O(\filtcnt\simpcnt)$ time.
So Step II takes $O(\filtcnt\simpcnt^2)$ time.
Step III is dominated by the computation of the representatives in
Equation~(\ref{eqn:b*-d-star-rep})$-$(\ref{eqn:b_r-dG-rep}),
which takes $O(\filtcnt\simpcnt^2)$ time.
The time taken in Step IV is the same as in the surjective case.
Hence, the outward contraction takes $O(\filtcnt\simpcnt^2)$ time.

\subsection{Forward switch}
\label{sec:fwd-switch}

Recall that a forward switch is the following operation:

\vspace{-1em}
\begin{center}
\begin{tikzpicture}
\tikzstyle{every node}=[minimum width=24em]
\node (a) at (0,0) {$\Fcal:K_0 \leftrightarrow\cdots\leftrightarrow K_{i-1}\inctosp{\sG}K_i\inctosp{\tG}K_{i+1}\leftrightarrow\cdots\leftrightarrow K_\filtcnt$}; 
\node (b) at (0,-0.6){$\Fcal':K_0\leftrightarrow\cdots\leftrightarrow K_{i-1}\inctosp{\tG} K'_i\inctosp{\sG} K_{i+1}\leftrightarrow\cdots\leftrightarrow K_\filtcnt$};
\path[->] (a.0) edge [bend left=90,looseness=1.5,arrows={-latex},dashed] (b.0);
\end{tikzpicture}
\end{center}

\vspace{-1em}\noindent
where $\sG\nsubseteq\tG$.
We have the following four cases, and 
the updating for each case is different:
\begin{enumerate}[label=\textbf{\Alph*}., ref=\Alph*]
    \item \label{itm:fwd-switch-bb}
    $K_{i-1}\inctosp{\sG}K_i$ provides a \emph{birth} index $i$ and 
    $K_{i}\inctosp{\tG}K_{i+1}$ provides a \emph{birth} index $i+1$ 
    in $\Fcal$.
    \item \label{itm:fwd-switch-dd}
    $K_{i-1}\inctosp{\sG}K_i$ provides a \emph{death} index $i-1$ and 
    $K_{i}\inctosp{\tG}K_{i+1}$ provides a \emph{death} index $i$
    in $\Fcal$.
    \item \label{itm:fwd-switch-bd}
    $K_{i-1}\inctosp{\sG}K_i$ provides a \emph{birth} index $i$ and 
    $K_{i}\inctosp{\tG}K_{i+1}$ provides a \emph{death} index $i$
    in $\Fcal$.
    \item \label{itm:fwd-switch-db}
    $K_{i-1}\inctosp{\sG}K_i$ provides a \emph{death} index $i-1$ and 
    $K_{i}\inctosp{\tG}K_{i+1}$ provides a \emph{birth} index $i+1$
    in $\Fcal$.
\end{enumerate}

\subsubsection{Case \ref{itm:fwd-switch-bb}}
\label{sec:fwd-switch-bb}

We have the following fact:

\begin{proposition}
By the assumptions of Case~\ref{itm:fwd-switch-bb}, one has that
$K_{i-1}\inctosp{\tG}K'_i$ provides a {birth} index $i$ and 
$K'_{i}\inctosp{\sG}K_{i+1}$ provides a {birth} index $i+1$ 
in $\Fcal'$.
\end{proposition}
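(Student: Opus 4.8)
The plan is to argue by counting ranks of homology, which is the quickest route; a representative-level argument is also available and I sketch it at the end.

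First I would record that $\Fcal'$ is again a simplex-wise filtration starting and ending with empty complexes. Only the complex $K_i$ is replaced, by $K'_i := K_{i-1}\cup\{\tG\}$, and this is a genuine simplicial complex exactly because the forward switch requires $\sG\nsubseteq\tG$, so that every proper face of $\tG$ already lies in $K_{i-1}$; moreover $K'_i\cup\{\sG\}=K_{i-1}\cup\{\sG,\tG\}=K_{i+1}$, so $\Fcal'$ is a legitimate filtration. Hence the dichotomy recalled in the Preliminaries applies to each inclusion of $\Fcal'$: the forward inclusion $K_{i-1}\inctosp{\tG}K'_i$ is either injective with one-dimensional cokernel (a birth, raising $\rank\Hm_*$ by $1$) or surjective with one-dimensional kernel (a death, lowering $\rank\Hm_*$ by $1$), and likewise for $K'_i\inctosp{\sG}K_{i+1}$.

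Next I would compare the net change of $\rank\Hm_*$ from $K_{i-1}$ to $K_{i+1}$ along the two filtrations. Along $\Fcal$, the Case~\ref{itm:fwd-switch-bb} hypotheses say both $K_{i-1}\inctosp{\sG}K_i$ and $K_i\inctosp{\tG}K_{i+1}$ are births, so
\[
\rank\Hm_*(K_{i+1})-\rank\Hm_*(K_{i-1})
=\bigl(\rank\Hm_*(K_{i+1})-\rank\Hm_*(K_i)\bigr)+\bigl(\rank\Hm_*(K_i)-\rank\Hm_*(K_{i-1})\bigr)=1+1=2 .
\]
Along $\Fcal'$ the same difference decomposes as $\bigl(\rank\Hm_*(K_{i+1})-\rank\Hm_*(K'_i)\bigr)+\bigl(\rank\Hm_*(K'_i)-\rank\Hm_*(K_{i-1})\bigr)$, and by the previous paragraph each summand lies in $\{-1,+1\}$. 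Since the sum equals $2$, both summands equal $+1$; that is, $K_{i-1}\inctosp{\tG}K'_i$ provides a birth index $i$ and $K'_i\inctosp{\sG}K_{i+1}$ provides a birth index $i+1$ in $\Fcal'$, which is the claim.

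There is no real obstacle here; the only point that deserves a line of care is verifying that $\Fcal'$ is a legitimate simplex-wise filtration so that the $\pm1$ dichotomy is available for its two middle inclusions, and this is precisely where the side condition $\sG\nsubseteq\tG$ enters. If one wants the witnessing cycles as well, one can argue directly: since $K_{i-1}\inctosp{\sG}K_i$ is a birth, $\partial\sG$ bounds in $K_{i-1}$, hence also in $K'_i\supseteq K_{i-1}$, so $K'_i\inctosp{\sG}K_{i+1}$ is a birth; and since $K_i\inctosp{\tG}K_{i+1}$ is a birth, $\partial\tG$ bounds in $K_i=K_{i-1}\cup\{\sG\}$, and using that $\partial\sG$ already bounds in $K_{i-1}$ one removes any occurrence of $\sG$ from the bounding chain to conclude that $\partial\tG$ in fact bounds in $K_{i-1}$, so $K_{i-1}\inctosp{\tG}K'_i$ is a birth.
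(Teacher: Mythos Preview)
Your proof is correct. The main rank-counting argument is a genuinely different route from the paper's. The paper argues constructively at the cycle level: it takes cycles $x\subseteq K_i$ with $\sG\in x$ and $x'\subseteq K_{i+1}$ with $\tG\in x'$ (which exist by the birth hypotheses), and if $\sG\in x'$ replaces $x'$ by $x+x'$ to strip out $\sG$; the resulting $x'$ then lives in $K'_i$ and witnesses the first birth, while $x$ witnesses the second. Your approach instead observes that the net change $\rank\Hm_*(K_{i+1})-\rank\Hm_*(K_{i-1})=2$ is the same along both filtrations and that each simplex-wise step contributes $\pm1$, forcing both steps in $\Fcal'$ to be $+1$. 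This is shorter and requires no manipulation of cycles, though it does not produce the explicit witnesses that the paper's proof yields (which matter elsewhere in Section~\ref{sec:fwd-switch} where representatives are updated). Your closing sketch, phrased in terms of bounding chains for $\partial\sG$ and $\partial\tG$, is essentially the dual formulation of the paper's cycle argument: the chain $D$ with $\partial D=\partial\sG$ corresponds to the cycle $x=D+\sG$, and your subtraction of $D+\sG$ from the bounding chain for $\partial\tG$ is exactly the paper's replacement $x'\mapsto x'+x$.
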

\begin{proof}
Let $x\subseteq K_{i}$, $x'\subseteq K_{i+1}$ be cycles 
s.t.\ $\sG\in x$, $\tG\in x'$.
If $\sG\not\in x'$, then $\tG\in x'\subseteq K'_{i}$ and $\sG\in x\subseteq K_{i+1}$,
and hence the proposition is true.
If $\sG\in x'$, we can update $x'$ by summing it with $x$.
The new $x'$ satisfies that $x'\subseteq K_{i+1}$, $\tG\in x'$,
and $\sG\not\in x'$,
and hence the proposition is also true.
\end{proof}

\paragraph{Step I.}
An interval $[b,d]\in\Pers_*(\Fcal)$ s.t.\ $b\neq i,i+1$
is also an interval in $\Pers_*(\Fcal')$.
For updating its representative, we have the following situations:
\begin{description}
    \item[{$i\not\in[b,d]$}]:
    Since $b\neq i+1$, and $i-1$ is not a death index in $\Fcal$,
    we have that $b>i+1$ or $d<i-1$.
    So the representative for $[b,d]$ stays the same
    from $\Fcal$ to $\Fcal'$.
    \item[{$i\in[b,d]$}]: 
    Since $b\neq i$ and $i$ is not a death index in $\Fcal$,
    we have that $b\leq i-1$ and $d\geq i+1$.
    Let 
\[\tilde{\rseq}:\tilde{\chn}_{b-1}\dashleftarrow\tilde{\cyc}_{b}\rseqlrarr{\tilde{\chn}_{b}}\cdots
\rseqlrarr{\tilde{\chn}_{i-2}}\tilde{\cyc}_{i-1}
\rseqrarr{\tilde{\chn}_{i-1}}\tilde{\cyc}_{i}
\rseqrarr{\tilde{\chn}_{i}}\tilde{\cyc}_{i+1}
\rseqlrarr{\tilde{\chn}_{i+1}}\cdots
\rseqlrarr{\tilde{\chn}_{d-1}}\tilde{\cyc}_{d}\dashrightarrow\tilde{\chn}_{d}\]
be the representative for $[b,d]\in\Pers_*(\Fcal)$.
If $\sG\not\in\tilde{\chn}_{i-1}$
and $\sG\not\in\tilde{\cyc}_{i}$,
then $\tilde{\rseq}$ is still a representative for $[b,d]\in\Pers_*(\Fcal')$.
If $\sG\in\tilde{\chn}_{i-1}$
or $\sG\in\tilde{\cyc}_{i}$,
then let the following 
\[
\tilde{\chn}_{b-1}\dashleftarrow\tilde{\cyc}_{b}\rseqlrarr{\tilde{\chn}_{b}}\cdots
\rseqlrarr{\tilde{\chn}_{i-2}}\tilde{\cyc}_{i-1}
\rseqrarr{0}\tilde{\cyc}'_{i}
\rseqrarr{\tilde{\chn}_{i-1}+\tilde{\chn}_{i}}\tilde{\cyc}_{i+1}
\rseqlrarr{\tilde{\chn}_{i+1}}\cdots
\rseqlrarr{\tilde{\chn}_{d-1}}\tilde{\cyc}_{d}\dashrightarrow\tilde{\chn}_{d}\]
be the representative for $[b,d]\in\Pers_*(\Fcal')$,
where $\tilde{\cyc}'_{i}:=\tilde{\cyc}_{i-1}$
\end{description}

\paragraph{Step II.}
Let $[i,d_1],[i+1,d_2]$ be the intervals in $\Pers_*(\Fcal)$
starting with $i,i+1$ respectively, which have the following 
representatives:
\begin{alignat*}{2}
&\rseq_1:\cyc_{i}\rseqrarr{\chn_{i}}&\;&\cyc_{i+1}\rseqlrarr{\chn_{i+1}}\cdots
\rseqlrarr{\chn_{d_1-1}}\cyc_{d_1}\dashrightarrow\chn_{d_1},\\ 
&\rseq_2:&&\cyc'_{i+1}\rseqlrarr{\chn'_{i+1}}\cdots
\rseqlrarr{\chn'_{d_2-1}}\cyc'_{d_2}\dashrightarrow\chn'_{d_2}.
\end{alignat*}
Then, in order to obtain $\Pers_*(\Fcal')$,
we only need to pair the birth indices $i,i+1$ with the
death indices $d_1,d_2$ besides the intervals we inherit
directly in Step I.
By definition, we have that $\sG\in\cyc_{i}\subseteq K_i$
and $\tG\in\cyc'_{i+1}\subseteq K_{i+1}$.

If $\sG\not\in\cyc'_{i+1}$, 
then $[i+1,d_1]$ and $[i,d_2]$ form two intervals in $\Pers_*(\Fcal')$
with the following representatives:
\begin{alignat}{2}
&&\;&\cyc_{i+1}\rseqlrarr{\chn_{i+1}}\cdots
\rseqlrarr{\chn_{d_1-1}}\cyc_{d_1}\dashrightarrow\chn_{d_1},\\ 
\label{eqn:fwd-switch-case-A-prepend}
&\cyc'_{i}\rseqrarr{0}&&\cyc'_{i+1}\rseqlrarr{\chn'_{i+1}}\cdots
\rseqlrarr{\chn'_{d_2-1}}\cyc'_{d_2}\dashrightarrow\chn'_{d_2},
\end{alignat}
where $\cyc'_{i}:=\cyc'_{i+1}$.
It can be verified that the above representatives are valid.
For example, $\sG\in\cyc_{i+1}$ because $\cyc_{i+1}=\cyc_{i}+\partial(\chn_{i})$,
$\sG\in\cyc_{i}$, and $\sG\not\in\partial(\chn_{i})$ 
($\sG$ has no cofaces in $K_{i+1}$).

If $\sG\in\cyc'_{i+1}$, then we have the following situations
(note that $[i,d_1],[i+1,d_2]\in\Pers_*(\Fcal)$
are now of the same dimension):
\begin{description}
    \item[$d_1\dles d_2$]: Since $i\bles i+1$,
    we first update the representative
    for $[i+1,d_2]\in\Pers_*(\Fcal)$ as $\rseq_1\repsum\rseq_2$. 
    Note that $\sG\not\in\cyc_{i+1}+\cyc'_{i+1}$
    because $\sG\in\cyc_{i+1}$ as seen previously,
    where $\cyc_{i+1}+\cyc'_{i+1}$ is the cycle in $\rseq_1\repsum\rseq_2$
    at index $i+1$.
    With the updated representative
    for $[i+1,d_2]\in\Pers_*(\Fcal)$, 
    the rest of the operations are the same as done previously
    for $\sG\not\in\cyc'_{i+1}$.
    
    \item[$d_2\dles d_1$]:
    In this situation, the two intervals
    $[i,d_1],[i+1,d_2]\in\Pers_*(\Fcal)$
    are still intervals for $\Pers_*(\Fcal')$.
    The representative for $[i+1,d_2]\in\Pers_*(\Fcal')$
    is set to $\rseq_2$ because $\sG\in\cyc'_{i+1}$.
    The representative for $[i,d_1]\in\Pers_*(\Fcal')$
    is derived by prepending $\cyc_{i+1}+\cyc'_{i+1}\subseteq K'_i$ to 
    the beginning of
    $\rseq_1\repsum\rseq_2$ (which is defined over $[i+1,d_1]$),
    similarly to what is done to $\rseq_2$ in Equation~(\ref{eqn:fwd-switch-case-A-prepend});
    note that $\tG\in\cyc_{i+1}+\cyc'_{i+1}$ because
    $\tG\not\in\cyc_{i+1}$ (by Proposition~\ref{prop:homolog-cyc-no-new-simp})
    and $\tG\in\cyc'_{i+1}$.
\end{description}

\subsubsection{Case \ref{itm:fwd-switch-dd}}

We have the following fact:

\begin{proposition}
By the assumptions of Case~\ref{itm:fwd-switch-dd}, one has that
$K_{i-1}\inctosp{\tG}K'_i$ provides a {death} index $i-1$ and 
$K'_{i}\inctosp{\sG}K_{i+1}$ provides a {death} index $i$ 
in $\Fcal'$.
\end{proposition}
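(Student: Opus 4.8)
The plan is to recast both assertions in terms of boundary groups, using the standard criterion that a forward simplex-wise addition is a \emph{death} precisely when it does not create a new cycle. Concretely, I would first record the auxiliary observation: for a forward inclusion $X\inctosp{\sG}X'$ with $\sG$ a $p$-simplex and $X'=X\cup\Set{\sG}$, the map $\Hm_*(X)\to\Hm_*(X')$ is surjective (equivalently, the addition provides a death) if and only if $\partial\sG\notin\Bnd_{p-1}(X)$; otherwise it is injective and the addition provides a birth, witnessed by the new $p$-cycle $\sG+A$ with $\partial A=\partial\sG$. This follows from the injective/surjective dichotomy recalled in the Preliminaries together with the fact that $\rank\Hm_p(X')>\rank\Hm_p(X)$ iff some $p$-cycle of $X'$ contains $\sG$ iff $\partial\sG$ bounds in $X$. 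With this in hand, everything reduces to comparing boundary groups of nested complexes. Throughout, write $p=\dim\sG$ and $q=\dim\tG$, and note that $\sG\nsubseteq\tG$ is exactly what guarantees that all proper faces of $\tG$ already lie in $K_{i-1}$, so that $K'_i=K_{i-1}\cup\Set{\tG}$ is a simplicial complex and the chain-group manipulations below are legitimate.

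For the first conclusion I would show $\partial\tG\notin\Bnd_{q-1}(K_{i-1})$, which by the observation gives that $K_{i-1}\inctosp{\tG}K'_i$ is a death and, by its position in $\Fcal'$, contributes the death index $i-1$. This is immediate: the Case~\ref{itm:fwd-switch-dd} hypothesis gives $\partial\tG\notin\Bnd_{q-1}(K_i)$, and since $K_{i-1}\subseteq K_i$ we have $\Bnd_*(K_{i-1})\subseteq\Bnd_*(K_i)$, so $\partial\tG$ a fortiori fails to bound in the smaller complex $K_{i-1}$.

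For the second conclusion I would show $\partial\sG\notin\Bnd_{p-1}(K'_i)$, which by the observation gives that $K'_i\inctosp{\sG}K_{i+1}$ is a death contributing the death index $i$ in $\Fcal'$. Here the hypothesis only yields $\partial\sG\notin\Bnd_{p-1}(K_{i-1})$, and since $K'_i=K_{i-1}\cup\Set{\tG}$ is strictly larger than $K_{i-1}$, I must rule out that the extra cell $\tG$ supplies a bounding chain for $\partial\sG$; this is the only real obstacle. Arguing by contradiction, suppose $\partial\sG=\partial c$ for some $c\in\Chn_p(K'_i)$. If $p\neq q$, then $\Chn_p(K'_i)=\Chn_p(K_{i-1})$, so $\partial\sG\in\Bnd_{p-1}(K_{i-1})$, contradicting that $K_{i-1}\inctosp{\sG}K_i$ is a death. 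If $p=q$, write $c=c_0+\lambda\tG$ with $c_0\in\Chn_p(K_{i-1})$ and $\lambda\in\Zbb_2$: when $\lambda=0$ we again obtain $\partial\sG=\partial c_0\in\Bnd_{p-1}(K_{i-1})$, the same contradiction; when $\lambda=1$ we obtain $\partial\tG=\partial\sG+\partial c_0=\partial(\sG+c_0)$ with $\sG+c_0\in\Chn_p(K_i)$, hence $\partial\tG\in\Bnd_{p-1}(K_i)$, contradicting that $K_i\inctosp{\tG}K_{i+1}$ is a death. Either way we reach a contradiction, so $\partial\sG$ does not bound in $K'_i$, completing the argument.

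In short, the whole argument is a short boundary-chasing exercise; the one step that is not automatic is the $\lambda=1$ case above, where one transfers the putative bounding relation back into $K_i$ in order to exploit that $\tG$ is still a death there.
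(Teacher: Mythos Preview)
Your proof is correct and follows essentially the same approach as the paper's. The paper phrases the second part in terms of cycles (assuming a cycle $x\subseteq K_{i+1}$ with $\sG\in x$, then splitting on whether $\tG\in x$), whereas you phrase it in terms of boundaries (assuming $\partial\sG=\partial c$ in $K'_i$, then splitting on whether $\tG\in c$); under the correspondence $x=\sG+c$ the two arguments are identical, and your case distinction $p\neq q$ versus $p=q$ is just a slightly more explicit version of the same step.
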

\begin{proof}
Since $\partial(\tG)$ is not a boundary in $K_i$,
$\partial(\tG)$ must not be a boundary in $K_{i-1}$,
and hence $K_{i-1}\inctosp{\tG}K'_i$ must provide a death index.
Now, for contradiction, suppose that $K'_{i}\inctosp{\sG}K_{i+1}$ provides a birth index,
i.e., there is a cycle $x\subseteq K_{i+1}$
s.t.\ $\sG\in x$.
Since $x\nsubseteq K_i$ (because otherwise 
$K_{i-1}\inctosp{\sG}K_i$ would have provided a birth index), 
we have that $\tG\in x$,
which contradicts the fact that $K_{i}\inctosp{\tG}K_{i+1}$ provides a {death} index.
\end{proof}

\paragraph{Step I.} 
An interval $[b,d]\in\Pers_*(\Fcal)$ s.t.\ $d\neq i-1,i$ 
is also an interval in $\Pers_*(\Fcal')$,
and the updating of representative for $[b,d]\in\Pers_*(\Fcal')$ is 
the same as in Step I for 
Case~\ref{itm:fwd-switch-bb} described in Section~\ref{sec:fwd-switch-bb}.

\paragraph{Step II.}
Let $[b_1,i-1],[b_2,i]$ be the intervals in $\Pers_*(\Fcal)$
ending with $i-1,i$ respectively, which have the following 
representatives:
\begin{alignat*}{1}
&\rseq_1:\chn_{b_1-1}\dashleftarrow\cyc_{b_1}\rseqlrarr{\chn_{b_1}}\cdots
\rseqlrarr{\chn_{i-2}}\cyc_{i-1}\dashrightarrow\chn_{i-1},\\
&\rseq_2:\chn'_{b_2-1}\dashleftarrow\cyc'_{b_2}\rseqlrarr{\chn'_{b_2}}\cdots
\rseqlrarr{\chn'_{i-2}}\cyc'_{i-1}
\rseqrarr{\chn'_{i-1}}\cyc'_{i}
\dashrightarrow\chn'_{i}.
\end{alignat*}
Then, in order to obtain $\Pers_*(\Fcal')$,
we only need to pair the birth indices $b_1,b_2$ with the
death indices $i-1,i$.
By definition, we have that $\sG\in\chn_{i-1}\subseteq K_i$
and $\tG\in\chn'_{i}\subseteq K_{i+1}$.

If $\sG\not\in\chn'_{i-1}+\chn'_{i}$,
then $[b_1,i],[b_2,i-1]$ form two intervals in $\Pers_*(\Fcal')$
with the following representatives:
\begin{alignat}{1}
&\chn_{b_1-1}\dashleftarrow\cyc_{b_1}\rseqlrarr{\chn_{b_1}}\cdots
\rseqlrarr{\chn_{i-2}}\cyc_{i-1}
\rseqrarr{0}\cyc_{i}\dashrightarrow\chn_{i},
\label{eqn:fwd-switch-dd-sg-notin-b1}\\
&\chn'_{b_2-1}\dashleftarrow\cyc'_{b_2}\rseqlrarr{\chn'_{b_2}}\cdots
\rseqlrarr{\chn'_{i-2}}\cyc'_{i-1}
\dashrightarrow\chn'_{i-1}+\chn'_{i},
\end{alignat}
where $\cyc_{i}:=\cyc_{i-1}$ and $\chn_{i}:=\chn_{i-1}$. 
It can be verified that the above representatives are valid. For example, 
we have that $\tG\in\chn'_{i-1}+\chn'_{i}\subseteq K'_i$ because 
$\tG\not\in\chn'_{i-1}\subseteq K_{i}$,
$\tG\in\chn'_{i}$,
and $\sG\not\in\chn'_{i-1}+\chn'_{i}\subseteq K_{i+1}$.

If $\sG\in\chn'_{i-1}+\chn'_{i}$,
then we have the following situations
(note that $[b_1,i-1],[b_2,i]\in\Pers_*(\Fcal)$
are now of the same dimension):
\begin{description}
    \item[$b_1\bles b_2$]:
    Now $[b_1,i],[b_2,i-1]$
    form two intervals for $\Pers_*(\Fcal')$.
    The representative for $[b_1,i]\in\Pers_*(\Fcal')$
    is the same as in
    Equation~(\ref{eqn:fwd-switch-dd-sg-notin-b1}).
    The representative for $[b_2,i-1]\in\Pers_*(\Fcal')$
    is derived from $\rseq_1\rsprefix{i-1}\brepsum\rseq_2\rsprefix{i-1}$
    by appending the chain $\chn_{i-1}+\chn'_{i-1}+\chn'_{i}$ to the end,
    where $\cyc_{i-1}+\cyc'_{i-1}=\partial(\chn_{i-1}+\chn'_{i-1}+\chn'_{i})$.
    Note that $\tG\in\chn_{i-1}+\chn'_{i-1}+\chn'_{i}\subseteq K'_{i}$ 
    because 
    $\tG\not\in\chn_{i-1}$,
    $\tG\not\in\chn'_{i-1}$,
    $\tG\in\chn'_{i}$,
    $\sG\in\chn_{i-1}$,
    $\sG\in\chn'_{i-1}+\chn'_{i}$,
    and $\chn_{i-1}+\chn'_{i-1}+\chn'_{i}\subseteq K_{i+1}$.
    
    \item[$b_2\bles b_1$]:
    In this situation, the two intervals
    $[b_1,i-1],[b_2,i]\in\Pers_*(\Fcal)$
    are still intervals for $\Pers_*(\Fcal')$.
    The representative for $[b_2,i]\in\Pers_*(\Fcal')$
    is:
    \[\chn'_{b_2-1}\dashleftarrow\cyc'_{b_2}\rseqlrarr{\chn'_{b_2}}\cdots
    \rseqlrarr{\chn'_{i-2}}\cyc'_{i-1}
    \rseqrarr{0}\cyc''_{i}
    \dashrightarrow\chn'_{i-1}+\chn'_{i},\]
    where $\cyc''_{i}:=\cyc'_{i-1}$ 
    and $\sG\in\chn'_{i-1}+\chn'_{i}$.
    The representative for $[b_1,i-1]\in\Pers_*(\Fcal')$
    is derived from $\rseq_1\rsprefix{i-1}\brepsum\rseq_2\rsprefix{i-1}$
    by appending the chain $\chn_{i-1}+\chn'_{i-1}+\chn'_{i}$ to the end.
\end{description}

\subsubsection{Case \ref{itm:fwd-switch-bd}}

We have the following fact:

\begin{proposition}
By the assumptions of Case~\ref{itm:fwd-switch-bd}, one has that
$K_{i-1}\inctosp{\tG}K'_i$ provides a {death} index $i-1$ and 
$K'_{i}\inctosp{\sG}K_{i+1}$ provides a {birth} index $i+1$ 
in $\Fcal'$.
\end{proposition}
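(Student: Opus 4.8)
The plan is to argue directly, in the spirit of the proofs already given for Cases~\ref{itm:fwd-switch-bb} and~\ref{itm:fwd-switch-dd}. First I would record the combinatorial setup. Since $\sG\nsubseteq\tG$ (so in particular $\sG\neq\tG$ and $\sG$ is not a face of $\tG$), the complex $K'_i=K_{i-1}\cup\Set{\tG}$ is well-defined, and $K_{i+1}=K_i\cup\Set{\tG}=K_{i-1}\cup\Set{\sG,\tG}=K'_i\cup\Set{\sG}$ with $\sG\notin K'_i$. Hence both inclusions of $\Fcal'$ under consideration, $K_{i-1}\inctosp{\tG}K'_i$ and $K'_i\inctosp{\sG}K_{i+1}$, are additions of a single simplex, so for each it suffices to decide whether the induced map on $\Hm_*$ is injective (a birth, whose index is that of the right-hand complex) or surjective (a death, whose index is that of the left-hand complex), following the convention fixed in the Preliminaries.

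For the death claim, I would use that $K_{i}\inctosp{\tG}K_{i+1}$ provides a death index in $\Fcal$, which means $\partial(\tG)$ is not a boundary in $K_i$. Since $K_{i-1}\subseteq K_i$, any boundary in $K_{i-1}$ is a boundary in $K_i$; contrapositively, $\partial(\tG)$ is not a boundary in $K_{i-1}$, so $K_{i-1}\inctosp{\tG}K'_i$ provides a death index, which is $i-1$. For the birth claim, I would use that $K_{i-1}\inctosp{\sG}K_i$ provides a birth index in $\Fcal$, so there is a cycle $z\subseteq K_i$ with $\sG\in z$. As $K_i\subseteq K_{i+1}$, the same $z$ is a cycle in $K_{i+1}$ containing $\sG$, witnessing that $K'_i\inctosp{\sG}K_{i+1}$ provides a birth index, namely $i+1$.

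There is no real obstacle here: the argument is a short consequence of the inclusion-monotonicity of ``being a boundary'' and ``lying in a cycle,'' exactly as in the two preceding cases. The only points requiring (routine) care are the well-definedness of $K'_i$ and the verification that the two inclusions of $\Fcal'$ are honestly single-simplex additions with $\sG\notin K'_i$ --- both of which follow from $\sG\nsubseteq\tG$ --- together with the bookkeeping that converts ``is a birth/death'' into the asserted indices $i+1$ and $i-1$.
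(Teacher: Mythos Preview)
Your proposal is correct and follows essentially the same argument as the paper's proof: both use that $\partial(\tG)$ not being a boundary in $K_i$ forces it not to be one in $K_{i-1}\subseteq K_i$, and that a cycle in $K_i$ containing $\sG$ persists into $K_{i+1}$. The extra setup you include (well-definedness of $K'_i$, the bookkeeping on indices) is sound but more explicit than the paper, which takes those points for granted.
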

\begin{proof}
Since $\partial(\tG)$ is not a boundary in $K_i$,
$\partial(\tG)$ must not be a boundary in $K_{i-1}$,
and hence $K_{i-1}\inctosp{\tG}K'_i$ must provide a death index.
Now, let $x\subseteq K_{i}$ be a cycle s.t.\ $\sG\in x$.
Then, $x$ is also in $K_{i+1}$,
and hence $K'_{i}\inctosp{\sG}K_{i+1}$ must provide a {birth} index.
\end{proof}

\paragraph{Step I.} 
An interval $[b,d]\in\Pers_*(\Fcal)$ s.t.\ $b\neq i$ and $d\neq i$
is also an interval in $\Pers_*(\Fcal')$,
and the updating of representative for $[b,d]\in\Pers_*(\Fcal')$ is 
the same as in Step I for 
Case~\ref{itm:fwd-switch-bb} described in Section~\ref{sec:fwd-switch-bb}.

\paragraph{Step II.}
Note that $[i,i]$ cannot form an interval in $\Pers_*(\Fcal)$.
To see this, suppose instead that $[i,i]\in\Pers_*(\Fcal)$. Then the fact 
that $\sG$ is in a boundary in $K_{i+1}$ (by Definition~\ref{dfn:rep-seq})
and $\sG$ has no cofaces in $K_i$ means that $\sG\subseteq\tG$,
which is a contradiction.
Let $[b,i]$ and $[i,d]$ be the intervals in $\Pers_*(\Fcal)$
ending and starting with $i$ respectively, which have the following 
representatives:
\begin{alignat*}{1}
&\rseq_1:\chn_{b-1}\dashleftarrow\cyc_{b}\rseqlrarr{\chn_{b}}\cdots
\rseqlrarr{\chn_{i-2}}\cyc_{i-1}
\rseqrarr{\chn_{i-1}}\cyc_{i}
\dashrightarrow\chn_{i},\\
&\rseq_2:\cyc'_{i}\rseqrarr{\chn'_{i}}
\cyc'_{i+1}\rseqlrarr{\chn'_{i+1}}\cdots
\rseqlrarr{\chn'_{d-1}}\cyc'_{d}\dashrightarrow\chn'_{d}.
\end{alignat*}
Then, $[b,i-1]$ and $[i+1,d]$ form intervals in $\Pers_*(\Fcal')$.
The representative for $[b,i-1]\in\Pers_*(\Fcal')$ is:
\[\chn_{b-1}\dashleftarrow\cyc_{b}\rseqlrarr{\chn_{b}}\cdots
\rseqlrarr{\chn_{i-2}}\cyc_{i-1}
\dashrightarrow\chn''_{i-1},\]
where $\chn''_{i-1}$ equals $\chn_{i-1}+\chn_{i}$ if $\sG\not\in\chn_{i-1}+\chn_{i}$
and equals $\chn_{i-1}+\chn_{i}+\cyc'_{i}$ otherwise.
The representative for $[i+1,d]\in\Pers_*(\Fcal')$ is:
\[\cyc'_{i+1}\rseqlrarr{\chn'_{i+1}}\cdots
\rseqlrarr{\chn'_{d-1}}\cyc'_{d}\dashrightarrow\chn'_{d},\]
where the proof for $\sG\in\cyc'_{i+1}$ is as done previously.

\subsubsection{Case \ref{itm:fwd-switch-db}}

We have the following fact:

\begin{proposition}\label{prop:fwd-switch-db-fact}
Given the assumptions of Case~\ref{itm:fwd-switch-db}, 
let $x\subseteq K_{i+1}$ be a cycle s.t.\ $\tG\in x$.
If $\sG\in x$, then 
$K_{i-1}\inctosp{\tG}K'_i$ provides a {death} index $i-1$ and 
$K'_{i}\inctosp{\sG}K_{i+1}$ provides a {birth} index $i+1$ 
in $\Fcal'$;
otherwise, 
$K_{i-1}\inctosp{\tG}K'_i$ provides a {birth} index $i$ and 
$K'_{i}\inctosp{\sG}K_{i+1}$ provides a {death} index $i$ 
in $\Fcal'$.
\end{proposition}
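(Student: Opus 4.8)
The plan is to argue combinatorially in the $\Zbb_2$ ``sets of simplices'' formalism, reducing everything to the elementary criterion that a coface-free forward inclusion $L\inctosp{\rho}L'$ provides a \emph{death} iff $\rho$ belongs to no cycle of $L'$, and a \emph{birth} iff it does (standard persistence; cf.\ the remark after Theorem~\ref{thm:fzz-intv-map}). First I would record what the Case~\ref{itm:fwd-switch-db} hypotheses give: since $K_{i-1}\inctosp{\sG}K_i$ provides a death,
\[(\ast)\qquad \sG\text{ lies in no cycle of }K_i;\]
and since $K_i\inctosp{\tG}K_{i+1}$ provides a birth, $\tG$ lies in some cycle of $K_{i+1}$, and the hypothesis merely names such a cycle $x$. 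I would also note the set identities $K_i=K_{i-1}\cup\{\sG\}$, $K_{i+1}=K_i\cup\{\tG\}$, $K'_i=K_{i-1}\cup\{\tG\}$, so that $K_{i+1}\setminus\{\tG\}=K_i$, $K_{i+1}\setminus\{\sG\}=K'_i$, and $\sG\notin K'_i$. Then I would split on whether $\sG\in x$.

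In the case $\sG\in x$: to see that $K_{i-1}\inctosp{\tG}K'_i$ provides a death index $i-1$, suppose $\tG$ lay in a cycle $y$ of $K'_i$; then $x+y$ is a cycle of $K_{i+1}$ not containing $\tG$, hence supported in $K_i$, and since $\sG\in x$ while $\sG\notin y$ (as $\sG\notin K'_i$), we get $\sG\in x+y$, a cycle of $K_i$ containing $\sG$, contradicting $(\ast)$. For $K'_i\inctosp{\sG}K_{i+1}$, the cycle $x$ of $K_{i+1}$ already contains $\sG$, so this inclusion provides a birth index $i+1$.

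In the case $\sG\notin x$: then $x\subseteq K_{i+1}\setminus\{\sG\}=K'_i$, so $\tG$ lies in the cycle $x$ of $K'_i$ and $K_{i-1}\inctosp{\tG}K'_i$ provides a birth index $i$. For $K'_i\inctosp{\sG}K_{i+1}$, suppose toward a contradiction that $\sG$ lay in a cycle $z$ of $K_{i+1}$: if $\tG\notin z$ then $z$ is supported in $K_i$, a cycle of $K_i$ containing $\sG$; if $\tG\in z$ then $z+x$ is a cycle of $K_{i+1}$ without $\tG$, hence supported in $K_i$, and $\sG\in z$, $\sG\notin x$ give $\sG\in z+x$; either way we contradict $(\ast)$. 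Hence $\sG$ lies in no cycle of $K_{i+1}$, so this inclusion provides a death index $i$.

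Since the two cases are exhaustive and their conclusions incompatible, the outcome in particular does not depend on the chosen cycle $x$ (as it should not, $\Fcal'$ being independent of $x$). I do not expect a real obstacle here; the only points requiring care are phrasing the birth/death alternative correctly as cycle (non-)membership and keeping track of which complex each cycle is supported in — after that it is routine $\Zbb_2$ bookkeeping, in the same spirit as the earlier propositions in this subsection.
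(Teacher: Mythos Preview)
Your proof is correct. The approach differs from the paper's in one respect worth noting: in each of the two cases, the paper establishes the type of one inclusion directly (the easy one) and then invokes a rank--counting identity, namely $\rank\Cyc_*(K_{i+1})=\rank\Cyc_*(K_{i-1})+1$ and $\rank\Bnd_*(K_{i+1})=\rank\Bnd_*(K_{i-1})+1$, to force the type of the other inclusion. You instead argue both inclusions by explicit $\Zbb_2$ cycle manipulation, reducing each ``death'' claim to the single fact $(\ast)$ that $\sG$ lies in no cycle of $K_i$. Your route is slightly more elementary and entirely self-contained (no appeal to rank), at the cost of one more case split (on $\tG\in z$ or not) in the second part; the paper's rank argument is shorter but requires the reader to track the global invariants. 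Both are perfectly fine here.
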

\begin{proof}
If $\sG\in x$, then $K'_{i}\inctosp{\sG}K_{i+1}$ must provide a {birth} index
because $x$ is a new cycle in $K_{i+1}$ created by the addition of $\sG$.
This implies that $\rank\Cyc_*(K_{i+1})=\rank\Cyc_*(K'_{i})+1$.
By the assumptions of Case~\ref{itm:fwd-switch-db},
we have that 
\begin{equation}\label{eqn:rank-rel}
\rank\Cyc_*(K_{i+1})=\rank\Cyc_*(K_{i-1})+1\text{ and }
\rank\Bnd_*(K_{i+1})=\rank\Bnd_*(K_{i-1})+1.
\end{equation}
So we must have that 
$\rank\Bnd_*(K'_{i})=\rank\Bnd_*(K_{i-1})+1$,
implying that 
$K_{i-1}\inctosp{\tG}K'_i$ provides a {death} index.
If $\sG\not\in x$, then $x\subseteq K'_i$ and is created by the
addition of $\tG$, implying that 
$K_{i-1}\inctosp{\tG}K'_i$ provides a {birth} index.
So for Equation~(\ref{eqn:rank-rel}) to hold,
$K'_{i}\inctosp{\sG}K_{i+1}$ must provide a {death} index.
\end{proof}

\paragraph{Step I.}
An interval $[b,d]\in\Pers_*(\Fcal)$ s.t.\ $b\neq i+1$ and $d\neq i-1$
is also an interval in $\Pers_*(\Fcal')$,
and the updating of representative for $[b,d]\in\Pers_*(\Fcal')$ is 
the same as in Step I for 
Case~\ref{itm:fwd-switch-bb} described in Section~\ref{sec:fwd-switch-bb}.

\paragraph{Step II.}
Let $[b,i-1]$ and $[i+1,d]$ be the intervals in $\Pers_*(\Fcal)$
ending with $i-1$ and starting with $i+1$ respectively, which have the following 
representatives:
\begin{alignat*}{1}
&\rseq_1:\chn_{b-1}\dashleftarrow\cyc_{b}\rseqlrarr{\chn_{b}}\cdots
\rseqlrarr{\chn_{i-2}}\cyc_{i-1}
\dashrightarrow\chn_{i-1},\\
&\rseq_2:
\cyc'_{i+1}\rseqlrarr{\chn'_{i+1}}\cdots
\rseqlrarr{\chn'_{d-1}}\cyc'_{d}\dashrightarrow\chn'_{d}.
\end{alignat*}
Note that $\tG\in\cyc'_{i+1}\subseteq K_{i+1}$.
By Proposition~\ref{prop:fwd-switch-db-fact},
the updating is different based on whether $\sG\in\cyc'_{i+1}$:

\begin{description}
\item[$\sG\in\cyc'_{i+1}$]:
We have that
$K_{i-1}\inctosp{\tG}K'_i$ provides a {death} index $i-1$ and 
$K'_{i}\inctosp{\sG}K_{i+1}$ provides a {birth} index $i+1$ 
in $\Fcal'$.
Since $\partial(\cyc'_{i+1})=0$ and $\sG,\tG\in\cyc'_{i+1}$, we have that 
$\partial(\sG)+\partial(\tG)=\partial(\cyc'_{i+1}\setminus\Set{\sG,\tG})$,
where $\cyc'_{i+1}\setminus\Set{\sG,\tG}\subseteq K_{i-1}$.
Hence, $\partial(\sG)$ is homologous to $\partial(\tG)$ in $K_{i-1}$,
and $[b,i-1]$ is still an interval in $\Pers_*(\Fcal')$
with the following representative:
\[\chn_{b-1}\dashleftarrow\cyc_{b}\rseqlrarr{\chn_{b}}\cdots
\rseqlrarr{\chn_{i-2}}\cyc_{i-1}
\dashrightarrow\chn_{i-1}+\cyc'_{i+1},\]
where $\tG\in\chn_{i-1}+\cyc'_{i+1}\subseteq K'_{i}$
and $\partial(\chn_{i-1}+\cyc'_{i+1})=\partial(\chn_{i-1})=\cyc_{i-1}$.
Also, since $\sG\in\cyc'_{i+1}$,
$[i+1,d]$ is still an interval in $\Pers_*(\Fcal')$
with $\rseq_2$ as a representative.

\item[$\sG\not\in\cyc'_{i+1}$]:
We have that
$K_{i-1}\inctosp{\tG}K'_i$ provides a {birth} index $i$ and 
$K'_{i}\inctosp{\sG}K_{i+1}$ provides a {death} index $i$ 
in $\Fcal'$.
Now, $[b,i],[i,d]$ form two intervals in $\Pers_*(\Fcal')$ 
with the following representatives:
\begin{alignat*}{1}
&\chn_{b-1}\dashleftarrow\cyc_{b}\rseqlrarr{\chn_{b}}\cdots
\rseqlrarr{\chn_{i-2}}\cyc_{i-1}
\rseqrarr{0}\cyc_{i}
\dashrightarrow\chn_{i},\\
&\cyc'_{i}\rseqrarr{0}
\cyc'_{i+1}\rseqlrarr{\chn'_{i+1}}\cdots
\rseqlrarr{\chn'_{d-1}}\cyc'_{d}\dashrightarrow\chn'_{d},
\end{alignat*}
where $\cyc_{i}:=\cyc_{i-1}$, $\chn_{i}:=\chn_{i-1}$,
and $\cyc'_{i}:=\cyc'_{i+1}$.
\end{description}

\subsubsection{Time complexity}

Step I of all the cases needs to go over $O(\filtcnt)$ intervals.
Since the representatives of only $O(\simpcnt)$ intervals need to be updated,
each of which takes $O(\simpcnt)$ time,
this step takes $O(\simpcnt^2+\filtcnt)$ time.
The bottleneck of Step II of all the cases is the addition of two
representative sequences, which takes $O(\filtcnt\simpcnt)$ time.
So the forward switch operation takes $O(\filtcnt\simpcnt)$ time.

\subsection{Outward switch}
\label{sec:owd-switch}

Recall that an outward switch is the following operation:

\vspace{-1em}
\begin{center}
\begin{tikzpicture}
\tikzstyle{every node}=[minimum width=24em]
\node (a) at (0,0) {$\Fcal: K_0 \leftrightarrow
\cdots
\leftrightarrow 
K_{i-1}\inctosp{\sG} 
K_i 
\bakinctosp{\tG} K_{i+1}
\leftrightarrow
\cdots \leftrightarrow K_\filtcnt$}; 
\node (b) at (0,-0.6){$\Fcal': K_0 \leftrightarrow
\cdots
\leftrightarrow 
K_{i-1}\bakinctosp{\tG} 
K'_i 
\inctosp{\sG} K_{i+1}
\leftrightarrow
\cdots \leftrightarrow K_\filtcnt$};
\path[->] (a.0) edge [bend left=90,looseness=1.5,arrows={-latex},dashed] (b.0);
\end{tikzpicture}
\end{center}

\vspace{-1em}\noindent
where $\sG\neq\tG$.
By the Mayer-Vietoris Diamond 
Principle~\cite{carlsson2010zigzag,carlsson2019parametrized,carlsson2009zigzag-realvalue},
there is an bijection between $\Pers_*(\Fcal)$ and $\Pers_*(\Fcal')$.
Let $[b,d]$ be an interval in $\Pers_*(\Fcal)$ with the following representatives:
\[\rseq:\chn_{b-1}\dashleftarrow\cyc_{b}\rseqlrarr{\chn_{b}}\cdots
\rseqlrarr{\chn_{d-1}}\cyc_{d}\dashrightarrow\chn_{d}.\]
We have seven different cases for $[b,d]\in\Pers_*(\Fcal)$ (see below).
In each case,
the form of the corresponding interval in $\Pers_*(\Fcal')$
and the updating of representative are different.
Note that the seven cases are disjoint and cover all the possibilities 
of $[b,d]$ because: (i)~Case~\ref{itm:owd-switch-bi-di}$-$\ref{itm:owd-switch-bi-dgeqi+1} correspond to $b=i$ or $d=i$ (which implies that $i\in[b,d]$);
(ii) Case~\ref{itm:owd-switch-bleqi-1-dgeqi+1} corresponds to $i\in[b,d]$ but $b\neq i$ and $d\neq i$;
(ii) the remaining cases correspond to $i\not\in[b,d]$.

\setcounter{desccounter}{0}
\begin{description}
\descitem{Case}{($b=i,d=i$)}:\label{itm:owd-switch-bi-di}
Suppose that $[b,d]\in\Pers_*(\Fcal)$ is 
in dimension $\Dim$.
The corresponding interval in $\Pers_{*}(\Fcal')$ is also
$[b,d]$ but in dimension $\Dim-1$.
The representative for $[b,d]\in\Pers_{\Dim-1}(\Fcal')$
is set to
$\chn'_{i-1}\dashleftarrow\cyc'_i\dashrightarrow\chn'_{i}$,
where $\cyc'_i=\partial(\tG)$, $\chn'_{i-1}=\tG$,
and $\chn'_{i}=\cyc_i\setminus\Set{\tG}$.

\descitem{Case}{($b<i,d=i$)}:\label{itm:owd-switch-bleqi-1-di}
The corresponding interval in $\Pers_{*}(\Fcal')$
is $[b,i-1]$ with the following representative:
\[\chn_{b-1}\dashleftarrow\cyc_{b}\rseqlrarr{\chn_{b}}\cdots
\rseqlrarr{\chn_{i-2}}\cyc_{i-1},\]
where $\tG\in\cyc_{i-1}$ because $\cyc_{i-1}=\cyc_{i}+\partial(\chn_{i-1})$,
$\tG\in\cyc_{i}$, and $\tG\not\in\partial(\chn_{i-1})$ 
($\tG$ has no cofaces in $K_i$).

\descitem{Case}{($b=i,d>i$)}:\label{itm:owd-switch-bi-dgeqi+1}
This case is symmetric to Case~\ref{itm:owd-switch-bleqi-1-di}
and the details are omitted.

\descitem{Case}{($b<i,d>i$)}:\label{itm:owd-switch-bleqi-1-dgeqi+1}
See Section~\ref{sec:owd-switch-bleqi-1-dgeqi+1}.

\descitem{Case}{($b=i+1$)}:\label{itm:owd-switch-bi+1}
The corresponding interval in $\Pers_{*}(\Fcal')$
is $[i,d]$. If $\sG\not\in\chn_{i}$, then $[i,d]\in\Pers_{*}(\Fcal')$
has the following representative:
\[\chn_{i-1}\dashleftarrow\cyc_{i}\rseqrarr{0}
\cyc_{i+1}\rseqlrarr{\chn_{i+1}}\cdots
\rseqlrarr{\chn_{d-1}}\cyc_{d}\dashrightarrow\chn_{d},\]
where $\cyc_{i}:=\cyc_{i+1}$
and $\chn_{i-1}:=\chn_{i}$.
Note that $\sG\not\in\partial(\chn_{i})=\cyc_{i+1}$ 
because $\sG$ has no cofaces in $K_i$, and hence
$\cyc_{i+1}\subseteq K'_i$.
If $\sG\in\chn_{i}$, then $[i,d]\in\Pers_{*}(\Fcal')$
has the following representative:
\[\chn_{i-1}\dashleftarrow\cyc_{i}\rseqrarr{\sG}
\cyc_{i+1}\rseqlrarr{\chn_{i+1}}\cdots
\rseqlrarr{\chn_{d-1}}\cyc_{d}\dashrightarrow\chn_{d},\]
where $\cyc_{i}:=\cyc_{i+1}+\partial(\sG)$
and $\chn_{i-1}:=\chn_{i}+{\sG}$.

\descitem{Case}{($d=i-1$)}:\label{itm:owd-switch-di-1}
This case is symmetric to Case~\ref{itm:owd-switch-bi+1}
and the details are omitted.

\descitem{Case}{($b>i+1$ or $d<i-1$)}:\label{itm:owd-switch-rest}
The corresponding interval in $\Pers_{*}(\Fcal')$
is $[b,d]$
and the representative
stays the same.

\end{description}

\subsubsection{Case \ref{itm:owd-switch-bleqi-1-dgeqi+1}}
\label{sec:owd-switch-bleqi-1-dgeqi+1}

In this case,
the corresponding interval in $\Pers_{*}(\Fcal')$
is still $[b,d]$.
If $\sG\not\in\chn_{i-1}$ and
$\tG\not\in\chn_{i}$, then the representative for 
$[b,d]\in\Pers_{*}(\Fcal')$ stays the same
besides the changes on the arrow directions.
For example, $\cyc_{i-1}\rseqrarr{\chn_{i-1}}\cyc_{i}$ in $\rseq$
now becomes $\cyc_{i-1}\rseqlarr{\chn_{i-1}}\cyc_{i}$
after the switch, 
where $\chn_{i-1}\subseteq K_{i-1}$.
Note that we always have $\cyc_{i}\subseteq K'_i$ because 
$\sG,\tG\not\in\cyc_{i}$ by Proposition~\ref{prop:homolog-cyc-no-new-simp}.

If $\sG\in\chn_{i-1}$ or
$\tG\in\chn_{i}$, then we have the following situations:
\begin{description}
\item[$\sG\not\in\chn_{i-1}+\chn_{i}$]: 
The representative for 
$[b,d]\in\Pers_{*}(\Fcal')$ is set to:
\[\chn_{b-1}\dashleftarrow\cyc_{b}\rseqlrarr{\chn_{b}}\cdots
\rseqlrarr{\chn_{i-2}}\cyc_{i-1}
\rseqlarr{\chn_{i-1}+\chn_{i}}\cyc'_{i}
\rseqrarr{0}\cyc_{i+1}
\rseqlrarr{\chn_{i+1}}\cdots
\rseqlrarr{\chn_{d-1}}\cyc_{d}\dashrightarrow\chn_{d},\]
where $\cyc'_{i}:=\cyc_{i+1}$,
$\cyc_{i-1}+\cyc_{i+1}=\partial(\chn_{i-1}+\chn_{i})$,
and $\chn_{i-1}+\chn_{i}\subseteq K_{i-1}$ because $\sG\not\in\chn_{i-1}+\chn_{i}$.
Note that $\cyc_{i+1}\subseteq K'_{i}$ because $\cyc_{i+1}$ as a cycle in $K_i$
does not contain $\sG$ by Proposition~\ref{prop:homolog-cyc-no-new-simp}.

\item[$\tG\not\in\chn_{i-1}+\chn_{i}$]: 
Symmetrically, the representative for 
$[b,d]\in\Pers_{*}(\Fcal')$ is set to:
\[\chn_{b-1}\dashleftarrow\cyc_{b}\rseqlrarr{\chn_{b}}\cdots
\rseqlrarr{\chn_{i-2}}\cyc_{i-1}
\rseqlarr{0}\cyc'_{i}
\rseqrarr{\chn_{i-1}+\chn_{i}}\cyc_{i+1}
\rseqlrarr{\chn_{i+1}}\cdots
\rseqlrarr{\chn_{d-1}}\cyc_{d}\dashrightarrow\chn_{d},\]
where $\cyc'_{i}:=\cyc_{i-1}$.

\item[$\tG,\sG\in\chn_{i-1}+\chn_{i}$]: 
The representative for 
$[b,d]\in\Pers_{*}(\Fcal')$ is set to:
\[\chn_{b-1}\dashleftarrow\cyc_{b}\rseqlrarr{\chn_{b}}\cdots
\rseqlrarr{\chn_{i-2}}\cyc_{i-1}
\rseqlarr{\chn_{i-1}+\chn_{i}+\sG}\cyc'_{i}
\rseqrarr{\sG}\cyc_{i+1}
\rseqlrarr{\chn_{i+1}}\cdots
\rseqlrarr{\chn_{d-1}}\cyc_{d}\dashrightarrow\chn_{d},\]
where $\cyc'_{i}:=\cyc_{i+1}+\partial(\sG)$.
\end{description}

\subsubsection{Time complexity}
Going over all the intervals in $\Pers_*(\Fcal)$
takes $O(\filtcnt)$ time,
and each case takes no more than $O(\simpcnt)$ time.
Since Case \ref{itm:owd-switch-bleqi-1-dgeqi+1} can be executed
for no more than $\simpcnt$ times, 
the time complexity of outward switch operation is $O(\simpcnt^2+\filtcnt)$.

\section{Conclusion} 
We have presented update algorithms 
for maintaining barcodes and/or representatives
over a changing zigzag filtration. Two
main questions ensue from this research: (i) Can we make the updates more efficient?
Six operations that can be implemented using transpositions
in converted non-zigzag filtrations cannot be improved unless their
non-zigzag analogues are improved. A big open question is whether the other two operations, namely 
outward expansion and contraction, can be done without maintaining representatives explicitly. Or, is it possible to maintain representatives explicitly with a better complexity?
(ii) Are there interesting applications of the update algorithms presented in this paper? We have mentioned their application to computing vineyards for dynamic point clouds and also to multiparameter persistence. We believe that there will be other dynamic settings where updating zigzag persistence plays a contributing role.

\bibliographystyle{plainurl}
\bibliography{refs}

\begin{thebibliography}{10}

\bibitem{agarwal2006extreme}
Pankaj~K. Agarwal, Herbert Edelsbrunner, John Harer, and Yusu Wang.
\newblock Extreme elevation on a 2-manifold.
\newblock {\em Discrete \& Computational Geometry}, 36(4):553--572, 2006.

\bibitem{carlsson2010zigzag}
Gunnar Carlsson and Vin de~Silva.
\newblock Zigzag persistence.
\newblock {\em Foundations of Computational Mathematics}, 10(4):367--405, 2010.

\bibitem{carlsson2019parametrized}
Gunnar Carlsson, Vin de~Silva, Sara Kali{\v{s}}nik, and Dmitriy Morozov.
\newblock Parametrized homology via zigzag persistence.
\newblock {\em Algebraic \& Geometric Topology}, 19(2):657--700, 2019.

\bibitem{carlsson2009zigzag-realvalue}
Gunnar Carlsson, Vin de~Silva, and Dmitriy Morozov.
\newblock Zigzag persistent homology and real-valued functions.
\newblock In {\em Proceedings of the Twenty-Fifth Annual Symposium on
  Computational Geometry}, pages 247--256, 2009.

\bibitem{cohen2009extending}
David Cohen-Steiner, Herbert Edelsbrunner, and John Harer.
\newblock Extending persistence using {P}oincar{\'e} and {L}efschetz duality.
\newblock {\em Foundations of Computational Mathematics}, 9(1):79--103, 2009.

\bibitem{cohen2006vines}
David Cohen-Steiner, Herbert Edelsbrunner, and Dmitriy Morozov.
\newblock Vines and vineyards by updating persistence in linear time.
\newblock In {\em Proceedings of the Twenty-Second Annual Symposium on
  Computational Geometry}, pages 119--126, 2006.

\bibitem{dey2021graph}
Tamal~K. Dey and Tao Hou.
\newblock Computing zigzag persistence on graphs in near-linear time.
\newblock In {\em 37th International Symposium on Computational Geometry, SoCG
  2021}, volume 189 of {\em LIPIcs}, pages 30:1--30:15. Schloss Dagstuhl -
  Leibniz-Zentrum f{\"{u}}r Informatik, 2021.

\bibitem{dey2022fast}
Tamal~K. Dey and Tao Hou.
\newblock Fast computation of zigzag persistence.
\newblock {\em 30th Annual European Symposium on Algorithms (ESA 2022), {\rm to
  appear}}, 2022.

\bibitem{DKM21}
Tamal~K. Dey, Woojin Kim, and Facundo M\'{e}moli.
\newblock Computing generalized rank invariant for 2-parameter persistence
  modules via zigzag persistence and its applications.
\newblock {\em https://arXiv.org/abs/2111.15058}, 2021.

\bibitem{edelsbrunner2012medusa}
Herbert Edelsbrunner, Carl-Philipp Heisenberg, Michael Kerber, and Gabriel
  Krens.
\newblock The medusa of spatial sorting: Topological construction.
\newblock {\em arXiv preprint arXiv:1207.6474}, 2012.

\bibitem{edelsbrunner2000topological}
Herbert Edelsbrunner, David Letscher, and Afra Zomorodian.
\newblock Topological persistence and simplification.
\newblock In {\em Proceedings 41st Annual Symposium on Foundations of Computer
  Science}, pages 454--463. IEEE, 2000.

\bibitem{Gabriel72}
Peter Gabriel.
\newblock {Unzerlegbare Darstellungen I}.
\newblock {\em Manuscripta Mathematica}, 6(1):71--103, 1972.

\bibitem{hatcher2002algebraic}
Allen Hatcher.
\newblock {\em Algebraic Topology}.
\newblock Cambridge University Press, 2002.

\bibitem{kim2020spatiotemporal}
Woojin Kim and Facundo M{\'e}moli.
\newblock Spatiotemporal persistent homology for dynamic metric spaces.
\newblock {\em Discrete \& Computational Geometry}, pages 1--45, 2020.

\bibitem{KM20}
Woojin Kim and Facundo M{\'e}moli.
\newblock Generalized persistence diagrams for persistence modules over posets.
\newblock {\em Journal of Applied and Computational Topology}, 5(4):533--581,
  2021.

\bibitem{maria2014zigzag}
Cl{\'e}ment Maria and Steve~Y. Oudot.
\newblock Zigzag persistence via reflections and transpositions.
\newblock In {\em Proceedings of the Twenty-Sixth Annual ACM-SIAM Symposium on
  Discrete Algorithms}, pages 181--199. SIAM, 2014.

\bibitem{maria2016computing}
Cl{\'e}ment Maria and Steve~Y. Oudot.
\newblock Computing zigzag persistent cohomology.
\newblock {\em arXiv preprint arXiv:1608.06039}, 2016.

\bibitem{maria2019discrete}
Cl{\'e}ment Maria and Hannah Schreiber.
\newblock Discrete morse theory for computing zigzag persistence.
\newblock In {\em Workshop on Algorithms and Data Structures}, pages 538--552.
  Springer, 2019.

\bibitem{milosavljevic2011zigzag}
Nikola Milosavljevi{\'c}, Dmitriy Morozov, and Primoz Skraba.
\newblock Zigzag persistent homology in matrix multiplication time.
\newblock In {\em Proceedings of the Twenty-Seventh Annual Symposium on
  Computational Geometry}, pages 216--225, 2011.

\bibitem{oudot2015zigzag}
Steve~Y. Oudot and Donald~R. Sheehy.
\newblock Zigzag zoology: Rips zigzags for homology inference.
\newblock {\em Foundations of Computational Mathematics}, 15(5):1151--1186,
  2015.

\bibitem{Patel}
Amit Patel.
\newblock Generalized persistence diagrams.
\newblock {\em Journal of Applied and Computational Topology}, 1(3):397--419,
  2018.

\bibitem{reynolds1987flocks}
Craig~W. Reynolds.
\newblock Flocks, herds and schools: A distributed behavioral model.
\newblock In {\em Proceedings of the 14th Annual Conference on Computer
  Graphics and Interactive Techniques}, pages 25--34, 1987.

\end{thebibliography}

\appendix

\section{Potential applications of zigzag update algorithms}
\label{sec:zzup-app}

\paragraph{Dynamic point cloud.}
Consider a set of points $P$ moving with respect to time~\cite{edelsbrunner2012medusa,kim2020spatiotemporal}.
For each point pair in $P$, we can draw its \emph{distance-time curve}
revealing the variation of distance between the points w.r.t.\ time.
For example, Figure~\ref{fig:grid_curve} draws the curves for a simple
$P$ with three points,
where $e_1,e_2$ and $e_3$ denote edges formed by the three point pairs.
Consider the Vietoris-Rips complex of $P$ with $\dG$ as the distance threshold.
Since distances of the point pairs may become greater or less than
$\dG$ at different time,
edges formed by these pairs
are added to or deleted from the Rips complex accordingly.
This forms a zigzag filtration of Rips complexes, which we denote as $\Rcal^\dG$.
Letting $\dG$ vary from $0$ to $\infty$, and taking the persistence diagram (PD)
of $\Rcal^\dG$, we obtain a vineyard~\cite{cohen2006vines} as a descriptor
for the dynamic point cloud.
We note that $\Rcal^\dG$ changes only
at the \emph{critical} points of the distance-time curves,
which are local minima/maxima and intersections
(as illustrated by the dots in Figure~\ref{fig:grid_curve}).
To compute the vineyard, one only needs to compute the PD
of each $\Rcal^\dG$ where $\dG$ is in between
distance values of two critical points.
For example, $\Set{\dG_i}_i$ are the distance values for the critical points
in Figure~\ref{fig:grid_curve},
and $\Set{d_i}_i$ are the values in between.
Figure~\ref{fig:grid_module} lists the zigzag filtration $\Rcal^{d_i}$
for each $d_i$, where each horizontal arrow is either an equality, addition of an edge,
or deletion of an edge.
Each transition from $\Rcal^{d_i}$ to $\Rcal^{d_{i+1}}$ can be realized by
a sequence of atomic operations described in this paper,
which provides natural associations for the PDs~\cite{cohen2006vines}.
For example, starting from the top and going down, 
one needs to perform forward/backward/outward switches,
inward contractions, and outward expansions
(defined in Section~\ref{sec:update-oper}).
One could also start from the bottom and go up,
which requires the reverse operations.
In Section~\ref{sec:dpc},
we provide details on how the zigzag filtrations are built
for a dynamic point cloud and how the atomic operations can 
be used to realize the transitions.

\begin{figure}[p]
  \centering
  \captionsetup{justification=centering}
  \captionsetup[subfigure]{justification=centering}

  \begin{subfigure}[t]{\textwidth}
  \centering
  \includegraphics[width=0.65\linewidth]{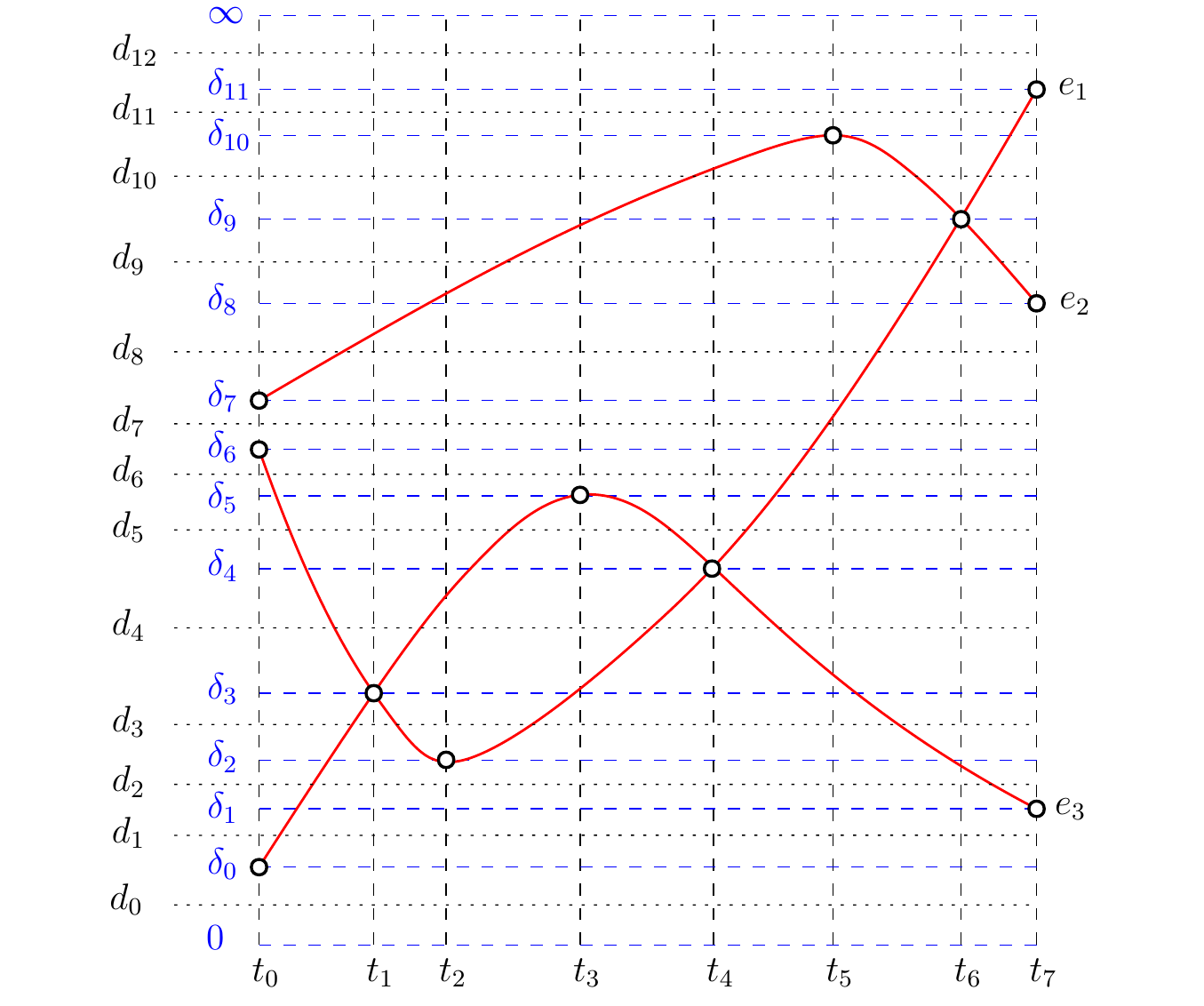}
  \caption{Distance-time curves of the three point pairs.}
   \label{fig:grid_curve}
  \end{subfigure}
  
  \vspace{2em}
   
  \captionsetup[subfigure]{justification=justified}

  \begin{subfigure}[t]{\textwidth}
  \centering
  \includegraphics[width=0.65\linewidth]{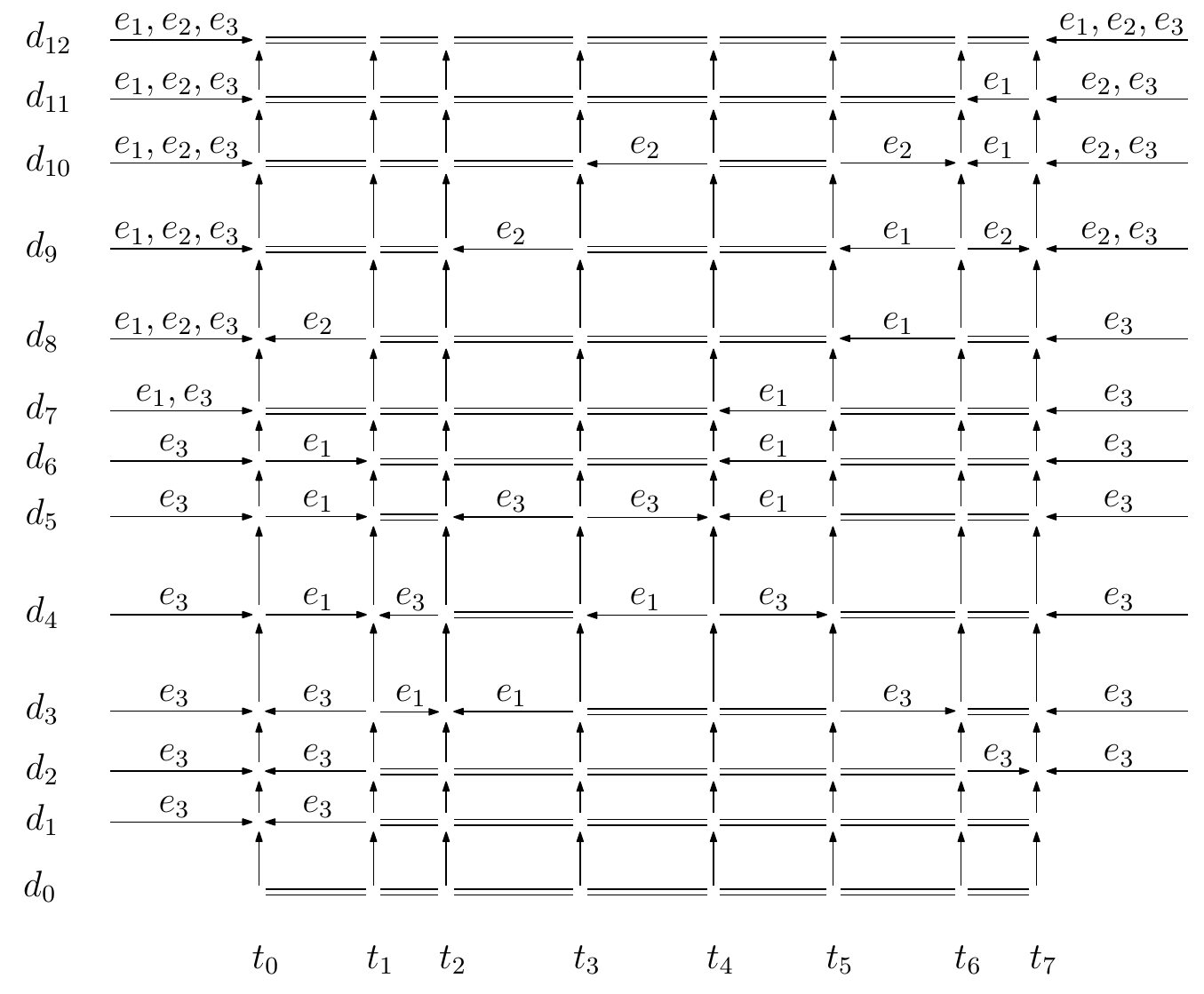}
  \caption{Zigzag filtration $\Rcal^{d_i}$ for each $d_i$ is listed horizontally, while vertically 
  each Rips complex is included into the one on the above.}
   \label{fig:grid_module}
  \end{subfigure}

  \caption{An example of a dynamic point cloud with three points.}
  \label{fig:dpc-grid}
\end{figure}

\paragraph{Levelset zigzag for time-varying function.}
It is known that the level sets
of a function give rise to a special type of zigzag filtrations
called levelset zigzag filtrations~\cite{carlsson2009zigzag-realvalue}, 
which are known to capture more information than the non-zigzag sublevel-set filtrations. Thus, even for a time-varying function, computing
the vineyard for a levelset zigzag filtration may capture more information than
the one by non-zigzag filtrations. 

\paragraph{Other potential applications.}
We also hope that our algorithms
for maintaining the representatives 
may be of independent interest.
For example, an efficient maintenance of these representatives provided
an efficient algorithm for computing zigzag persistence on graphs~\cite{dey2021graph} 
and also explained why a persistence algorithm proposed by Agarwal et al.~\cite{agarwal2006extreme} 
for elevation
functions works. Hilbert (dimension) function or rank 
function are among
some of the basic features for a
multiparameter persistence module.
One may use zigzag updates to compute these functions more efficiently
as Figure~\ref{fig:use2d}a suggests.
Thinking forward, we see a potential use of our algorithms
for maintaining representatives to compute generalized rank invariants~\cite{KM20,Patel}
for 2-parameter persistence modules. This may help
compute different homological structures as 
advocated recently~\cite{DKM21}; see Figure~\ref{fig:use2d}b.

\begin{figure}[htbp]
    \centering
    \begin{tabular}{cc}
    \includegraphics[height=4cm]{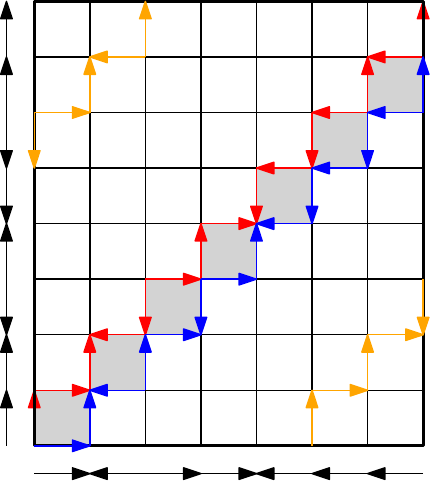}&
    \includegraphics[height=4cm]{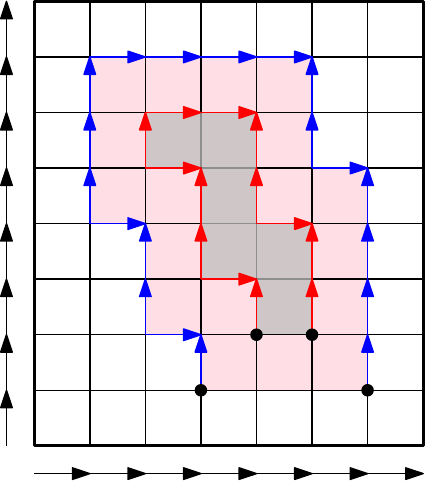}\\
    (a) & (b)
    \end{tabular}
        \caption{(a) Computing dimensions or rank function on a persistence module 
        with support over a 2D zigzag grid (poset) can be more efficiently computed by considering zigzag persistence on an initial zigzag filtration (indicated by red path) and then updating it with switches, which gives
        other zigzag paths (indicated by blue and golden paths). Assuming $t$ points
        in the grid, this will take $O(t^3)$ time with the updates instead of $O(t^{\omega+2})$
        with bruteforce zigzag persistence computation on every path. (b) Recently, it is
        shown that the generalized rank of an interval in a 2-parameter module can be derived
        from the zigzag persistence on the boundary as shown with red and blue paths for the
        grey and pink intervals respectively~\cite{DKM21}. We can leverage our update algorithms to compute
        the zigzag persistence over these two paths and multiple boundaries in general.}
    \label{fig:use2d}
\end{figure}

\subsection{Details on dynamic point clouds}
\label{sec:dpc}

We first define the following:

\begin{definition}
Throughout the section,
let $\dpc=(\pset,\posf_0,\posf_1,\ldots,\posf_\dpccnt)$
denote a dynamic point cloud in which: (i) $\pset$ is a set of points;
(ii) each map $\posf_i:\pset\to\Real^\Dim$ specifies the positions
of points in $\pset$ at \emph{time} $i$.
\end{definition}

Note that while 1-dimensional persistence~\cite{edelsbrunner2000topological}
with Rips filtration
serves as an effective descriptor for a fixed point cloud, 
it cannot naturally characterize a dynamic point cloud as defined above~\cite{kim2020spatiotemporal}.
In view of this,
we build vines and vineyards~\cite{cohen2006vines}
as descriptors for $\dpc$ using zigzag persistence.
We first let the time in $\dpc$ range {continuously}
in $[0,\dpccnt]$, i.e., the position of each point in $\pset$
during time $[0,\dpccnt]$ is linearly interpolated
based on 
the discrete samples given in $\dpc$.
For each $t\in[0,\dpccnt]$, let $\pset_t$ denote the point cloud
which is the point set $\pset$ with positions at time $t$.
Also, for a $\dG\geq 0$, let $R^\dG_t$ denote the Rips complex of $\pset_t$
with distance $\dG$.

Now fix a $\dG\geq 0$, and consider the continuous sequence 
$\Rcal^\dG:=\Set{R^\dG_t}_{t\in[0,\dpccnt]}$.
We claim that $\Rcal^\dG$ is encoded by a zigzag filtration,
and hence admits a barcode (persistence diagram)
as descriptor.
To see this, we note that each $R^\dG_t$ in $\Rcal^\dG$ is completely determined
by the vertex pairs in $\pset$ with distances no greater than $\dG$ at time $t$. 
Let $\pi$ be a vertex pair whose distance varies with time as illustrated
by the red curve in Figure~\ref{fig:dis_plot},
where the horizontal axis denotes time and the vertical axis denotes distance.
For the $\dG$ in Figure~\ref{fig:dis_plot},
the edge formed by $\pi$ is in $R^\dG_t$ when $t$ falls in the intervals
$[0,t_1]$, $[t_2,t_3]$, and $[t_4,t_5]$.
Also, in Figure~\ref{fig:pair_intervals}, for three vertex pairs $\pi_1,\pi_2,\pi_3$,
we illustrate respectively the time intervals in which 
their distances are no greater than $\dG$.
With the time varying, 
the edges formed by the vertex pairs are added to
or deleted from the Rips complex. 
As illustrated in Figure~\ref{fig:pair_intervals},
this naturally defines a zigzag filtration which we denote as $\Fcal^\dG$.
For example, $R_{t_2}^\dG$ in Figure~\ref{fig:pair_intervals}
is defined by edges formed by $\pi_2$ and $\pi_3$, and
$R_{t_5}^\dG$ 
is defined by edges formed by $\pi_1$ and $\pi_3$.

\begin{figure}[t]
  \centering
  \captionsetup[subfigure]{justification=centering}

  \begin{subfigure}[t]{0.4\textwidth}
  \centering
  \includegraphics[width=\linewidth]{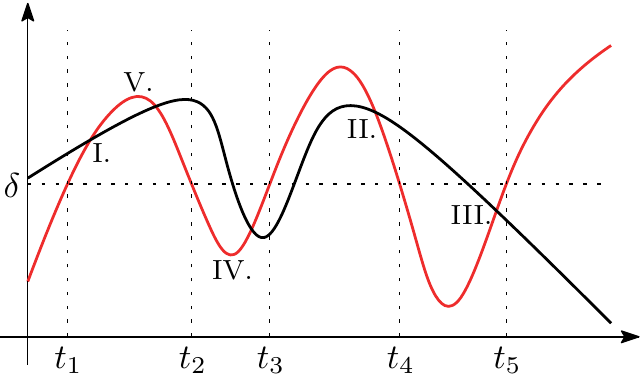}
  \caption{}
  \label{fig:dis_plot}
  \end{subfigure}
  \hspace{1.5em}
  \begin{subfigure}[t]{0.5\textwidth}
  \centering
  \includegraphics[width=\linewidth]{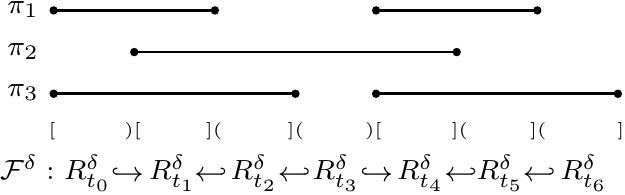}
  \caption{}
  \label{fig:pair_intervals}
  \end{subfigure}

  \caption{(a) Distance-time curves for two vertex pairs.
  (b) Time intervals for three vertex pairs $\pi_1,\pi_2,\pi_3$ in which distance is $\leq\dG$
  and the corresponding zigzag filtration $\Fcal^\dG$.}
\end{figure}

We then consider the one-parameter family of 
persistence diagrams
$\Set{\Bcal^\dG}_{\dG\in[0,\infty]}$,
with $\Bcal^\dG$ 
being the persistence diagram of 
$\Rcal^\dG$,
which forms a vineyard~\cite{cohen2006vines}. 
Treating each $\Bcal^\dG$ as a multi-set
of points in $\Real^2$,
the vineyard 
$\Set{\Bcal^\dG}_{\dG\in[0,\infty]}$
contains vines tracking the movement of points in persistence diagrams 
w.r.t.\ $\dG$.
For computing the vineyard $\Set{\Bcal^\dG}_{\dG\in[0,\infty]}$, 
we utilize
the update operations 
and algorithms presented in this paper.
As in~\cite{cohen2006vines},
our atomic update operations help associate points 
for persistence diagrams in $\Set{\Bcal^\dG}_{\dG\in[0,\infty]}$
without ambiguity, 
which is otherwise unavoidable if attempting to associate directly.
Let $\bar{\dG}$ be the maximum distance of vertex pairs at all time
in $\dpc$. We start with $\Rcal^{\bar{\dG}}$.
Since $R^{\bar{\dG}}_t$ equals a contractible (high-dimensional) simplex
at any $t$,
$\Bcal^{\bar{\dG}}$ contains only a 0-th interval $[0,\dpccnt]$
whose representative sequence
is straightforward\footnote{In practice,
  one may only consider simplices up to a dimension to save time;
  $\Bcal^{\bar{\dG}}$ and the representatives 
  in this case
  can then be computed from a homology basis for the complex at a time $t$.}.
Now consider the distance-time curves of \emph{all} vertex pairs of $\pset$
(e.g., Figure~\ref{fig:dis_plot} illustrates curves of two pairs),
which indeed defines a \emph{dynamic metric space}~\cite{kim2020spatiotemporal}.
When decreasing the distance $\dG$, 
$\Fcal^\dG$ changes only
at the following types of points
in the plot of all distance-time curves (see Figure~\ref{fig:dpc_events}):
\begin{figure}[t]
  \centering
  \captionsetup[subfigure]{justification=centering}
  \includegraphics[width=\linewidth]{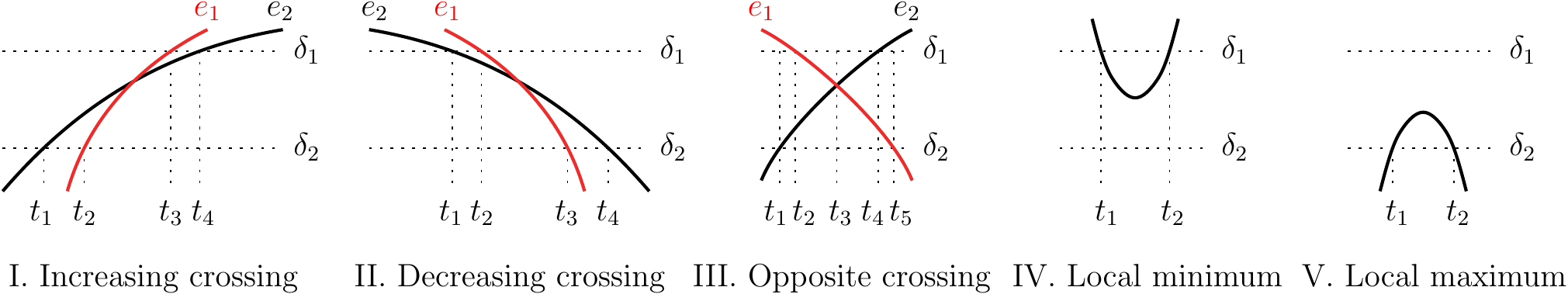}
  \caption{The events that change the zigzag filtration of $\Rcal^\dG$ as $\dG$ varies. 
  Each (partial) distance-time curve corresponds to a vertex pair, and for some events,
  edges formed by the vertex pairs are also denoted.}
  \label{fig:dpc_events}
\end{figure}
\begin{description}
  \item[I.\ Increasing crossing]:
  In Figure~\ref{fig:dpc_events},
  $e_1$ is deleted first at $t_3$ and then $e_2$ is deleted at $t_4$ in $\Rcal^{\dG_1}$.
  In $\Rcal^{\dG_2}$, the deletions of $e_1,e_2$ are switched.
  The switch of edge deletions 
  in the zigzag filtrations
  is realized by a sequence
  of simplex-wise \emph{backward switches}.

  \item[II.\ Decreasing crossing]:
  This is symmetric to the increasing crossing
  where additions of two edges are switched. It is realized by a sequence
  of simplex-wise \emph{forward switches}.

  \item[III.\ Opposite crossing]:
  In Figure~\ref{fig:dpc_events},
  $e_1$ is added first at $t_2$ and then $e_2$ is deleted at $t_3$
  in $\Rcal^{\dG_1}$.
  In $\Rcal^{\dG_2}$, the addition of $e_1$ and the deletion of $e_2$ are switched.
  The simplex-wise version of $\Fcal^{\dG_1}$ contains the following part
  \[R^{\dG_1}_{t_1}\inctosp{\sG_1}\cdots\inctosp{\sG_q} R^{\dG_1}_{t_3}
  \bakinctosp{\tG_1}\cdots\bakinctosp{\tG_r} R^{\dG_1}_{t_5},\]
  where $t_1,t_3,t_5$ are as defined in Figure~\ref{fig:dpc_events}.
  To obtain $\Fcal^{\dG_2}$, we do the following for each $i=1,\ldots,r$:
  \begin{itemize}
      \item 
  If $\tG_i$ is not equal to any of $\sG_1,\ldots,\sG_q$, 
  then use \emph{outward switches} to make $\bakinctosp{\tG_i}$ appear
  immediately before the additions of $\sG_1,\ldots,\sG_q$.
  If $\tG_i$ is equal to a $\sG_j$, first use outward switches 
  to make $\bakinctosp{\tG_i}$ appear
  immediately after $\inctosp{\sG_j}$. 
  Then, apply the \emph{inward contraction} on $\inctosp{\sG_j}\bakinctosp{\tG_i}$.
  Note that $\sG_j$ ($=\tG_i$) which contains both $e_1,e_2$
  does not exist in any complex $R^{\dG_2}_t$
  for $t$ a time shown in Figure~\ref{fig:dpc_events}
  because $e_1,e_2$ do not both exist in these complexes.
  \end{itemize}

  \item[IV.\ Local minimum]:
  In Figure~\ref{fig:dpc_events},
  an edge $e$ corresponding to the black curve 
  is added at $t_1$ and then deleted at $t_2$ in $\Rcal^{\dG_1}$.
  In $\Rcal^{\dG_2}$, the addition and deletion of $e$ disappear.
  Correspondingly,
  simplices containing $e$ are added and then deleted in $\Fcal^{\dG_1}$, 
  but in $\Fcal^{\dG_2}$, the addition and deletion of the above mentioned simplices do not exist.
  Hence, we need to perform \emph{inward contractions}.
  Note that before this, we may need to perform forward or backward switches
  to properly order the additions and deletions.
  (For example, suppose that $\sG$ is the last simplex added 
  due to the addition of $e$.
  However, if $\sG$ is not the first simplex deleted 
  due to the deletion of $e$,
  we need to perform backward switches to make 
  this true
  so that we can perform an inward contraction on $\sG$.)

  \item[V.\ Local maximum]:
  In Figure~\ref{fig:dpc_events},
  an edge $e$ corresponding to the black curve 
  exists in any complex $R^{\dG_1}_t$ for $t$ a time shown in the figure.
  However, in $\Rcal^{\dG_2}$,
  $e$ is deleted at $t_1$ and then added at $t_2$.
  Accordingly, we need to perform \emph{outward expansions}
  on simplices which are deleted and then added.
\end{description}

All the above five types of points appear in Figure~\ref{fig:dis_plot} 
with the numbering of types labelled.

\section{Proof of Proposition~\ref{prop:universality}}
\label{sec:pf-prop-universality}
We prove that any simplex-wise zigzag filtration
as stated in the proposition
can be transformed into an empty filtration
by the update operations in this paper.
This implies that an empty filtration
can be transformed into any simplex-wise filtration
by the reverse operations.
The proposition is then true.

Let
$
\Fcal:
\emptyset=
K_0\leftrightarrowsp{\fsimp{}{0}} K_1\leftrightarrowsp{\fsimp{}{1}}
\cdots 
\leftrightarrowsp{\fsimp{}{\filtcnt-1}} K_\filtcnt
=\emptyset
$
be a simplex-wise zigzag filtration.
We first transform $\Fcal$
into an \emph{up-down}~\cite{carlsson2009zigzag-realvalue} simplex-wise filtration:
\[\Ucal:\emptyset=L_0\hookrightarrow L_1\hookrightarrow
\cdots\hookrightarrow 
L_{\simpcnt}\hookleftarrow
L_{\simpcnt+1}\hookleftarrow
\cdots
\hookleftarrow L_{2\simpcnt}=\emptyset.\]
Let $K_{i}\bakinctosp{\fsimp{}{i}}K_{i+1}$ be the first deletion in $\Fcal$
and $K_{j}\inctosp{\fsimp{}{j}}K_{j+1}$ be the first addition after that.
That is, $\Fcal$ is of the form
\[
\Fcal:K_0\hookrightarrow \cdots\hookrightarrow
K_{i}\bakinctosp{\fsimp{}{i}} 
K_{i+1}\bakinctosp{\fsimp{}{i+1}}
\cdots
\bakinctosp{\fsimp{}{j-2}}
K_{j-1}\bakinctosp{\fsimp{}{j-1}}
K_{j}\inctosp{\fsimp{}{j}}K_{j+1}
\leftrightarrow\cdots \leftrightarrow K_\filtcnt.
\]
If $\fsimp{}{j-1}\neq\fsimp{}{j}$,
we perform inward switch on
$\bakinctosp{\fsimp{}{j-1}}\inctosp{\fsimp{}{j}}$ 
to derive a filtration
\[
K_0\hookrightarrow \cdots\hookrightarrow
K_{i}\bakinctosp{\fsimp{}{i}} 
K_{i+1}\bakinctosp{\fsimp{}{i+1}}
\cdots
\bakinctosp{\fsimp{}{j-2}}
K_{j-1}\inctosp{\fsimp{}{j}}
K'_{j}\bakinctosp{\fsimp{}{j-1}}
K_{j+1}
\leftrightarrow\cdots \leftrightarrow K_\filtcnt.
\]
If $\fsimp{}{j-1}=\fsimp{}{j}$,
we perform outward contraction on $\bakinctosp{\fsimp{}{j-1}}\inctosp{\fsimp{}{j}}$ 
to derive a filtration
\[
K_0\hookrightarrow \cdots\hookrightarrow
K_{i}\bakinctosp{\fsimp{}{i}} 
K_{i+1}\bakinctosp{\fsimp{}{i+1}}
\cdots
\bakinctosp{\fsimp{}{j-2}}
K_{j+1}
\leftrightarrow\cdots \leftrightarrow K_\filtcnt.
\]
We can continue the above operations until
there are no additions after deletions, so that
the filtration becomes an up-down one.

Finally,
on the up-down filtration,
we perform forward/backward switches and inward contractions
to transform it into an empty one.

\section{The remaining update algorithms based on maintaining representatives}
\label{sec:update-alg-cont}

\subsection{Backward switch}
Backward switch is symmetric to forward switch
and hence the algorithm for it is also symmetric.
So, we omit the details.

\subsection{Inward switch}
\label{sec:iwd-switch}

Recall that an inward switch is the following operation:

\vspace{-1em}
\begin{center}
\begin{tikzpicture}
\tikzstyle{every node}=[minimum width=24em]
\node (a) at (0,0) {$\Fcal: K_0 \leftrightarrow
\cdots
\leftrightarrow 
K_{i-1}\bakinctosp{\sG} 
K_i 
\inctosp{\tG} K_{i+1}
\leftrightarrow
\cdots \leftrightarrow K_\filtcnt$}; 
\node (b) at (0,-0.6){$\Fcal': K_0 \leftrightarrow
\cdots
\leftrightarrow 
K_{i-1}\inctosp{\tG} 
K'_i 
\bakinctosp{\sG} K_{i+1}
\leftrightarrow
\cdots \leftrightarrow K_\filtcnt$};
\path[->] (a.0) edge [bend left=90,looseness=1.5,arrows={-latex},dashed] (b.0);
\end{tikzpicture}
\end{center}

\vspace{-1em}\noindent
where $\sG\neq\tG$.
By the Mayer-Vietoris Diamond 
Principle~\cite{carlsson2010zigzag,carlsson2019parametrized,carlsson2009zigzag-realvalue},
there is a bijection between $\Pers_*(\Fcal)$ and $\Pers_*(\Fcal')$.
Let $[b,d]$ be an interval in $\Pers_*(\Fcal)$ with the following representative:
\[\rseq:\chn_{b-1}\dashleftarrow\cyc_{b}\rseqlrarr{\chn_{b}}\cdots
\rseqlrarr{\chn_{d-1}}\cyc_{d}\dashrightarrow\chn_{d}.\]
As in Section~\ref{sec:owd-switch},
we have the following seven cases for $[b,d]\in\Pers_*(\Fcal)$:

\setcounter{desccounter}{0}
\begin{description}
\descitem{Case}{($b=i,d=i$)}:\label{itm:iwd-switch-bi-di}
Suppose that $[b,d]\in\Pers_*(\Fcal)$ is 
in dimension $\Dim$.
The corresponding interval in $\Pers_{*}(\Fcal')$ is also
$[b,d]$ but in dimension $\Dim+1$.
Since $\partial(\chn_{i-1})=\cyc_i=\partial(\chn_{i})$,
we have that $\partial(\chn_{i-1}+\chn_{i})=0$,
which means that $\chn_{i-1}+\chn_{i}$ is a $(\Dim+1)$-cycle in $K'_i$.
Also, since $\sG\in\chn_{i-1}$, $\tG\in\chn_{i}$,
$\tG\not\in\chn_{i-1}$ (because $\tG\not\in K_{i-1}$), 
and $\sG\not\in\chn_{i}$ (because $\sG\not\in K_{i+1}$),
we have that $\sG,\tG\in\chn_{i-1}+\chn_{i}$.
Hence, the representative for $[b,d]\in\Pers_{\Dim+1}(\Fcal')$
consists of only the cycle $\chn_{i-1}+\chn_{i}$.

\descitem{Case}{($b<i,d=i$)}:\label{itm:iwd-switch-bleqi-1-di}
The corresponding interval in $\Pers_{*}(\Fcal')$
is $[b,i-1]$.
We have that $\cyc_{i-1}+\cyc_{i}=\partial(\chn_{i-1})$ and
$\cyc_{i}=\partial(\chn_{i})$
for $\chn_{i-1}\subseteq K_{i-1}\subseteq K'_i$ and
$\chn_{i}\subseteq K_{i+1}\subseteq K'_i$.
So $\cyc_{i-1}=\partial(\chn_{i-1}+\chn_{i})$.
Since $\tG\in\chn_{i}$ and $\tG\not\in\chn_{i-1}$ (because $\tG\not\in K_{i-1}$),
it is true that $\tG\in\chn_{i-1}+\chn_{i}$.
Then, the representative for $[b,i-1]\in\Pers_{*}(\Fcal')$ is:
\[\chn_{b-1}\dashleftarrow\cyc_{b}\rseqlrarr{\chn_{b}}\cdots
\rseqlrarr{\chn_{i-2}}\cyc_{i-1}\dashrightarrow\chn_{i-1}+\chn_{i}.\]

\descitem{Case}{($b=i,d>i$)}:\label{itm:iwd-switch-bi-dgeqi+1}
This case is symmetric to Case~\ref{itm:iwd-switch-bleqi-1-di}
and the details are omitted.

\descitem{Case}{($b<i,d>i$)}:\label{itm:iwd-switch-bleqi-1-dgeqi+1}
The corresponding interval in $\Pers_{*}(\Fcal')$
is still $[b,d]$ and the representative
stays the same besides the changes on the arrow directions.
For example, $\cyc_{i-1}\rseqlarr{\chn_{i-1}}\cyc_{i}$ in $\rseq$
now becomes $\cyc_{i-1}\rseqrarr{\chn_{i-1}}\cyc_{i}$
after the switch, 
where $\chn_{i-1}\subseteq K_{i-1}\subseteq K'_i$
and $\cyc_{i}\subseteq K_{i}\subseteq K'_i$.

\descitem{Case}{($b=i+1$)}:\label{itm:iwd-switch-bi+1}
The corresponding interval in $\Pers_{*}(\Fcal')$
is $[i,d]$ with the following representative:
\[\cyc_{i}\rseqlarr{0}
\cyc_{i+1}\rseqlrarr{\chn_{i+1}}\cdots
\rseqlrarr{\chn_{d-1}}\cyc_{d}\dashrightarrow\chn_{d},
\mbox{where $\cyc_{i}:=\cyc_{i+1}$}.\]

\descitem{Case}{($d=i-1$)}:\label{itm:iwd-switch-di-1}
This case is symmetric to Case~\ref{itm:iwd-switch-bi+1}
and the details are omitted.

\descitem{Case}{($b>i+1$ or $d<i-1$)}:\label{itm:iwd-switch-rest}
The corresponding interval in $\Pers_{*}(\Fcal')$ is $[b,d]$
and the representative
stays the same.

\end{description}

\paragraph{Time complexity.}
Traversing the intervals in $\Pers_*(\Fcal)$
takes $O(\filtcnt)$ time,
and all the cases take no more than $O(\simpcnt)$ time
with Case~\ref{itm:iwd-switch-rest} taking constant time.
Since Cases~\ref{itm:iwd-switch-bi-di}$-$\ref{itm:iwd-switch-di-1} 
execute for only a fixed number of times, 
the time complexity of inward switch operation is $O(\filtcnt)$.

\subsection{Inward expansion}
\label{sec:in-expan}

Recall that an inward expansion is the following operation:

\vspace{-0.5em}
\begin{center}
\begin{tikzpicture}
\node (a) at (0,0) {$\Fcal: K_0 \leftrightarrow
\cdots\leftrightarrow K_{i-2}
\leftrightarrow 
K_i 
\leftrightarrow K_{i+2}\leftrightarrow
\cdots \leftrightarrow K_\filtcnt$}; 
\node (b) at (0,-0.6){$\Fcal': K_0 \leftrightarrow
\cdots\leftrightarrow K_{i-2}
\leftrightarrow 
K'_{i-1}
\inctosp{\sG} 
K'_i 
\bakinctosp{\sG} 
K'_{i+1}
\leftrightarrow K_{i+2}\leftrightarrow
\cdots \leftrightarrow K_\filtcnt$};
\draw[->,arrows={-latex},dashed] (a.0) .. controls (+6.5,+0) and (+7,-0.35) .. (b.0);
\end{tikzpicture}
\end{center}

\vspace{-1em}\noindent
where $K'_{i-1}=K_{i}=K'_{i+1}$.
We also assume that $\sG$ is a $\Dim$-simplex.
Note that 
indices for $\Fcal$ are nonconsecutive
in which $i-1$ and $i+1$ are skipped.

For the update,
we first determine whether the induced map 
$\Hm_*(K'_{i-1})\rightarrow \Hm_*(K'_i)$
is injective
or 
surjective
by checking whether 
$\partial(\sG)$ is a $(\Dim-1)$-boundary in $K'_{i-1}$
(injective)
or not
(surjective).
The checking can be done by 
performing a reduction of $\partial(\sG)$
on a $(\Dim-1)$-boundary basis for $K'_{i-1}$
(which can be computed by a persistence algorithm~\cite{cohen2006vines}).

\subsubsection{$\Hm_*(K'_{i-1})\rightarrow \Hm_*(K'_i)$ is injective}
\label{sec:in-expan-inj}
The only difference of $\Pers_*(\Fcal)$ and $\Pers_*(\Fcal')$
in this case 
is that 
there is a new interval
$[i,i]$ in $\Pers_*(\Fcal')$.
The representative $\Dim$-cycle 
at index $i$ for $[i,i]\in\Pers_*(\Fcal')$
can be any $\Dim$-cycle in $K'_i$ containing $\sG$,
which can be computed from the reduction done previously
on $\partial(\sG)$ and the $(\Dim-1)$-boundary basis for $K'_{i-1}$.
Also, any interval $[b,d]\in\Pers_*(\Fcal)$
is an interval in $\Pers_*(\Fcal')$;
the update of representative for $[b,d]\in\Pers_*(\Fcal')$
is as in Section~\ref{sec:out-expan-surj}.

\subsubsection{$\Hm_*(K'_{i-1})\rightarrow \Hm_*(K'_i)$ is surjective}

In this case, 
$\pinds(\Fcal')=\pinds(\Fcal)\union\Set{i+1}$ and
$\ninds(\Fcal')=\ninds(\Fcal)\union\Set{i-1}$.
Let $\Set{I_j\given j\in\Bcal}$ be the set of intervals in $\Pers_{\Dim-1}(\Fcal)$
containing $i$, where $\Bcal$ is an indexing set.
Also, let $\tilde{\cyc}^j_{i}$ be the representative $(\Dim-1)$-cycle
at index $i$ for $I_j$.
We have that the homology classes 
$\Set{[\tilde{\cyc}^j_{i}]\given {j\in\Bcal}}$
form a basis for $\Hm_{\Dim-1}(K_{i})=\Hm_{\Dim-1}(K'_{i-1})$.
Denote the map $\Hm_*(K'_{i-1})\rightarrow \Hm_*(K'_i)$ 
as $\rho$;
then, there exists a non-empty set $\LG\subseteq\Bcal$ 
s.t.\ $\sum_{j\in\LG}[\tilde{\cyc}^j_{i}]\in\ker(\rho)$.
The set $\LG$ can be computed by forming a
$(\Dim-1)$-cycle basis for $K'_{i-1}$ by combining
$\Set{\tilde{\cyc}^j_{i}\given {j\in\Bcal}}$
with the $(\Dim-1)$-boundary basis for $K'_{i-1}$,
and then performing a Gaussian elimination and reduction.

We then rewrite the intervals in $\Set{I_j\given j\in\LG}$ as 
\[[b_1,d_1],[b_2,d_2],\ldots,[b_\ell,d_\ell]\text{ s.t.\ }b_1\bles b_2\bles\cdots\bles b_\ell.\]

For each $j$ s.t.\ $1\leq j\leq\ell$, 
let $\rseq_j$ denote the representative sequence for $[b_j,d_j]\in\Pers_*(\Fcal)$,
and let $\cyc^j_i$ denote the $(\Dim-1)$-cycle at index $i$ in $\rseq_j$.
We then pair the birth indices $i+1,b_1,\ldots,b_\ell$
with the death indices $i-1,d_1,\ldots,d_\ell$ to form intervals
for $\Pers_*(\Fcal')$.
We first pair $b_\ell$ with $i-1$ to form an interval $[b_\ell,i-1]\in\Pers_*(\Fcal')$,
whose representative is
derived from $\rseq_1\rsprefix{i}\repsum\cdots\repsum\rseq_\ell\rsprefix{i}$.
The representative for $[b_\ell,i-1]\in\Pers_*(\Fcal')$ is
valid because: (i) $\sum_{j=1}^\ell[{\cyc}^j_{i}]\in\ker(\rho)$;
(ii) $b_\ell=\max_{\bles}\Set{b_1,\ldots,b_\ell}$.
Symmetrically,
we pair $i+1$ with $d_*=\max_{\dles}\Set{d_1,\ldots,d_\ell}$ 
to form an interval $[i+1,d_*]\in\Pers_*(\Fcal')$,
whose representative is
derived from $\rseq_1\rssuffix{i}\repsum\cdots\repsum\rseq_\ell\rssuffix{i}$.

Then,
we pair the 
remaining indices.
Specifically,
for $r:=1,\ldots,\ell-1$,
pair $b_r$ with a death index as follows:
\begin{itemize}
    \item If $d_r$ is unpaired, then pair $b_r$ with $d_r$.
    The representative for $[b_r,d_r]\in\Pers_*(\Fcal')$
    can be updated from the representative for $[b_r,d_r]\in\Pers_*(\Fcal)$
    as described in Section~\ref{sec:in-expan-inj}.
    
    \item If $d_r$ is paired, then $d_1,\ldots,d_r$
    must be all the paired death indices among $d_1,\ldots,d_\ell$ so far.
    Since $d_{r+1},\ldots,d_\ell$ are all unpaired, 
    we pair $b_r$ with $\dG=\Max_{\dles}\Set{d_{r+1},\ldots,d_\ell}$.
    We then describe how we obtain the representative for $[b_r,\dG]\in\Pers_*(\Fcal')$.
    For each $j$ s.t.\ $1\leq j\leq r$,
    we define the following representative $\tilde{\rseq}_j$ for $[b_j,i]$
    in $\Fcal'$:
    first take the representative sequence $\rseq_j\rsprefix{i}$ in $\Fcal$
    and treat it as a representative sequence for $[b_j,i-1]$ in $\Fcal'$;
    then attach a cycle at index $i$ to $\rseq_j\rsprefix{i}$ 
    by copying the cycle at index $i-1$,
    to derive $\tilde{\rseq}_j$
    (note that $\rseq_j\rsprefix{i}$ is treated as a representative in $\Fcal'$
    and hence the last index is $i-1$).
    Symmetrically,
    for each $j$ s.t.\ $r<j\leq\ell$,
    we define the representative $\tilde{\rseq}_j$ for $[i,d_j]$
    in $\Fcal'$,
    which is derived from $\rseq_j\rssuffix{i}$.
    With the above definitions,
    the representative for $[b_r,\dG]\in\Pers_*(\Fcal')$ is the following:
    \begin{equation*}
    \big(\tilde{\rseq}_1\brepsum
    \cdots\brepsum\tilde{\rseq}_r\big)\rconcat
    \big(\tilde{\rseq}_{r+1}\drepsum
    \cdots\drepsum\tilde{\rseq}_\ell\big).
    \end{equation*}
    The concatenation in the above representative is well-defined because 
    $\sum_{j=1}^\ell[{\cyc}^j_{i}]=0$ in $K'_i$,
    which means that $\sum_{j=1}^r[{\cyc}^j_{i}]=\sum_{j=r-1}^\ell[{\cyc}^j_{i}]$.
\end{itemize}

Finally,
all remaining intervals in $\Pers_*(\Fcal)$
are carried into $\Pers_*(\Fcal')$;
the update of representatives for these intervals
is the same as in Section~\ref{sec:in-expan-inj}.

\subsubsection{Time complexity}
The inward expansion operation takes $O(\filtcnt\simpcnt^2)$ time.
The analysis is similar to the analysis for outward contraction 
in Section~\ref{sec:out-contra-complexity} but is easier.

\subsection{Inward contraction}
\label{sec:in-contra}

Recall that an inward contraction is the following operation:

\vspace{-1em}
\begin{center}
\begin{tikzpicture}
\node (a) at (0,0) {$\Fcal: K_0 \leftrightarrow
\cdots\leftrightarrow K_{i-2}
\leftrightarrow 
K_{i-1}\inctosp{\sG} 
K_i 
\bakinctosp{\sG} K_{i+1}
\leftrightarrow K_{i+2}\leftrightarrow
\cdots \leftrightarrow K_\filtcnt$}; 
\node (b) at (0,-0.6){$\Fcal': K_0 \leftrightarrow
\cdots\leftrightarrow K_{i-2}
\leftrightarrow 
K'_{i}
\leftrightarrow K_{i+2}\leftrightarrow
\cdots \leftrightarrow K_\filtcnt$};
\draw[->,arrows={-latex},dashed] (a.0) .. controls (+7,-0.2) and (+6.5,-0.6) .. (b.0);
\end{tikzpicture}
\end{center}

\vspace{-1em}\noindent
where $K'_{i}=K_{i-1}=K_{i+1}$.
We also assume that $\sG$ is a $\Dim$-simplex.
Note that indices for $\Fcal'$ are not consecutive,
i.e., $i-1$ and $i+1$ are skipped.

For the update,
we first determine whether the induced map 
$\Hm_*(K_{i-1})\rightarrow \Hm_*(K_i)$
is injective
or 
surjective
by checking whether 
$i$ is a birth index in $\Fcal$
(injective)
or $i-1$ is a death index in $\Fcal$
(surjective).

\subsubsection{$\Hm_*(K_{i-1})\rightarrow \Hm_*(K_i)$ is injective}
\label{sec:in-contra-inj}
Since inward contractions are
inverses of inward expansions (see Section~\ref{sec:in-expan}),
the only difference of $\Pers_*(\Fcal)$ and $\Pers_*(\Fcal')$
in this case 
is that 
$[i,i]\in\Pers_*(\Fcal)$ is deleted in $\Pers_*(\Fcal')$.

Let $[b,d]\neq [i,i]$ be an interval in $\Pers_*(\Fcal)$.
If $i\not\in[b,d]$, i.e., $b>i$ or $d<i$, then since $b\neq i+1$
and $d\neq i-1$, we have that $b\geq i+2$ or $d\leq i-2$.
So the representative for $[b,d]\in\Pers_*(\Fcal)$
can be directly used as a representative for $[b,d]\in\Pers_*(\Fcal')$.

If $i\in[b,d]$,  
let $\tilde{\cyc}_i$ be the representative $\Dim$-cycle at index $i$ 
for $[i,i]\in\Pers_*(\Fcal)$,
and let
\[\rseq:\chn_{b-1}\dashleftarrow\cyc_{b}\rseqlrarr{\chn_{b}}\cdots
\cdots
\rseqlrarr{\chn_{d-1}}\cyc_{d}\dashrightarrow\chn_{d}\]
be the representative sequence for $[b,d]\in\Pers_*(\Fcal)$.
Note that $\sG\in\tilde{\cyc}_i\subseteq K_i$.
Since $\cyc_{i-1}+\cyc_{i}=\partial(\chn_{i-1})$
and $\cyc_{i}+\cyc_{i+1}=\partial(\chn_{i})$
for $\chn_{i-1},\chn_{i}\subseteq K_i$,
we have that
$\cyc_{i-1}+\cyc_{i+1}=\partial(\chn_{i-1}+\chn_{i})$
for $\chn_{i-1}+\chn_{i}\subseteq K_i$.
If $\sG\not\in\chn_{i-1}+\chn_{i}$,
then $\chn_{i-1}+\chn_{i}\subseteq K'_i$.
If $\sG\in\chn_{i-1}+\chn_{i}$, 
we say that $\rseq$ is \emph{$\sG$-relevant}.
We have that 
$\partial(\chn_{i-1}+\chn_{i}+\tilde{\cyc}_i)=\partial(\chn_{i-1}+\chn_{i})=\cyc_{i-1}+\cyc_{i+1}$,
where $\chn_{i-1}+\chn_{i}+\tilde{\cyc}_i$ does not contain $\sG$
and hence is in $K'_i$.
So we always have that $\cyc_{i-1}+\cyc_{i+1}=\partial(\bar{\chn})$
for a chain $\bar{\chn}\subseteq K'_i$.
Then, the representative for $[b,d]\in\Pers_*(\Fcal')$
is set as:
\begin{equation*}
\chn_{b-1}\dashleftarrow\cyc_{b}\rseqlrarr{\chn_{b}}\cdots
\rseqlrarr{\chn_{i-3}}\cyc_{i-2}
\rseqlrarr{\chn_{i-2}}\cyc'_{i}
\rseqlrarr{\bar{\chn}+\chn_{i+1}}\cyc_{i+2}
\rseqlrarr{\chn_{i+2}}
\cdots
\rseqlrarr{\chn_{d-1}}\cyc_{d}\dashrightarrow\chn_{d},
\end{equation*}
where $\cyc'_{i}:=\cyc_{i-1}$.

\subsubsection{$\Hm_*(K_{i-1})\rightarrow \Hm_*(K_i)$ is surjective}
\label{sec:in-contra-sur}
In this case, $i-1\in\ninds(\Fcal)$, $i+1\in\pinds(\Fcal)$,
$\ninds(\Fcal')=\ninds(\Fcal)\setminus\Set{i-1}$,
and $\pinds(\Fcal')=\pinds(\Fcal)\setminus\Set{i+1}$.
Let $\Set{[\bG_j,\dG_j]\given j\in\Bcal}$ be the set of 
intervals in $\Pers_{*}(\Fcal)$ containing $i$, where $\Bcal$
is an indexing set,
and let $\tilde{\rseq}_j$ be the representative sequence
for each $[\bG_j,\dG_j]$.
Moreover, define a set $\LG\subseteq\Bcal$ as: 
\[\LG:=\Set{j\in\Bcal\given \tilde{\rseq}_j\text{ is }\sG\text{-relevant}}.\]

We do the following:
\begin{itemize}
    \item Whenever there exist $j,k\in\LG$
s.t.\ $[\bG_j,\dG_j]\iles [\bG_k,\dG_k]$, 
update the representative for $[\bG_k,\dG_k]$ 
as $\tilde{\rseq}_j\repsum{\tilde{\rseq}}_k$,
and delete $k$ from $\LG$.
Note that $\tilde{\rseq}_j\repsum{\tilde{\rseq}}_k$
is \emph{$\sG$-irrelevant}.
\end{itemize}

After the above operations,
we have that no two intervals in 
$\Set{[\bG_j,\dG_j]\given j\in\LG}$ are comparable. 
Let $[b_*,i-1]$ and $[i+1,d_\circ]$ be the $(\Dim-1)$-th intervals in $\Pers_*(\Fcal)$
ending/starting with $i-1$, $i+1$ respectively.
Moreover, let $\rseq_*$ be the representative sequence
for $[b_*,i-1]$, and
let $\rseq_\circ$ be the representative sequence
for $[i+1,d_\circ]$.
We do the following:
\begin{itemize}
    \item Whenever there is a $j\in\LG$
s.t.\ $b_*\bles\bG_j$, 
update the representative for $[\bG_j,\dG_j]$ 
as $\rseq_*\repsum{\tilde{\rseq}}_j$,
and delete $j$ from $\LG$.
Note that $i-1\dles\dG_j$ and $\rseq_*\repsum{\tilde{\rseq}}_j$
is $\sG$-irrelevant.

\item Whenever there is a $j\in\LG$
s.t.\ $d_\circ\dles\dG_j$, 
update the representative for $[\bG_j,\dG_j]$ 
as $\rseq_\circ\repsum{\tilde{\rseq}}_j$,
and delete $j$ from $\LG$.
Note that $i+1\bles\bG_j$ and $\rseq_\circ\repsum{\tilde{\rseq}}_j$
is $\sG$-irrelevant.
\end{itemize}

After the above operations,
we have that $\bG_j\bles b_*$
and $\dG_j\dles d_\circ$ for each $j\in\LG$.
If $\LG=\emptyset$,
then let $[b,d]$ form an interval in 
$\Pers_*(\Fcal')$ with a representative $\rseq_*\rconcat\rseq_\circ$.
If $\LG\neq\emptyset$,
then rewrite the intervals in $\Set{[\bG_j,\dG_j]\given j\in\LG}$ as:
\[[b_1,d_1],[b_2,d_2],\ldots,[b_\ell,d_\ell]\text{ s.t.\ }b_1\bles b_2\bles\cdots\bles b_\ell.\]

Also, for each $j$, let $\rseq_j$ be the $\Dim$-th representative sequence
for $[b_j,d_j]\in\Pers_*(\Fcal)$.

For $j\leftarrow 1,\ldots,\ell-1$, we do the following:
\begin{itemize}
    \item Note that $d_{j+1}\dles d_j$
because otherwise $[b_j,d_j]$ and $[b_{j+1},d_{j+1}]$ would be comparable.
Then, let $[b_{j+1},d_j]$ form an interval in $\Pers_*(\Fcal')$.
The representative is set as follows:
since $\rseq_j\repsum\rseq_{j+1}$ 
is a representative for $[b_{j+1},d_j]$
in $\Fcal$ which is $\sG$-irrelevant,
$\rseq_j\repsum\rseq_{j+1}$ can be `contracted' to become 
a representative for $[b_{j+1},d_j]\in\Pers_*(\Fcal')$
as done in Section~\ref{sec:in-contra-inj}.
\end{itemize}

We then do the following:
\begin{itemize}
    \item Let $[b_*,d_\ell]$ form an interval in $\Pers_*(\Fcal')$
    whose representative is derived from $\rseq_*\repsum\rseq_\ell$
    (which is $\sG$-irrelevant);
    let $[b_1,d_\circ]$ form an interval in $\Pers_*(\Fcal')$
    whose representative is derived from $\rseq_\circ\repsum\rseq_1$
    (which is $\sG$-irrelevant).
\end{itemize}

Finally, for each remaining interval $[b,d]\in\Pers_*(\Fcal)$,
whose representative is $\sG$-irrelevant,
$[b,d]$ forms an interval in $\Pers_*(\Fcal')$,
whose representative is updated as in Section~\ref{sec:in-contra-inj}.

\subsubsection{Time complexity}
By a similar analysis as in Section~\ref{sec:out-contra-complexity},
the total time spent on the injective case is $O(\filtcnt\simpcnt)$.
The bottleneck of the surjective case is the loops,
which take $O(\filtcnt\simpcnt^2)$ time.
Hence, the inward contraction operation takes $O(\filtcnt\simpcnt^2)$ time.

\end{document}